\documentclass[10pt,twocolumn,twoside]{IEEEtran}
\usepackage{cite}
\usepackage{amsmath,amssymb,amsfonts}
\usepackage{algorithmic}
\usepackage{graphicx}
\usepackage{textcomp}
\usepackage{mathtools}
\usepackage{bm}
\usepackage[percent]{overpic}
\usepackage{amsthm}
\usepackage{xcolor}
\usepackage{eso-pic}

\definecolor{IEEEtcnsblue}{RGB}{0,0,0}

\newtheoremstyle{tcnsplain}
{0pt}
{0pt}
{\itshape}
{}
{\sffamily\bfseries\itshape\color{IEEEtcnsblue}\fontsize{8.5}{9.5}\selectfont}
{:}
{0.5em}
{\thmname{#1}~\thmnumber{#2}\thmnote{~(#3)}}

\newtheoremstyle{tcnsdefinition}
{0pt}
{0pt}
{\normalfont}
{}
{\sffamily\bfseries\itshape\color{IEEEtcnsblue}\fontsize{8.5}{9.5}\selectfont}
{:}
{0.5em}
{\thmname{#1}~\thmnumber{#2}\thmnote{~(#3)}}

\theoremstyle{tcnsplain}
\newtheorem{theorem}{Theorem}
\newtheorem{lemma}{Lemma}
\newtheorem{proposition}{Proposition}
\newtheorem{corollary}{Corollary}

\theoremstyle{tcnsdefinition}
\newtheorem{definition}{Definition}
\newtheorem{remark}{Remark}

\renewenvironment{proof}[1][Proof]{%
  \par\noindent{\sffamily\bfseries\itshape\color{IEEEtcnsblue}\fontsize{8.5}{9.5}\selectfont #1:}\hspace{0.5em}\normalfont
}{\hfill$\blacksquare$\par}

\def\BibTeX{{\rm B\kern-.05em{\sc i\kern-.025em b}\kern-.08em
    T\kern-.1667em\lower.7ex\hbox{E}\kern-.125emX}}

\usepackage[colorlinks=true,
            linkcolor=blue,
            citecolor=blue,
            urlcolor=blue]{hyperref}

\hypersetup{
    colorlinks=true,
    linkcolor=blue,
    citecolor=blue,
    urlcolor=blue
}

\begin{document}

\AddToShipoutPictureFG*{%
  \AtPageLowerLeft{%
    \raisebox{8mm}{%
      \makebox[\paperwidth][c]{%
        \parbox{\textwidth}{%
          \centering\small
          This work has been submitted to the IEEE for possible publication.
          Copyright may be transferred without notice, after which this version
          may no longer be accessible.
        }%
      }%
    }%
  }%
}

\title{Analysis and Consensus Control of Emergent Dynamic Polarization in Minimally-Nonlinear Opinion Dynamics}

\author{Rajul Kumar, \IEEEmembership{Graduate Student Member, IEEE}, and Ningshi Yao, \IEEEmembership{Member, IEEE}
\thanks{This work was partially supported by NIH 1R01EB038710-01, ONR N00014-26-1-2153, and NSF CAREER 2539218.}
\thanks{The authors are with the Department of Electrical and Computer Engineering, George Mason University, Fairfax, VA 22030 USA (e-mail: \{rkumar29, nyao4\}@gmu.edu). Corresponding author: Ningshi Yao.}}

\maketitle
\thispagestyle{empty}

\begin{abstract}
Collective opinions in social networks evolve through local interaction rules, yet how such local updates give rise to dynamic polarization—persistent oscillatory disagreement between opposing opinion clusters at the network level—remains unexplained. This paper proposes a Minimally-Nonlinear Opinion Dynamics (or M-NOD) framework that analytically characterizes dynamic polarization as a truly emergent collective behavior arising solely from local, agent-level opinion-update rules without externally imposed mechanisms. By introducing a minimal cubic nonlinearity, we rigorously prove that, as the reactivity rate exceeds a critical threshold, the network’s consensus equilibrium loses stability via a supercritical flip bifurcation. In the post-bifurcation regime, this instability gives rise to a unique, locally asymptotically stable periodic orbit, thereby characterizing symmetric dynamic polarization with balanced bipartite opinion clusters. We further establish the structural robustness of this behavior by proving the existence and local asymptotic stability of asymmetric dynamic polarization under directed graphs with nonuniform influence weights. Finally, to resolve this undesirable cyclic deadlock, we develop local agent-level control strategies. We prove that anchoring the opinion of only a single agent is sufficient to eliminate network-wide oscillatory disagreement and restore asymptotically stable consensus. Numerical simulations substantiate the theoretical analysis, including the emergence of symmetric and asymmetric dynamic polarization, and demonstrate the efficacy of the proposed control interventions.
\end{abstract}

\begin{IEEEkeywords}
Opinion Dynamics, Polarization, Consensus, Networked dynamical systems
\end{IEEEkeywords}


\section{Introduction}
\label{sec:introduction}
\IEEEPARstart{T}{he} analysis of collective behavior in social networks using opinion dynamics has received increasing attention \cite{b1a, b1b, b1c}. In these settings, bounded-rational agents update their opinions based on local interactions, and collective behavior emerges from agent-level opinion updates \cite{b1d, b1e}. Polarization is a fundamental emergent behavior, in which agents' opinions diverge into opposing clusters that asymptotically converge to distinct equilibria \cite{b5, b1f, b1g}. Beyond such static polarized states, an important question is whether disagreement can remain persistently oscillatory rather than converge to a static state. In this research, we characterize this behavior as dynamic polarization: an emergent asymptotic phenomenon in which two opposing opinion clusters exhibit bounded, synchronized oscillations, such that the shift of one cluster induces a corresponding counter-shift in the other. Understanding how dynamic polarization emerges purely from local interaction rules is essential to analyze the stability of collective opinion dynamics. As dynamic polarization represents an undesirable state of persistent oscillatory disagreement and cyclic deadlock, characterizing its emergence and stability is a critical prerequisite for designing control laws that guarantee the network's transition back to consensus. Although prior literature primarily analyzes convergence to consensus, it generally does not characterize the asymptotic behavior that emerges when parameter variations cause the consensus equilibrium to lose stability. Consequently, few studies in opinion dynamics develop a rigorous characterization of how bounded, persistent oscillations can emerge as a polarized network state.

Classical weighted-averaging models constitute one of the most widely studied frameworks in opinion dynamics due to their analytical tractability, yet they typically predict convergence to consensus, static polarization, or fragmentation~\cite{b1}. To model polarization, more recent studies rely on mechanisms such as signed graphs with negative edges~\cite{b2, b3, b4}, asymmetric influence and heterogeneous susceptibility~\cite{b5}, confirmation bias~\cite{b6, b7}, game-theoretic utilities~\cite{b8}, and epidemic-type dynamics~\cite{b9}. These frameworks can only model static asymptotic states and thus cannot characterize dynamic, oscillatory polarization. Furthermore, polarization in these models is not truly emergent, as it does not arise solely from the underlying opinion-update dynamics; rather, it depends on additional mechanisms imposed a priori, such as negative interactions in signed graphs, stubborn agents, confirmation bias, or game-theoretic payoff structures.

While the aforementioned studies are restricted to static polarization, a separate class of recent frameworks has been developed to capture oscillatory opinion states and nonstationary polarization. However, these behaviors are typically sustained by external or structurally imposed mechanisms rather than emerging autonomously from purely local, agent-level opinion updates. For instance, stochastic gossip models can generate opinion fluctuations by introducing fixed stubborn agents~\cite{b10}, but the resulting behavior is stochastic disagreement rather than oscillatory polarization. External processes, such as delayed information and cognitive bias, can model fluctuating opinions without capturing polarization~\cite{b11, b12}. Similarly, self-sustained oscillations have been observed in hypergraph-based opinion models~\cite{b13}, where the entire network evolves in a coherent collective pattern. Memory-based bounded-confidence models can produce oscillatory collective dynamics with polarized groups~\cite{b14}, although these oscillations rely on state-dependent network switching and extreme-agent influence. Models in~\cite{b15, b16} predict oscillatory dynamics within polarized populations; however, these arise from external environmental feedback loops and auxiliary systems that function as engineered constructs. The closely related work in~\cite{b17} studies dynamic polarization, referring to the phenomenon as nonstationary polarization, but the mechanism relies on explicit structural hostility, requiring negative edges that represent competition and partition the network into two opposing camps. Similarly, while nonlinear feedback models have captured fluctuating asymmetric polarization via bifurcations~\cite{b17a}, these transitions remain driven by external inputs, such as shifts in public policy mood. Consequently, to the best of the authors' knowledge, the current literature lacks an analytical framework that explains the emergence of dynamic polarization from purely local, agent-level opinion-update rules. This leaves open the fundamental question of whether dynamic polarization can arise as a truly emergent phenomenon without relying on externally imposed mechanisms, motivating the need for a new analytical opinion dynamics framework capable of explaining this behavior.

Discovering such an opinion dynamics rule capable of generating dynamic polarization as a collective, network-wide emergent behavior is challenging. Such dynamics must maintain bounded opinions, preserve local interaction structure, explain how consensus loses stability, and still allow analytical characterization of nonstationary periodic attractors. Arbitrary nonlinear rules may lead to unboundedness, chaos, or analytical intractability. Consensus control is equally challenging because the network is trapped in a self-sustained periodic attractor, requiring localized agent-level interventions that destabilize the stable cyclic deadlock while guaranteeing convergence to asymptotically stable consensus.

Addressing this research gap, as the main contribution of this paper, we propose a novel minimally-nonlinear opinion dynamics (M-NOD) framework that characterizes dynamic polarization as a truly emergent collective behavior arising from local, agent-level opinion-update rules. Without relying on externally imposed mechanisms such as signed edges, we introduce a cubic nonlinearity as a minimal nonlinear representation in each individual agent's update rule, which in turn is capable of capturing the emergence of dynamic polarization. Within this framework, we rigorously establish that under a mean-field baseline, the M-NOD framework undergoes a flip bifurcation through which the consensus equilibrium loses stability; in a high-reactivity regime, this gives rise to a period-2 orbit where the network exhibits balanced bipartite clustering, characterizing the resulting oscillatory behavior as symmetric dynamic polarization. Additionally, by establishing the structural robustness of symmetric dynamic polarization under small symmetry-breaking perturbations, we prove the existence and stability of asymmetric dynamic polarization, where two opposing opinion groups with unequal cluster sizes remain in a persistent cyclic deadlock. 

Finally, leveraging the M-NOD framework's ability to characterize dynamic polarization provides a principled basis for the transition to consensus. To prevent the network from remaining trapped in dynamic polarization, an undesirable state of sustained disagreement and cyclic deadlock, we develop consensus control strategies that progressively reduce the required intervention: first controlling one polarized cluster, then one agent from each polarized cluster, and finally controlling only a single agent in the entire network. The consensus control results show that localized opinion control can break dynamic polarization, causing the remaining agents to align with the reference over time and driving the entire network to a locally asymptotically stable consensus.

\vspace{1mm}
\textbf{Notations:} $\mathbb{R}$, $\mathbb{R}_{\ge0}$ and $\mathbb{N}$ denote the sets of real, nonnegative real and natural numbers, respectively. Bold lowercase (e.g., $\mathbf{v}$) and uppercase letters (e.g., $\mathbf{C}$) denote vectors and matrices, respectively. $\mathbf{I}$ and $\mathbf{J}$ represent the identity matrix and all-ones matrix of appropriate dimensions. $\mathbf{1}_N$ denotes the $N$-dimensional all-ones column vector. For $\mathbf{v}\!\in\!\mathbb{R}^N$, $\|\mathbf{v}\|$ and $\|\mathbf{v}\|_\infty$ denote the Euclidean ($\ell_2$) and infinity norms, respectively. The notation $(\mathbf{v})^{\circ 3}$ denotes the element-wise (Hadamard) cube, i.e., $[(\mathbf{v})^{\circ 3}]_i\!=\!v_i^3$. Equilibria are denoted by an asterisk (e.g., $y^*$). For a differentiable map $\mathbf{G}$, $D\mathbf{G}(\cdot)$ denotes its Jacobian.


\section{Problem Setup: Minimally-Nonlinear Opinion Dynamics (M-NOD)}
Consider a network of $N\!\in\! 2\mathbb{N}$ interacting agents over a weighted graph $\mathcal{G}\!=\!(\mathcal{V},\mathcal{E},\mathbf{W})$, where $\mathcal{V}\!=\!\{1,\dots,N\}$ and $\mathcal{E}\subseteq\mathcal{V}\times\mathcal{V}$ denote the agent and edge sets, respectively. The matrix $\mathbf{W}=[w_{ij}]\in\mathbb{R}_{\ge0}^{N\times N}$ denotes the weighted adjacency matrix, where $w_{ij}>0$ indicates that agent $j$ influences agent $i$ (i.e., $(j,i) \in \mathcal{E}$), and $w_{ij}\!=\!0$ otherwise. This formulation accommodates directed communication topologies, reducing to an undirected graph if and only if $\mathbf{W}$ is symmetric. Each agent $i\! \in\! \mathcal{V}$ holds an opinion $v_i(t)\!\in\!\mathbb{R}$, and the collective network opinion state is $\mathbf{v}_t\!=\![v_1(t),\dots,v_N(t)]\!^\top\!\in\!\mathbb{R}^N$. The minimally-nonlinear opinion dynamics (M-NOD) evolve as,
\begin{equation}\label{eq:1}
    \mathbf{v}_{t+1} = [(1-\gamma)\mathbf{I} + \gamma \mathbf{W}]\mathbf{v}_{t} + \gamma a\,(\mathbf{C}\mathbf{v}_{t})^{\circ 3},
\end{equation}
where $\gamma, a\!\in\! \mathbb{R}_{\ge0}$ denote the reactivity rate and opinion nonlinearity strength, respectively. The matrix $\mathbf{C}\!:=\!\mathbf{I}-\mathbf{W}$ acts as the interaction-induced deviation operator, where $(\mathbf{C}\mathbf{v}_t)_i\!=\!v_i(t)-\sum_{j=1}^N w_{ij}v_j(t)$ is the deviation of agent $i$'s opinion from the weighted aggregate opinion of its neighbors $\mathcal{N}_i\!:=\!\{j\in\mathcal{V}\!:\!(j,i)\in\mathcal{E}\}$. For row-stochastic $\mathbf{W}$, $\mathbf{C}$ is the normalized graph Laplacian.

M-NOD~\eqref{eq:1} models opinion evolution in a network of bounded-rational agents through linear social influence and a nonlinear deviation-driven correction. The term $((1\!-\!\gamma)\mathbf{I}\!+\!\gamma\mathbf{W})\mathbf{v}_t$, equivalent to $\mathbf{v}_t \!-\! \gamma\mathbf{C}\mathbf{v}_t$, represents the linear social-influence update. Here, each agent adjusts its opinion proportionally to the deviation between its current opinion and the weighted aggregate opinion of its neighbors, with $\gamma$ controlling the reactivity to this assimilation. The component $\gamma a(\mathbf{C}\mathbf{v}_t)^{\circ 3}$ introduces a minimally nonlinear cubic correction driven by the local interaction-induced deviation $\mathbf{C}\mathbf{v}_t$. It produces a saturation effect that moderates the linear social influence under large disagreements, thereby introducing a nonlinear damping that prevents excessively large opinion adjustments that could otherwise induce divergent behavior. Larger deviations from the weighted neighbor aggregate produce stronger nonlinear resistance. In the absence of nonlinearity ($a\!=\!0$), M-NOD \eqref{eq:1} reduces to the linear Rescorla--Wagner (RW) multi-agent associative learning, which follows directly from extending the single-agent RW rule in \cite{b18}. Appendix~\ref{app1} derives \eqref{eq:1} from the corresponding single-agent opinion update.

The selection of the cubic nonlinearity is mathematically motivated as the simplest minimal nonlinear representation capable of capturing the emergence of dynamic polarization in opinion dynamics. While purely linear error-correction models unphysically predict unbounded opinion divergence under high reactivity, any realistic update rule must incorporate a saturating mechanism to bound opinions. Rather than relying on complex, domain-specific transcendental functions (e.g., sigmoids or hyperbolic tangents), the cubic term serves as the canonical normal-form representation of a broad class of smooth, odd-symmetric saturating nonlinearities near the bifurcation point. By acting as the leading-order Taylor expansion of these bounded response functions, the cubic term isolates the minimal mathematical mechanism necessary to trigger and sustain dynamic polarization, thereby preserving both analytical tractability and theoretical generality.

To characterize the asymptotic behavior of the network under M-NOD, we formally define the emergent states as:

\vspace{1mm}

\begin{definition}\label{def:2.3}
The network is said to achieve consensus if the collective opinion state $\mathbf{v}_t$ converges asymptotically to a uniform steady state, i.e., there exists $v_c\in\mathbb{R}$ such that $\lim_{t\to\infty}\mathbf{v}_t=v_c\mathbf{1}_N$. In the uncontrolled average-preserving case, $v_c=\bar v_0$, where $\bar v_0=\frac{1}{N}\mathbf{1}_N^\top\mathbf{v}_0$. Equivalently, consensus implies that the interaction-induced deviation vanishes asymptotically, such that $\lim_{t\to\infty}\mathbf{C}\mathbf{v}_t=\mathbf{0}$.
\end{definition}

\vspace{1mm}

\begin{definition}\label{def:2.1}
The network is said to exhibit symmetric dynamic polarization if its opinion trajectory $\mathbf{v}_t$ satisfies:\\
1. Converges to a stable period-2 orbit $\mathcal{O}\coloneqq\{\bar{\mathbf{v}},-\bar{\mathbf{v}}\}$, with $\mathbf{v}_t=\bar{\mathbf{v}}$ and $\mathbf{v}_{t+1}=-\bar{\mathbf{v}}$, returning to $\mathbf{v}_{t+2}=\bar{\mathbf{v}}$, thus producing the asymptotic oscillation $\mathbf{v}_{t+2}=\mathbf{v}_t\neq\mathbf{v}_{t+1}$.\\
2. The period-2 attractor lies entirely in the zero-mean disagreement subspace, i.e., $\mathbf{1}_N^\top\bar{\mathbf{v}}=0$, with $\bar{\mathbf{v}}\neq\mathbf{0}$.\\
The network remains partitioned into two opposing opinion groups with equal magnitudes and persistent oscillatory disagreement, exhibiting antisymmetry.
\end{definition}

\vspace{1mm}

\begin{definition}\label{def:2.2}
The network is said to exhibit asymmetric dynamic polarization if its opinion trajectory $\mathbf{v}_t$ satisfies:\\
1. Converges to a stable period-2 orbit $\mathcal{O}_\epsilon \coloneqq \{\bar{\mathbf{v}}_\epsilon, \tilde{\mathbf{v}}_\epsilon\}$, with $\mathbf{v}_t = \bar{\mathbf{v}}_\epsilon$ and $\mathbf{v}_{t+1} = \tilde{\mathbf{v}}_\epsilon$, returning to $\mathbf{v}_{t+2} = \bar{\mathbf{v}}_\epsilon$, thereby yielding the asymptotic oscillation $\mathbf{v}_{t+2} = \mathbf{v}_t \neq \mathbf{v}_{t+1}$.\\
2. The period-2 attractor does not lie entirely in the zero-mean disagreement subspace, i.e., at least one orbit state satisfies $\mathbf{1}_N^\top\bar{\mathbf{v}}_\epsilon \neq 0$ or $\mathbf{1}_N^\top\tilde{\mathbf{v}}_\epsilon \neq 0$, with $\bar{\mathbf{v}}_\epsilon,\tilde{\mathbf{v}}_\epsilon \neq \mathbf{0}$.

The network remains partitioned into two opposing opinion groups with unequal opinion amplitudes and/or cluster sizes.
\end{definition}

\vspace{1mm}

\textbf{Problem Statement:} The objective is twofold: first, to analytically characterize the emergence and stability of symmetric and asymmetric dynamic polarization in M-NOD \eqref{eq:1}; and second, to design localized agent-level control laws that guarantee the network's transition from dynamic polarization to asymptotically stable consensus.

Having formalized the problem, Section~\ref{sec3} establishes the exact parametric conditions, specifically the critical reactivity and bifurcation thresholds, under which symmetric dynamic polarization emerges.


\section{Main Results I: Emergence of Symmetric Dynamic Polarization over Undirected Complete Graphs with Uniform Weights}\label{sec3}

This section analyzes M-NOD over an undirected complete graph with uniform influence weights, representing the maximally connected mean-field baseline. This idealized setting provides the simplest analytical setting for demonstrating that dynamic polarization arises as an emergent behavior purely from the minimal nonlinearity, independently of structural graph disconnection, sign-based negative ties, or weight heterogeneity. Specifically, we consider the edge set $\mathcal{E}=\{(j,i)\mid j\neq i\}$ and uniform weights $w_{ij}=\frac{1}{N-1}$ for $j\neq i$, with $w_{ii}=0$. Hence,
$\mathbf{W}=\frac{1}{N-1}(\mathbf{J}-\mathbf{I})$, which is doubly stochastic, i.e., $\mathbf{W}\mathbf{1}_N=\mathbf{1}_N$ and $\mathbf{1}_N^\top\mathbf{W}=\mathbf{1}_N^\top$.

This mean-field baseline serves as the analytical foundation for evaluating the robustness of symmetric dynamic polarization to directed graph topologies and nonuniform influence weights, as explored in Section \ref{sec4}.

Lemma~\ref{lem:3.1} considers a necessary subspace $\mathcal{S}$ that enforces the zero-sum property of the collective network disagreement. This condition ensures that no unaccounted disagreement is introduced by any agent; instead, all disagreement remains internal to the network, with each agent's disagreement arising solely from its interactions with other agents within the network. Crucially, the lemma establishes that any uniform disagreement state, where every agent has the same disagreement magnitude $|(\mathbf{C}\mathbf{v})_i|=p>0$, must lie in this subspace $\mathcal{S}$. Since the total disagreement must be balanced and sum to zero, this uniformity partitions the network into two equal-sized, balanced bipartite clusters.

\vspace{1mm}

\begin{lemma}\label{lem:3.1}
Under M-NOD \eqref{eq:1} over an undirected complete graph with uniform influence weights, any non-consensus uniform disagreement state $\mathbf{v} \notin \text{span}\{\mathbf{1}_N\}$ satisfies $\mathbf{v} \in \mathcal{S}$, where the symmetric subspace $\mathcal{S}$ is defined as:
\begin{equation}\label{eq:symmetric_subspace}
    \mathcal{S} \coloneqq \left\{\,\mathbf{v}\!\in\!\mathbb{R}^N \;\middle|\;
    \mathbf{1}_N^\top\!(\mathbf{C}\mathbf{v})\!=\!0 \;\text{and}\;
    \mathbf{1}_N^\top\!\bigl((\mathbf{C}\mathbf{v})^{\circ 3}\bigr)\!=\!0 \right\}.
\end{equation}
Consequently, such a state exhibits balanced bipartite clustering, partitioning the network into two equal-sized opposing opinion clusters.
\end{lemma}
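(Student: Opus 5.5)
The plan is to compute $\mathbf{C}$ explicitly for the complete graph and reduce both defining conditions of $\mathcal{S}$ to statements about the deviation vector $\mathbf{d}:=\mathbf{C}\mathbf{v}$. With $\mathbf{W}=\frac{1}{N-1}(\mathbf{J}-\mathbf{I})$ we get
\begin{equation*}
\mathbf{C}=\mathbf{I}-\mathbf{W}=\tfrac{N}{N-1}\bigl(\mathbf{I}-\tfrac{1}{N}\mathbf{J}\bigr),
\end{equation*}
which is a positive multiple of the orthogonal projector onto $\mathbf{1}_N^\perp$. Three facts follow at once. First, $\mathbf{1}_N^\top\mathbf{C}=\mathbf{0}^\top$, because $\mathbf{W}$ is doubly stochastic. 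Hence $\mathbf{1}_N^\top(\mathbf{C}\mathbf{v})=0$ for every $\mathbf{v}\in\mathbb{R}^N$, and the first condition of $\mathcal{S}$ holds automatically. Second, $\ker\mathbf{C}=\mathrm{span}\{\mathbf{1}_N\}$, so $\mathbf{v}\notin\mathrm{span}\{\mathbf{1}_N\}$ if and only if $\mathbf{d}\neq\mathbf{0}$. Third, $d_i=\frac{N}{N-1}(v_i-\bar v)$ with $\bar v=\frac1N\mathbf{1}_N^\top\mathbf{v}$, so each agent's deviation is a fixed positive multiple of its offset from the network mean.

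Next we use the uniform disagreement hypothesis $|d_i|=p>0$ for all $i$. Write $d_i=p\,s_i$ with $s_i\in\{-1,+1\}$, and let $\mathcal{V}_+=\{i: s_i=+1\}$ and $\mathcal{V}_-=\{i:s_i=-1\}$. The zero-sum identity $\mathbf{1}_N^\top\mathbf{d}=0$ gives $p(|\mathcal{V}_+|-|\mathcal{V}_-|)=0$. Since $p>0$, this forces $|\mathcal{V}_+|=|\mathcal{V}_-|=N/2$, which is consistent with the standing assumption $N\in2\mathbb{N}$. Then $d_i^3=p^3 s_i$, so
\begin{equation*}
\mathbf{1}_N^\top(\mathbf{d})^{\circ3}=p^3\bigl(|\mathcal{V}_+|-|\mathcal{V}_-|\bigr)=0,
\end{equation*}
which is the second condition of $\mathcal{S}$. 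Therefore $\mathbf{v}\in\mathcal{S}$.

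For the clustering conclusion, translate back using $v_i=\bar v+\frac{N-1}{N}d_i$. This gives $v_i=\bar v+\frac{N-1}{N}p$ for $i\in\mathcal{V}_+$ and $v_i=\bar v-\frac{N-1}{N}p$ for $i\in\mathcal{V}_-$. The network is therefore split into two clusters of size $N/2$, sitting symmetrically on opposite sides of the mean, which is the claimed balanced bipartite clustering.

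None of these steps is really difficult. The only point needing care is the definitional one: "uniform disagreement" has to be read as $|(\mathbf{C}\mathbf{v})_i|=p>0$ for every $i$, as the preceding paragraph of the paper indicates, and the nonconsensus hypothesis is what guarantees $p>0$ through $\ker\mathbf{C}=\mathrm{span}\{\mathbf{1}_N\}$. The substantive step is the cardinality argument forcing $|\mathcal{V}_+|=|\mathcal{V}_-|$, and it relies entirely on the column-sum-zero property of $\mathbf{C}$ that comes from the double stochasticity of $\mathbf{W}$.
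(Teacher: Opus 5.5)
Your proof is correct and follows essentially the same route as the paper: double stochasticity gives $\mathbf{1}_N^\top\mathbf{C}=\mathbf{0}^\top$, so the first condition holds; then uniformity $|(\mathbf{C}\mathbf{v})_i|=p$ together with the zero sum forces the balanced $\pm p$ split and the vanishing cubic sum. The only difference in order is that the paper first obtains the cubic condition from $(\mathbf{y})^{\circ 3}=p^2\mathbf{y}$ and counts afterwards, whereas you count first. Your explicit projector form $\mathbf{C}=\frac{N}{N-1}\bigl(\mathbf{I}-\frac{1}{N}\mathbf{J}\bigr)$ adds two things the paper leaves implicit: it justifies $p>0$ from $\mathbf{v}\notin\mathrm{span}\{\mathbf{1}_N\}$, and it translates the clusters back to opinion values $\bar v\pm\frac{N-1}{N}p$.
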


\vspace{1mm}

\begin{proof}
Let $\mathbf{y_t}\!\coloneqq\!\mathbf{C}\mathbf{v_t}$ denote the disagreement vector associated with a non-consensus uniform disagreement state $\mathbf{v}\!\notin\!\mathrm{span}\{\mathbf{1}_N\}$. We verify that $\mathbf{y}_t\!\coloneqq\!\mathbf{C}\mathbf{v}_t$ satisfies the orthogonality conditions defining the symmetric subspace~$\mathcal{S}$. Substituting $\mathbf{C}\!=\!\mathbf{I}-\mathbf{W}$ into $\mathbf{y}_t\!=\!\mathbf{C}\mathbf{v}_t$ yields
$\mathbf{1}_N^\top\mathbf{y}_t\!=\!(\mathbf{1}_N^\top\mathbf{I}-\mathbf{1}_N^\top\mathbf{W})\mathbf{v}_t$.
Under the complete undirected graph topology with uniform influence weights, we have
$\mathbf{1}_N^\top\mathbf{y}_t\!=\!(\mathbf{1}_N^\top-\mathbf{1}_N^\top)\mathbf{v}_t\!=\!0$.

Next, the uniformity of the disagreement magnitude implies that for every
component $i$, we have $|y_i|\!=\!p$ for some constant $p>0$, and consequently
$y_i^{2}\!=\!p^{2}$. Thus, the cubic term factors as
$y_i^{3}\!=\!y_i\cdot y_i^{2}\!=\!p^{2}y_i$. In vector form, this yields the identity
$(\mathbf{y})^{\circ 3}\!=\!p^{2}\mathbf{y}$. Substituting into
$\mathbf{1}_N^\top(\mathbf{y})^{\circ 3}$ gives
$\mathbf{1}_N^\top(\mathbf{y})^{\circ 3}\!=\!p^{2}(\mathbf{1}_N^\top\mathbf{y})\!=\!p^{2}(0)\!=\!0$. Hence, $\mathbf{1}_N^\top(\mathbf{y})^{\circ 3}\!=\!0$. Finally, substituting $\mathbf{y}\!=\!\mathbf{C}\mathbf{v}$ into the obtained identities
$\mathbf{1}_N^\top(\mathbf{y})\!=\!0$ and $\mathbf{1}_N^\top(\mathbf{y})^{\circ 3}\!=\!0$ yields $\mathbf{1}_N^\top(\mathbf{C}\mathbf{v})\!=\!0$ and
$\mathbf{1}_N^\top\!((\mathbf{C}\mathbf{v})^{\circ 3})\!=\!0$. Hence, any non-consensus uniform disagreement state $\mathbf{v}$ lies within the symmetric subspace $\mathcal{S}$ in~\eqref{eq:symmetric_subspace}.

Since the components of the disagreement vector $\mathbf{y}$ are restricted to the values $\pm p$ (where $p > 0$ is the uniform magnitude), the mean-zero condition $\mathbf{1}_N^\top \mathbf{y} = 0$ implies that the sum of positive disagreements must exactly offset the sum of negative disagreements. This necessitates that the number of agents holding opinion $+p$ equals the number of agents holding $-p$. Consequently, the network partitions into two opposing clusters of equal cardinality, establishing the existence of balanced bipartite clustering.
\end{proof}

Note that for symmetric dynamic polarization, uniform disagreement is the defining structural property under which the two opposing opinion groups exhibit equal-magnitude disagreement with opposite signs. Furthermore, for uniform balanced disagreement trajectories, the subspace $\mathcal{S}$ is invariant under M-NOD~\eqref{eq:1}.

The opinion state $\mathbf{v}_t$ in~\eqref{eq:1} can be orthogonally decomposed into an average consensus component and a global deviation component. We
define the average consensus as $\bar{v}_t\! \coloneqq\!\tfrac{1}{N}\mathbf{1}_N^\top \mathbf{v}_t$ and the mean-zero deviation as $\mathbf{z}_t$, where $\mathbf{1}_N^\top\mathbf{z}_t\!=\!0$. To project the dynamics onto the consensus subspace, we multiply~\eqref{eq:1} by $\tfrac{1}{N}\mathbf{1}_N^\top$ and use the identity $\mathbf{1}_N^\top\mathbf{W}\!=\!\mathbf{1}_N^\top$. This yields:
\[
\begin{aligned}
\frac{1}{N}\mathbf{1}_N^\top \mathbf{v}_{t+1}
    &= \frac{1}{N}\mathbf{1}_N^\top\!\big[(1-\gamma)\mathbf{I}+\gamma\mathbf{W}\big]\mathbf{v}_t
      + \frac{\gamma a}{N}\mathbf{1}_N^\top (\mathbf{y}_t)^{\circ 3}, \\
\bar{v}_{t+1}
    &= \frac{1}{N}\!\left((1-\gamma)\mathbf{1}_N^\top+\gamma\mathbf{1}_N^\top\right)\mathbf{v}_t
      + \frac{\gamma a}{N}\sum_{i=1}^N (y_{i,t})^3,
\end{aligned}
\]
where the first term simplifies to $\tfrac{1}{N}\mathbf{1}_N^{\top}\mathbf{v}_t=\bar{v}_t$, and the nonlinear term $\tfrac{\gamma a}{N}\sum_{i=1}^N (y_{i,t})^3$ becomes zero whenever
the trajectory lies in the symmetric subspace $\mathcal{S}$, as established in Lemma~\ref{lem:3.1}. Consequently, the network average opinion is conserved as $\bar{v}_{t+1}=\bar{v}_t$. Hence, the opinion state admits the decomposition as
$\mathbf{v}_t=\bar{v}_t\,\mathbf{1}_N+\mathbf{z}_t$.

Applying the interaction-induced deviation operator $\mathbf{C}$ to the decomposed state $\mathbf{v}_t$ gives
$\mathbf{y}_t\!=\!\mathbf{C}\mathbf{v}_t\!=\!\bar{v}_t\,\mathbf{C}\mathbf{1}_N\!+\!\mathbf{C}\mathbf{z}_t$. Since $\mathbf{W}\mathbf{1}_N\!=\!\mathbf{1}_N$,
$\mathbf{C}\mathbf{1}_N\!=\!(\mathbf{I}\!-\!\mathbf{W})\mathbf{1}_N\!=\!\mathbf{1}_N\!-\!\mathbf{1}_N\!=\!\mathbf{0}$, so $\mathbf{y}_t\!=\!\mathbf{C}\mathbf{z}_t$. Substituting $\mathbf{C}\!=\!\mathbf{I}\!-\!\mathbf{W}$ yields $\mathbf{y}_t\!=\!(\mathbf{I}\!-\!\mathbf{W})\mathbf{z}_t$. For the complete graph, $\mathbf{W}\!=\!\tfrac{1}{N-1}(\mathbf{J}\!-\!\mathbf{I})$, and since $\mathbf{1}_N^\top\mathbf{z}_t\!=\!0$ implies $\mathbf{J}\mathbf{z}_t\!=\!\mathbf{0}$, we obtain
$\mathbf{W}\mathbf{z}_t\!=\!\tfrac{1}{N-1}(\mathbf{0}-\mathbf{z}_t)\!=\!-\!\tfrac{1}{N-1}\mathbf{z}_t$.
Hence $\mathbf{y}_t\!=\!(\mathbf{I}\!-\!\mathbf{W})\mathbf{z}_t\!=\!\mathbf{z}_t\!-\!(-\tfrac{1}{N-1}\mathbf{z}_t)\!=\!(1+\tfrac{1}{N-1})\mathbf{z}_t\!=\!\tfrac{N}{N-1}\mathbf{z}_t$.
Defining $c\coloneqq\tfrac{N}{N-1}$, we obtain $\mathbf{y}_t=c\,\mathbf{z}_t$, which shows that the disagreement vector $\mathbf{y}_t$ is a scaled transformation of $\mathbf{z}_t$ and therefore has qualitatively identical dynamics, including the same stability and bifurcation characteristics. In the decomposed state $\mathbf{v}_t\!=\!\bar{v}_t\,\mathbf{1}_N\!+\!\mathbf{z}_t$, as established above, the average consensus opinion is conserved, and thus $\bar{v}_t$ remains constant along symmetric trajectories. Consequently, the asymptotic behavior of the full state $\mathbf{v}_t$ is governed solely by the dynamics of $\mathbf{z}_t$.

\vspace{1mm}

\begin{remark}\label{r:3.1}
Since $\mathbf{y}_t$ has the same stability and bifurcation characteristics as $\mathbf{z}_t$, we restrict the subsequent analysis to the dynamics of the disagreement vector $\mathbf{y}_t$. Characterizing $\mathbf{y}_t$ is sufficient to determine the asymptotic state of the entire network.
\end{remark}

Using the dynamical sufficiency in Remark~\ref{r:3.1}, Lemma~\ref{lem:3.2} establishes the exact scalar reduction of the $N$-dimensional disagreement dynamics.

\vspace{1mm}

\begin{lemma}\label{lem:3.2}
For M-NOD~\eqref{eq:1} over a complete undirected graph
with uniform influence weights, the evolution of the coupled disagreement vector $\mathbf{y}_t$ along any trajectory in the subspace $\mathcal{S}$ defined by \eqref{eq:symmetric_subspace} reduces to $N$ identical scalar maps.
\end{lemma}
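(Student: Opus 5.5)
Apply $\mathbf{C}$ to both sides of \eqref{eq:1} and use the spectral structure of $\mathbf{C}$ on the mean-zero subspace. Since $\mathbf{W}=\frac{1}{N-1}(\mathbf{J}-\mathbf{I})$, the matrices $\mathbf{C}$ and $(1-\gamma)\mathbf{I}+\gamma\mathbf{W}=\mathbf{I}-\gamma\mathbf{C}$ are polynomials in $\mathbf{J}$, so they commute. Multiplying \eqref{eq:1} by $\mathbf{C}$ therefore gives the closed recursion
\begin{equation*}
\mathbf{y}_{t+1}=(\mathbf{I}-\gamma\mathbf{C})\mathbf{y}_t+\gamma a\,\mathbf{C}(\mathbf{y}_t)^{\circ 3}.
\end{equation*}
This recursion is expressed entirely in terms of $\mathbf{y}_t$, with no residual dependence on $\bar v_t$, because $\mathbf{C}\mathbf{1}_N=\mathbf{0}$.

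Next, I use that $\mathbf{C}$ acts as $c\,\mathbf{I}$ with $c=\tfrac{N}{N-1}$ on every mean-zero vector, exactly as computed before Remark~\ref{r:3.1}. Since $\mathbf{1}_N^\top\mathbf{W}=\mathbf{1}_N^\top$, we have $\mathbf{1}_N^\top\mathbf{y}_t=0$, so $\mathbf{C}\mathbf{y}_t=c\,\mathbf{y}_t$. For the cubic term, the trajectory lies in $\mathcal{S}$, so $\mathbf{1}_N^\top(\mathbf{y}_t)^{\circ 3}=0$ by \eqref{eq:symmetric_subspace} and Lemma~\ref{lem:3.1}. Hence $(\mathbf{y}_t)^{\circ 3}$ is also mean-zero, and $\mathbf{C}(\mathbf{y}_t)^{\circ 3}=c\,(\mathbf{y}_t)^{\circ 3}$. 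Substituting both facts gives
\begin{equation*}
\mathbf{y}_{t+1}=(1-\gamma c)\,\mathbf{y}_t+\gamma a c\,(\mathbf{y}_t)^{\circ 3}.
\end{equation*}
This map is diagonal: componentwise it reads $y_{i,t+1}=f(y_{i,t})$ with $f(y)=(1-\gamma c)y+\gamma a c\,y^3$, the same scalar map for every $i$. That is the claimed reduction to $N$ identical scalar maps.

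The main obstacle is justifying that the condition $\mathbf{1}_N^\top(\mathbf{y}_t)^{\circ 3}=0$ holds at every time step, not only initially. In other words, I must show that $\mathcal{S}$ is invariant along the trajectories under consideration. I would handle this by restricting to the uniform-disagreement trajectories of Lemma~\ref{lem:3.1}, where $y_{i,t}=\pm p_t$ with equal numbers of each sign.

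The scalar map $f$ is odd, so it sends $\pm p_t$ to $\pm f(p_t)$. The magnitudes therefore stay uniform, and the equal-count sign pattern is preserved, up to a global sign flip when $f(p_t)<0$. Consequently $\mathbf{y}_{t+1}$ again has uniform magnitude, and Lemma~\ref{lem:3.1} gives $\mathbf{y}_{t+1}\in\mathcal{S}$. The reduction then follows by induction on $t$.
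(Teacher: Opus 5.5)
Your proof is correct and follows the paper's argument. You apply $\mathbf{C}$, commute it past $(1-\gamma)\mathbf{I}+\gamma\mathbf{W}$, and use that $\mathbf{C}$ acts as $\tfrac{N}{N-1}\mathbf{I}$ on both $\mathbf{y}_t$ and $(\mathbf{y}_t)^{\circ 3}$ (the latter via $\mathbf{1}_N^\top(\mathbf{y}_t)^{\circ 3}=0$ from $\mathcal{S}$), which gives \eqref{eq:3}. Your induction showing that uniform-magnitude trajectories stay in $\mathcal{S}$ uses the oddness of $f$ and justifies what the paper only asserts in the note after Lemma~\ref{lem:3.1}. One small point: in the degenerate case $f(p_t)=0$, $\mathbf{y}_{t+1}=\mathbf{0}$ lies in $\mathcal{S}$ trivially, not by Lemma~\ref{lem:3.1}, which requires $p>0$.
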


\vspace{1mm}

\begin{proof}
Applying the operator $\mathbf{C}$ to~\eqref{eq:1} and using
$\mathbf{y}_t=\mathbf{C}\mathbf{v}_t$ yields the dynamics of $\mathbf{y}_t$ as
\begin{equation}\label{eq:2}
\mathbf{y}_{t+1}
    = \mathbf{C}\!\left[(1-\gamma)\mathbf{I}+\gamma\mathbf{W}\right]\mathbf{v}_t
      + \gamma a\,\mathbf{C}\!\left(\mathbf{y}_t\right)^{\circ 3}.
\end{equation}
We first analyze the linear component
$\mathbf{C}\!\left[(1\!-\!\gamma)\mathbf{I}\!+\!\gamma\mathbf{W}\right]\mathbf{v}_t$ of
$\mathbf{y}_{t+1}$. Since $\mathbf{C}\!=\!\mathbf{I}\!-\!\mathbf{W}$ implies
$\mathbf{C}\mathbf{W}\!=\!\mathbf{W}\mathbf{C}$, the operator $\mathbf{C}$ commutes with
$\mathbf{W}$. This allows us to move $\mathbf{C}$ to the far right; substituting
$\mathbf{y}_t\!=\!\mathbf{C}\mathbf{v}_t$, the linear term becomes
$\left[(1\!-\!\gamma)\mathbf{I}\!+\!\gamma\mathbf{W}\right]\mathbf{y}_t$. As established in Lemma~\ref{lem:3.1}, the bipartite clustering and the mean-zero
property $\mathbf{1}_N^\top\mathbf{y}_t\!=\!0$ imply
$\mathbf{J}\mathbf{y}_t\!=\!\mathbf{0}$. Multiplying
$\mathbf{W}\!=\!\tfrac{1}{N-1}(\mathbf{J}\!-\!\mathbf{I})$ by $\mathbf{y}_t$ then gives
$\mathbf{W}\mathbf{y}_t\!=\!\tfrac{1}{N-1}(\mathbf{0}\!-\!\mathbf{y}_t)
\!=\!-\tfrac{1}{N-1}\mathbf{y}_t$. Substituting
$\mathbf{W}\mathbf{y}_t\!=\!-\tfrac{1}{N-1}\mathbf{y}_t$ into
$[(1\!-\!\gamma)\mathbf{I}\!+\!\gamma\mathbf{W}]\mathbf{y}_t$ and rearranging yields
$[1\!-\!\gamma\tfrac{N}{N-1}]\mathbf{y}_t$, equivalently
$\mathbf{y}_t(1-\bar{\gamma})$ with
$\bar{\gamma}\!=\!\gamma\tfrac{N}{N-1}$.

Now we analyze the nonlinear part $\gamma a\,\mathbf{C}(\mathbf{y}_t)^{\circ 3}$. From
Lemma~\ref{lem:3.1}, for any trajectory in the symmetric subspace
$\mathcal{S}$, we have $\mathbf{1}_N^\top(\mathbf{y}_t)^{\circ 3}\!=\!0$, which implies
$\mathbf{J}(\mathbf{y}_t)^{\circ 3}\!=\!\mathbf{0}$. Using $\mathbf{C}\!=\!\mathbf{I}-\mathbf{W}$ together with
$\mathbf{W}\!=\!\tfrac{1}{N-1}(\mathbf{J}\!-\!\mathbf{I})$, we first obtain
$\mathbf{C}(\mathbf{y}_t)^{\circ 3}
\!=\! (\mathbf{y}_t)^{\circ 3}
\!-\! \tfrac{1}{N-1}\!(\mathbf{J}(\mathbf{y}_t)^{\circ 3}\!-\!\mathbf{I}(\mathbf{y}_t)^{\circ 3})$.
Since $\mathbf{J}(\mathbf{y}_t)^{\circ 3}\!=\!\mathbf{0}$, this becomes
$(\mathbf{y}_t)^{\circ 3}\!
-\! \tfrac{1}{N-1}\!(\mathbf{0}\!-\!(\mathbf{y}_t)^{\circ 3})$.
This expression simplifies to
$(1\!+\!\tfrac{1}{N-1})(\mathbf{y}_t)^{\circ 3}$,
and ultimately evaluates to
$\tfrac{N}{N-1}(\mathbf{y}_t)^{\circ 3}$. With
$\bar{\gamma}\!=\!\gamma {N}/({N-1})$, the nonlinear term becomes
$\gamma a (c(\mathbf{y}_t)^{\circ 3})
\!=\! (\gamma c)a(\mathbf{y}_t)^{\circ 3}
\!=\! \bar{\gamma}\,a\,(\mathbf{y}_t)^{\circ 3}$. Combining the reduced linear and nonlinear components and substituting back into \eqref{eq:2}, we obtain
\begin{equation}\label{eq:3}
\mathbf{y}_{t+1}
    \!=\! (1 \!-\! \bar{\gamma})\,\mathbf{y}_t
      \!+\! \bar{\gamma}\,a\,(\mathbf{y}_t)^{\circ 3}.
\end{equation}
Since every agent's disagreement opinion evolves according to the same update rule, the network dynamics reduce to the scalar recursion,
\begin{equation}\label{eq:4}
y_{t+1}\!=\! (1\!-\!\bar{\gamma})\,y_t\!+\! \bar{\gamma}\,a\,y_t^{3}.
\end{equation}
Thus, the high-dimensional disagreement vector $\mathbf{y}_t$ reduces to $N$
identical scalar maps of the form \eqref{eq:4}. 
\end{proof}

Linearizing the reduced map \eqref{eq:4} at the zero-disagreement fixed point $y=0$ establishes the local consensus stability condition presented in Proposition~\ref{lem:3.3}.

\vspace{1mm}

\begin{proposition}\label{lem:3.3}
M-NOD~\eqref{eq:1} converges asymptotically to the consensus state (Definition~\ref{def:2.3}), i.e., $\lim_{t\to\infty}\mathbf{v}_t=\bar{v}_0\mathbf{1}_N$, for the reactivity rate range $0<\gamma<{2(N-1)}/{N}$.
\end{proposition}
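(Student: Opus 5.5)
We prove Proposition~\ref{lem:3.3} by linearizing the reduced disagreement dynamics at consensus and then lifting the conclusion back to $\mathbf{v}_t$ via the decomposition $\mathbf{v}_t=\bar v_t\mathbf{1}_N+\mathbf{z}_t$. Following Remark~\ref{r:3.1} and Lemma~\ref{lem:3.2}, along trajectories in $\mathcal{S}$ the disagreement vector obeys $N$ copies of the scalar map \eqref{eq:4}, $f(y)=(1-\bar\gamma)y+\bar\gamma a y^3$ with $\bar\gamma=\gamma N/(N-1)$. The point $y^*=0$ is a fixed point, and $f'(0)=1-\bar\gamma$. Hence $y^*=0$ is locally asymptotically stable iff $|1-\bar\gamma|<1$, i.e.\ $0<\bar\gamma<2$. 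This is exactly $0<\gamma<2(N-1)/N$. The cubic term is $O(|y|^3)$, so by the standard linearization (Hartman--Grobman / Lyapunov indirect) argument for maps, $|y_t|\to0$ for initial disagreements in a neighborhood of zero. In fact, for $|y|$ small, $|f(y)|\le(|1-\bar\gamma|+\bar\gamma a y^2)|y|$ with the bracket $<1$, which gives an explicit contracting basin.

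To avoid relying only on the restriction to $\mathcal{S}$, we also linearize the full network map \eqref{eq:1} at any consensus point $v_c\mathbf{1}_N$. Since $\mathbf{C}\mathbf{1}_N=\mathbf{0}$, the cubic term has zero Jacobian there, so the Jacobian is $(1-\gamma)\mathbf{I}+\gamma\mathbf{W}$. For $\mathbf{W}=\frac{1}{N-1}(\mathbf{J}-\mathbf{I})$, its eigenvalues are $1$ on $\mathrm{span}\{\mathbf{1}_N\}$ and $1-\bar\gamma$, with multiplicity $N-1$, on $\mathbf{1}_N^\perp$. The unit eigenvalue corresponds to the line of consensus equilibria, which is therefore normally hyperbolic (attracting) under the same condition $|1-\bar\gamma|<1$.

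It remains to identify the limit as $\bar v_0\mathbf{1}_N$. The average evolves as $\bar v_{t+1}=\bar v_t+\frac{\gamma a}{N}\mathbf{1}_N^\top(\mathbf{y}_t)^{\circ3}$. On $\mathcal{S}$ this increment vanishes (Lemma~\ref{lem:3.1}), so $\bar v_t=\bar v_0$. Combined with $\mathbf{z}_t=\mathbf{y}_t/c\to\mathbf{0}$, this gives $\mathbf{v}_t\to\bar v_0\mathbf{1}_N$, matching Definition~\ref{def:2.3}.

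The main obstacle is that the claim is genuinely local; it cannot be global. The unstable fixed points $y=\pm\sqrt{1/a}$ of \eqref{eq:4} bound the basin, since $f(y)=y$ with $y\neq0$ gives $y^2=1/a$. I would therefore state the result for initial disagreements with $|y_0|<1/\sqrt a$ within $\mathcal{S}$, or more generally in a neighborhood of the consensus manifold. In that range, $f(y)/y=1-\bar\gamma(1-ay^2)$ lies in $(-1,1)$ as long as $\bar\gamma(1-ay^2)<2$, which holds because $\bar\gamma<2$. This yields monotone decrease of $|y_t|$, and the decrease is geometric since $|y_t|$ stays bounded away from $1/\sqrt a$.

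Off $\mathcal{S}$ there is a second difficulty: the mean drifts by $O(\|\mathbf{y}_t\|^3)$, which is summable under the geometric decay. The limit is then a consensus value close to, but not exactly equal to, $\bar v_0$. I would either restrict to $\mathcal{S}$ as the paper does, or remark that the limiting value equals $\bar v_0$ only up to this higher-order drift.
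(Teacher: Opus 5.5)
Your core step is the paper's proof. You restrict to $\mathcal{S}$, use the scalar map \eqref{eq:4}, require $|f'(0)|=|1-\bar\gamma|<1$, convert this to $0<\gamma<2(N-1)/N$, and lift back through $\mathbf{v}_t=\bar v_t\mathbf{1}_N+\mathbf{z}_t$ with $\bar v_t$ conserved on $\mathcal{S}$. What you add is correct and goes beyond the paper. The paper never leaves $\mathcal{S}$ and never quantifies the basin, yet its statement asserts $\lim_{t\to\infty}\mathbf{v}_t=\bar v_0\mathbf{1}_N$ with no restriction on initial conditions. Your Jacobian $(1-\gamma)\mathbf{I}+\gamma\mathbf{W}$ at the consensus line has eigenvalue $1$ along $\mathbf{1}_N$ and $1-\bar\gamma$ on $\mathbf{1}_N^\perp$, so it gives attraction from a full neighborhood in $\mathbb{R}^N$. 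The interval $|y_0|<1/\sqrt{a}$ is exactly the basin of the scalar map when $a>0$; for $a=0$ the map is linear and convergence is global. Your drift observation is a genuine correction to the statement. Off $\mathcal{S}$ the limit is $v_\infty\mathbf{1}_N$ with $v_\infty=\bar v_0+\frac{\gamma a}{N}\sum_{t\ge 0}\mathbf{1}_N^\top(\mathbf{y}_t)^{\circ 3}$, which generically differs from $\bar v_0$; take $N=4$ and $\mathbf{y}_0$ a small multiple of $(3,-1,-1,-1)^\top$. In short, the paper gets the exact limit value $\bar v_0$ only by working inside $\mathcal{S}$, where the average is preserved. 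Your route gives a neighborhood statement with the limit value correctly qualified.

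Two refinements. First, you do not need normal hyperbolicity. For the complete graph, $\mathbf{C}\mathbf{x}=\frac{N}{N-1}(\mathbf{I}-\tfrac{1}{N}\mathbf{J})\mathbf{x}$, so the disagreement dynamics are autonomous on $\mathbf{1}_N^\perp$: $\mathbf{y}_{t+1}=(1-\bar\gamma)\mathbf{y}_t+\bar\gamma a(\mathbf{I}-\tfrac{1}{N}\mathbf{J})(\mathbf{y}_t)^{\circ 3}$. Hence $\|\mathbf{y}_{t+1}\|\le(|1-\bar\gamma|+\bar\gamma a\|\mathbf{y}_t\|^2)\|\mathbf{y}_t\|$, which yields geometric decay and summable mean increments directly. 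Second, only the sufficiency half of your ``locally asymptotically stable iff $|1-\bar\gamma|<1$'' is true. At $\bar\gamma=2$ one has $|f(y)|=|y|\,|1-2ay^2|<|y|$ for $0<|y|<1/\sqrt{a}$, so the origin is still (non-exponentially) asymptotically stable there, as your own basin computation shows. This does not affect the open range in the proposition.
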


\vspace{1mm}

\begin{proof}
By Remark~\ref{r:3.1}, convergence of the collective opinion state $\mathbf{v}_t$ to
consensus is equivalent to asymptotic stability of the disagreement vector
$\mathbf{y}_t$ at the origin ($\mathbf{y}_t\!=\!\mathbf{0}$). Consequently, we analyze
the local stability of the reduced scalar recursion $y_t$ of $\mathbf{y}_t$, given in~\eqref{eq:4} from Lemma~\ref{lem:3.2}. The Jacobian of the scalar map $f(y)\!=\!(1\!-\!\bar{\gamma})\,y\!+\!\bar{\gamma}a\,y^{3}$ from \eqref{eq:4} is obtained by taking the first derivative, $J(y)\!=\!f^{\prime}(y)\!=\!(1\!-\!\bar{\gamma})\!+\!3\bar{\gamma}a\,y^{2}$. At the consensus equilibrium $y^{\ast}=0$, this yields
$J(0)\!=\!(1\!-\!\bar{\gamma})\!+\!3\bar{\gamma}a(0)^{2}\!=\!1\!-\!\bar{\gamma}$. For local asymptotic
stability, we require $|J(0)|\!<\!1$ (Theorem 1.13 in \cite{b20}), i.e.\ $|1\!-\!\bar{\gamma}|\!<\!1$, which implies
$0\!<\!\bar{\gamma}\!<\!2$. 

Substituting $\bar{\gamma}\!=\!\gamma\tfrac{N}{N-1}$ into
$0\!<\!\bar{\gamma}\!<\!2$ gives $0<\gamma\tfrac{N}{N-1}\!<\!2$. Multiplying the inequality by
$\tfrac{N-1}{N}\!>\!0$ (since $N\!>\!0$) produces
$0\!<\!\gamma\!<\!\tfrac{2(N-1)}{N}$. This establishes the admissible reactivity-rate range for which M-NOD~\eqref{eq:1} converges asymptotically to consensus (Definition~\ref{def:2.3}). Since
$y_t\!$ goes to $0$ implies $\|\mathbf{y}_t\|\!$ approaches $0$, it follows from the decomposed state
$\mathbf{v}_t\!=\!\bar{v}_t\mathbf{1}_N\!+\!\mathbf{z}_t$ that
$\|\mathbf{v}_t\!-\!\bar{v}_0\mathbf{1}_N\|\!$ goes to $0$.
\end{proof}

Exceeding the consensus stability bound in Proposition~\ref{lem:3.3} marks the onset of instability. Theorem~\ref{thm:3.1} establishes that crossing this threshold induces a supercritical flip bifurcation where the consensus equilibrium fractures.

\vspace{1mm}

\begin{theorem}\label{thm:3.1}
At the critical reactivity rate $\gamma_c={2(N-1)}/{N}$, the consensus
equilibrium of M-NOD \eqref{eq:1} undergoes a supercritical flip
bifurcation. Consequently, for all $\gamma>\gamma_c$, this equilibrium loses stability, forcing the collective opinion state to depart from the invariant consensus average $\bar{v}_0$.
\end{theorem}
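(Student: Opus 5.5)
Following Remark~\ref{r:3.1} and Lemma~\ref{lem:3.2}, the plan is to work entirely with the reduced scalar map $f_{\bar\gamma}(y)=(1-\bar\gamma)y+\bar\gamma a y^3$ in \eqref{eq:4}, with $\bar\gamma=\gamma N/(N-1)$ as bifurcation parameter. The critical value $\gamma_c=2(N-1)/N$ corresponds exactly to $\bar\gamma_c=2$. I will verify the standard flip-bifurcation conditions for a one-dimensional map (e.g., the flip bifurcation theorem in Kuznetsov or the reference \cite{b20} already used in Proposition~\ref{lem:3.3}) at $(y^*,\bar\gamma)=(0,2)$. The result will then be lifted back to the network through $\mathbf{y}_t=c\,\mathbf{z}_t$ and the decomposition $\mathbf{v}_t=\bar v_t\mathbf{1}_N+\mathbf{z}_t$.

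\textbf{Step 1 (eigenvalue crossing).} I will use $f_{\bar\gamma}'(0)=1-\bar\gamma$, computed in Proposition~\ref{lem:3.3}. This equals $-1$ at $\bar\gamma=2$. Its derivative with respect to the parameter is $\partial_{\bar\gamma}f'(0)=-1\neq0$, which gives the transversality condition. Since $\gamma_c>0$ and $\bar\gamma$ depends linearly and increasingly on $\gamma$, transversality holds in $\gamma$ as well.

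\textbf{Step 2 (nondegeneracy and direction).} The map is odd, so $f''(0)=0$. The first Lyapunov (flip) coefficient therefore reduces to
$c_1=\tfrac14\big(f''(0)\big)^2+\tfrac16 f'''(0)$, up to the sign convention used. Here $f'''(0)=6\bar\gamma a=12a$ at criticality. To fix the supercritical direction unambiguously, I will confirm it directly by computing the period-2 points. Oddness suggests looking for orbits of the form $f(y)=-y$, which gives $(2-\bar\gamma)+\bar\gamma a y^2=0$, i.e.\ $y^2=(\bar\gamma-2)/(\bar\gamma a)$. This solution is real precisely for $\bar\gamma>2$, so the 2-cycle $\{\bar y,-\bar y\}$ is born on the unstable side of the equilibrium.

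I will then check the stability of this cycle via the multiplier
$\mu=f'(\bar y)f'(-\bar y)=\big(1-\bar\gamma+3\bar\gamma a\bar y^2\big)^2=(5-2\bar\gamma)^2$. This satisfies $\mu<1$ for $2<\bar\gamma<3$, so the cycle is stable just past the threshold. A stable cycle emerging on the side where the fixed point is unstable is exactly the supercritical case.

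\textbf{Main obstacle: sign bookkeeping.} The step I expect to be most delicate is reconciling the sign of the cubic term with the stated supercriticality. With $a>0$, the cubic term is expansive, which would ordinarily suggest a subcritical picture. I will therefore rely on the explicit 2-cycle computation above rather than on a memorized sign convention for $c_1$. I will state the supercriticality through the existence and stability of the bifurcating cycle for $\bar\gamma\in(2,3)$, which is what the theorem needs. If the Lyapunov-coefficient sign turns out to conflict with this, I will present the direct computation as the operative argument.

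\textbf{Step 3 (loss of stability in the full network).} For $\gamma>\gamma_c$, $|f'(0)|=\bar\gamma-1>1$, so $y=0$ is unstable. In $\mathcal S$, small nonzero disagreements are amplified. Because $\mathbf{y}_t=c\,\mathbf{z}_t$ with $c\neq0$, $\mathbf{z}_t$ does not converge to $\mathbf 0$. Hence $\mathbf{v}_t$ does not converge to $\bar v_0\mathbf{1}_N$, even though $\bar v_t\equiv\bar v_0$ is conserved. This is the claimed departure from the consensus average.
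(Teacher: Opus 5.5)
Your proof is correct and follows the paper's skeleton. You reduce to the odd cubic map \eqref{eq:4}, locate $\bar{\gamma}_c=2$, and check $f'(0)=-1$, transversality $\partial_{\bar{\gamma}}f'(0)=-1\neq0$, $f''(0)=0$ and $f'''(0)=12a\neq0$.

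The difference is how you certify the direction of the bifurcation. The paper reads it off the sign of the normal-form coefficient $C=\tfrac16 f'''(0)+\bigl(\tfrac12 f''(0)\bigr)^2=2a>0$. You instead solve $f(y)=-y$ explicitly and compute the cycle multiplier $(2\bar{\gamma}-5)^2$, which folds into this proof the computation the paper postpones to Theorem~\ref{thm:3.2}.

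Your route is convention-free and yields more: the amplitude and the whole stability window $2<\bar{\gamma}<3$. It does rely on the odd cubic structure making $f(y)=-y$ solvable in closed form. It also needs the local uniqueness supplied by the flip theorem (ensured by $c_1\neq0$) to identify $\{\pm\bar y\}$ as the bifurcating cycle. The paper's route needs only derivatives at the critical point.

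Your sign worry is unfounded, so no fallback is needed. For a flip the relevant object is the second iterate. Writing $f(y)=-y+c_1y^3$ gives $f(f(y))=y-2c_1y^3+\mathcal{O}(y^5)$, which here is $y-4ay^3+\mathcal{O}(y^5)$. A positive cubic coefficient is therefore contracting, and $c_1=2a>0$ is exactly the supercritical sign. The ``positive cubic destabilizes'' intuition comes from continuous-time pitchforks and does not transfer.

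One caveat is inherited from the paper rather than introduced by you. Both arguments apply the one-dimensional flip theorem only on the invariant balanced line. In the full system, the Jacobian of \eqref{eq:1} at a consensus point is $(1-\gamma)\mathbf{I}+\gamma\mathbf{W}$. It has the critical eigenvalue $1-\bar{\gamma}$ with multiplicity $N-1$, plus a neutral eigenvalue $1$ along $\mathbf{1}_N$. This gives your Step~3 loss of stability in the full network directly, but the supercritical flip itself is certified only within the reduced dynamics.
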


\vspace{1mm}

\begin{proof}
By Remark~\ref{r:3.1}, the qualitative dynamics of the collective opinion state are fully determined by the disagreement vector $\mathbf{y}_t$. Moreover,
Lemma~\ref{lem:3.2} shows that the evolution of $\mathbf{y}_t$ within the
symmetric subspace is governed exactly by the scalar map $y_{t+1}\!=\!f(y_t)$ from
\eqref{eq:4}. Consequently, analyzing this representative scalar recursion is
sufficient to prove that the network undergoes a supercritical flip
bifurcation at the critical scaled reactivity rate $\bar{\gamma}_c\!=\!2$ derived next. For the critical reactivity rate threshold $\gamma_c\!=\!\tfrac{2(N-1)}{N}$, we substitute into the scaled reactivity rate $\bar{\gamma}\coloneqq\gamma\!\tfrac{N}{N-1}$ from~\eqref{eq:4}. This gives $\bar{\gamma}_c\!=\!\gamma_c\left(\tfrac{N}{N-1}\right)\!=\!\left(\tfrac{2(N-1)}{N}\right)\tfrac{N}{N-1}\!=\!2$.

For the scalar map $f(y;\bar{\gamma})\!=\!(1\!-\!\bar{\gamma})y\!+\!\bar{\gamma}ay^{3}$, we
show that a \emph{flip (period-doubling) bifurcation} occurs at the fixed point
$y^{\ast}\!=\!0$ when the scaled reactivity rate reaches the threshold
$\bar{\gamma}_c\!=\!2$. This is established by verifying the four formal genericity conditions for a flip bifurcation as stated in Theorem~4.3 of~\cite{b19}.\vspace{1mm}\\
1) \textit{Invariance:} For all $\bar{\gamma}$, we have $f(0;\bar{\gamma})=0$, so the origin $y^{\ast}=0$ remains a solution for all $\bar{\gamma}$.\vspace{1mm}\\
2) \textit{Critical eigen-spectrum:} The first derivative w.r.t.\ $y$ is
    $f_y(0;\bar{\gamma})=1-\bar{\gamma}$. At $\bar{\gamma}=\bar{\gamma}_c=2$, we obtain
    $f_y(0;2)=1-2=-1$. Exactly at $\bar{\gamma}_c=2$, and hence
    $\gamma_c=\tfrac{2(N-1)}{N}$, the consensus equilibrium loses stability.\vspace{1mm}\\
3) \textit{Transversality:} Differentiating $f_y(0;\bar{\gamma})$ w.r.t.\
    $\bar{\gamma}$ yields $\tfrac{\partial}{\partial\bar{\gamma}}\!\left(1-\bar{\gamma}\right)\!=\!-\!1\!\neq\! 0$. The eigenvalue $\lambda(\bar{\gamma})\coloneqq f_y(0;\bar{\gamma})=1-\bar{\gamma}$ therefore crosses the unit-circle boundary ($\lambda=-1$) transversally with non-zero speed. This guarantees that the bifurcation is generic and not a degenerate tangency.\vspace{1mm}\\
4) \textit{Symmetry and Non-degeneracy:} The map satisfies the odd symmetry
    $f(-y;\bar{\gamma})\!=\!-\!f(y;\bar{\gamma})$, implying $f_{yy}(0;\bar{\gamma})\!=\!0$
    for all~$\bar{\gamma}$. The first non-vanishing nonlinearity is cubic: $f_{yyy}(0;\bar{\gamma})\!=\!6\bar{\gamma}a$, and hence $f_{yyy}(0;2)\!=\!12a\!\neq\! 0$. This guarantees that the quadratic term in the local normal form vanishes (no fold bifurcation) and that a cubic leading nonlinearity is present.
    
Finally, we establish the supercritical nature of the flip bifurcation by using the odd symmetry of the map to compute the cubic normal-form coefficient, in
accordance with Theorem~7.10 of~\cite{b19}. Let
$\mu\coloneqq\bar{\gamma}\!-\!2$ and rewrite the map near $(y,\mu)\!=\!(0,0)$ as
$y_{t+1}\!=\!f(y_t;2+\mu)\!=\!-\!(1\!+\!\mu)y_t\!+\!(2\!+\!\mu)a\,y_t^{3}$. By the standard
\text{1D} flip normal-form reduction, $f$ is locally topologically
conjugate to the normal form (Theorem~7.10 of~\cite{b19}) given by,
\begin{equation}\label{eq:5}
    \xi_{t+1} = -(1+\mu)\xi_t + C \xi_t^{3} + \mathcal{O}(\xi_t^{5}),
\end{equation}
where the cubic normal-form coefficient~$C$ is determined by the Schwarzian
derivative of the map at the bifurcation point. For a map with vanishing
quadratic term ($f_{yy}(0;2)\!=\!0$), this coefficient is
$C\!=\!\tfrac{1}{6}f_{yyy}(0;2)+\bigl(\tfrac{1}{2}f_{yy}(0;2)\bigr)^{2}
\!=\!\tfrac{1}{6}(12a)+0\!=\!2a$. Since $a>0$, we obtain $C=2a>0$, proving that
the flip is supercritical. 
\end{proof}

As established in Theorem~\ref{thm:3.1}, the zero-disagreement state loses stability for $\bar{\gamma}\!>\!\bar{\gamma}_{c}$. Theorem~\ref{thm:3.2} subsequently proves that this instability bifurcates into a stable, unique period-2 orbit, the discrete-time analogue of a limit cycle in continuous-time dynamical systems.

\vspace{1mm}

\begin{theorem}\label{thm:3.2}
Under M-NOD \eqref{eq:1} over an undirected complete graph with uniform influence weights, the network admits a unique symmetric period-2 orbit
$\mathcal{O}\!\coloneqq\!\{\bar{\mathbf{v}},-\bar{\mathbf{v}}\}$ satisfying
$\mathbf{v}_{t+2}\!=\!\mathbf{v}_t\!\neq\!\mathbf{v}_{t+1}$, with disagreement
amplitude $p\!\coloneqq\!\|\mathbf{C}\bar{\mathbf{v}}\|_{\infty}
\!=\!\sqrt{{(\bar{\gamma}-2})/{\bar{\gamma}\,a}}$. This period-2 orbit is locally
asymptotically stable over the reactivity-rate range:
\begin{equation}
\label{eq:polarization_gamma_range}
\frac{2(N-1)}{N}
<
\gamma
<
\frac{3(N-1)}{N}.
\end{equation}
Consequently, the network undergoes a synchronized flip in which the two opposing clusters alternate between the disagreement states $+p$ and $-p$ at every time step.
\end{theorem}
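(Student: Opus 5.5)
We work entirely through the scalar reduction of Lemma~\ref{lem:3.2}. Remark~\ref{r:3.1} lets us analyze the disagreement map $f(y)=(1-\bar{\gamma})y+\bar{\gamma}a y^{3}$, with $\bar{\gamma}=\gamma N/(N-1)$. We then lift the result back to $\mathbf{v}$ via $\mathbf{y}=c\,\mathbf{z}$ and the conserved average. The plan has three steps: locate the period-2 points, prove they are unique, and linearize the second iterate.

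For existence, we exploit the odd symmetry $f(-y)=-f(y)$ and look for points with $f(p)=-p$. Then $f(-p)=p$, so $\{p,-p\}$ is a 2-cycle. The condition $f(p)=-p$ reads $(2-\bar{\gamma})p+\bar{\gamma}ap^{3}=0$. Its nonzero solutions are $p^{2}=(\bar{\gamma}-2)/(\bar{\gamma}a)$, which are real exactly when $\bar{\gamma}>2$, i.e.\ $\gamma>\gamma_c$. This matches the supercritical branch from Theorem~\ref{thm:3.1}.

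For uniqueness, we write $f^{2}(y)-y=0$ as a degree-9 polynomial. We factor out $y$ (the consensus fixed point), the fixed points $f(y)=y$, i.e.\ $y^{2}=1/a$, and the antisymmetric factor $\bar{\gamma}ay^{2}-(\bar{\gamma}-2)$. A quotient of degree $4$ remains. We would show this quotient has no real roots for $\bar{\gamma}$ in the relevant range. A first attempt is to write $f^2(y)-y$ in terms of $u=y^2$, divide explicitly, and check the discriminant or the sign of the resulting quadratic in $u$. Any remaining real roots would be non-symmetric 2-cycles $\{q,f(q)\}$ with $f(q)\neq -q$, and we need to exclude them in $(2,3)$ or at least confine them outside it. Given the agent-wise decoupling in \eqref{eq:3} and the uniform-magnitude restriction of Lemma~\ref{lem:3.1}, the orbit in $\mathcal{S}$ has every component equal to $\pm p$. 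Lifting through $\mathbf{C}\bar{\mathbf{v}}=\pm p$ entries with balanced signs then gives $\mathcal{O}=\{\bar{\mathbf{v}},-\bar{\mathbf{v}}\}$ with $\|\mathbf{C}\bar{\mathbf{v}}\|_\infty=p$. Here $\bar{\mathbf{v}}$ is the zero-mean part $\mathbf{z}=\mathbf{y}/c$, taking $\bar v_0=0$ or measuring relative to the conserved mean.

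For stability, the multiplier of the 2-cycle is $(f^{2})'(p)=f'(p)f'(-p)=f'(p)^{2}$, since $f'$ is even. Substituting $ap^{2}=(\bar{\gamma}-2)/\bar{\gamma}$ gives $f'(p)=1-\bar{\gamma}+3(\bar{\gamma}-2)=2\bar{\gamma}-5$. Local asymptotic stability requires $|2\bar{\gamma}-5|<1$, i.e.\ $2<\bar{\gamma}<3$, and multiplying by $(N-1)/N$ yields \eqref{eq:polarization_gamma_range}. In the full $N$-dimensional system we also need transverse stability. The Jacobian of \eqref{eq:3} at the orbit is diagonal with all entries $f'(\pm p)$, so the component dynamics in $\mathcal{C}$-coordinates share the same multiplier. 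The mean direction is neutral because of conservation, which is why the statement is understood modulo the conserved average.

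The main obstacle is the uniqueness claim: excluding spurious real roots of the quartic factor, and handling stability transverse to $\mathcal{S}$. For the second issue, we would note that the full Jacobian restricted to $\mathbf{1}^{\perp}$ acts as $[(1-\gamma)\mathbf{I}+\gamma\mathbf{W}]+3\gamma a\,\mathbf{C}\,\mathrm{diag}(y_i^2)$. Since $y_i^{2}=p^{2}$, this equals $(1-\bar{\gamma}+3\bar{\gamma}ap^{2})\mathbf{I}$ on $\mathbf{1}^{\perp}$, which confirms that the scalar multiplier governs all transverse directions.
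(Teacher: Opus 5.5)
Your core argument is the paper's. It uses the same scalar reduction and the same computation $f(p)=-p\Rightarrow p^2=(\bar\gamma-2)/(\bar\gamma a)$ for existence and for uniqueness among symmetric orbits. It also uses the same multiplier $f'(p)f'(-p)=(2\bar\gamma-5)^2$, which gives $2<\bar\gamma<3$.

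What you call the main obstacle is not needed. The statement only asserts uniqueness of the \emph{symmetric} orbit, and your $f(y)=-y$ computation already settles that, exactly as the paper does. If you want the stronger claim anyway, it takes two lines rather than a quartic analysis. Take a 2-cycle $\{u,w\}$ with $w\neq\pm u$. Subtracting and adding $f(u)=w$ and $f(w)=u$, then cancelling $u-w$ and $u+w$, gives $u^2+uw+w^2=(\bar\gamma-2)/(\bar\gamma a)$ and $u^2-uw+w^2=1/a$. Hence $(u+w)^2=(\bar\gamma-3)/(\bar\gamma a)<0$ for $\bar\gamma<3$. Equivalently, your quartic is a quadratic in $y^2$ with discriminant $(\bar\gamma-3)(\bar\gamma+1)/(\bar\gamma a)^2$. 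These asymmetric cycles are born exactly at $\bar\gamma=3$, where $f'(p)=+1$.

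Your transverse-stability check is a genuine improvement over the paper, whose part (b) argues only through the scalar map. That map relies on $\mathbf{1}_N^\top\mathbf{y}^{\circ 3}=0$, which perturbations of the orbit generically violate. As written, the paper's argument therefore only certifies stability against perturbations that keep the trajectory in $\mathcal{S}$, such as rescalings of the balanced sign pattern. Your observation closes this: on the orbit $\mathrm{diag}(y_i^2)=p^2\mathbf{I}$, so the linearization is $(2\bar\gamma-5)\mathbf{I}$ on $\mathbf{1}_N^\perp$ at both orbit points.

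Three small corrections:
\begin{itemize}
\item The Jacobian of the $\mathbf{v}$-map is $(1-\gamma)\mathbf{I}+\gamma\mathbf{W}+3\gamma a\,\mathrm{diag}(y_i^2)\mathbf{C}$. Your ordering $\mathbf{C}\,\mathrm{diag}(y_i^2)$ is the Jacobian of the $\mathbf{y}$-map \eqref{eq:2}; the two coincide on the orbit.
\item The neutral eigenvalue $1$ along $\mathbf{1}_N$ comes from translation equivariance $F(\mathbf{v}+s\mathbf{1}_N)=F(\mathbf{v})+s\mathbf{1}_N$ of the M-NOD map $F$. It does not come from conservation of the mean, which is exact only on $\mathcal{S}$. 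The precise conclusion is asymptotic stability of $\{\mathbf{y}^*,-\mathbf{y}^*\}$ for the closed $\mathbf{y}$-dynamics on $\mathbf{1}_N^\perp$, i.e.\ modulo translation.
\item Like the paper's, your uniqueness holds only at the scalar level. For $N\ge 4$, $\mathcal{S}$ also contains unstable symmetric 2-cycles such as $\mathbf{y}=(p,-p,0,\dots,0)^\top$. So ``every component is $\pm p$'' comes from the uniform-magnitude hypothesis, not from the decoupling.
\end{itemize}
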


\vspace{1mm}

\begin{proof}
Consistent with Lemma~\ref{lem:3.2}, the network dynamics within the
symmetric subspace decouple into $N$ identical scalar maps $y_{t+1}=f(y_t;\bar{\gamma})$ given by~\eqref{eq:4}. Consequently, the existence, uniqueness, and stability of
a period-2 orbit for the collective opinion state $\mathbf{v}_t$ can be fully
established by proving the existence of a stable fixed point of the second iterate
scalar map $y_{t+2}=f(f(y_t;\bar{\gamma}))$.\vspace{1mm}

\noindent (a) \underline{Existence and Uniqueness:} To establish the existence of a period-2 orbit for the scalar map $y_{t+1}\!=\!f(y_t;\bar{\gamma})\!=\!(1\!-\!\bar{\gamma})y_t\!+\!\bar{\gamma}a\,y_t^{3}$, note that the map is odd as $f(-y;\bar{\gamma})\!=\!-\!f(y;\bar{\gamma})$. Consider a non-zero point $y_t\!\neq\! 0$ that satisfies the defining invariance condition $y_{t+1}\!=\!-y_t$ for the existence of a symmetric period-2 orbit. Substituting this point into the $y_{t+1}\!=\!f(y_t;\bar{\gamma})$ update~\eqref{eq:4} gives $-\!y_t\!=\!(1\!-\!\bar{\gamma})y_t\!+\!\bar{\gamma}a\,y_t^{3}$, which rearranges to $(\bar{\gamma}-2)y_t\!=\!\bar{\gamma}a\,y_t^{3}$. For $y_t\!\neq\!0$, division by $y_t$ yields the amplitude condition $y_t^{2}\!=\!\tfrac{\bar{\gamma}-2}{\bar{\gamma}a}$. In the post-bifurcation regime $\bar{\gamma}\!>\!2$ (Theorem~\ref{thm:3.1}) with $a\!>\!0$, the right-hand side is strictly positive, guaranteeing the existence of exactly one pair of real non-zero roots $y_t\!=\!\!\pm p\!=\!\!\pm\!\sqrt{\tfrac{\bar{\gamma}-2}{\bar{\gamma}a}}$. Therefore, the only nonzero solutions of
$f(y_t;\bar{\gamma})\!=\!-y_t$ are $y_t\!\!=\!\!\pm p$, so
$\mathcal{O}_y\!\!=\!\!\{p,-p\}$ is the unique symmetric period-2 orbit.

To map the scalar results back to the collective opinion state, recall the orthogonal decomposition $\mathbf{v}_t=\bar{v}_0\mathbf{1}_N+\mathbf{z}_t$. For zero invariant initial average opinion, i.e., $\bar v_t=\bar{v}_0=0$, we have $\mathbf{v}_t=\mathbf{z}_t$. Combining this with $\mathbf{z}_t=\frac{1}{c}\mathbf{y}_t$, established in the preceding analysis of Remark~\ref{r:3.1}, yields $\mathbf{v}_t=\mathbf{z}_t=\frac{1}{c}\mathbf{y}_t$. Since the scalar period-2 condition is $y_{t+1}\!=-\!y_t$, we have the corresponding vector relation $\mathbf{y}_{t+1}\!=-\!\mathbf{y}_t$. For $y_t=\pm p$, this yields
$\mathbf{v}_t\!=\!\tfrac{p}{c}\mathbf{u}$ and $\mathbf{v}_{t+1}\!=-\!\tfrac{p}{c}\mathbf{u}$, where $\mathbf{u}\!\in\!\{-1,+1\}^N$ is a balanced bipartite sign vector. Computing the second update gives $\mathbf{v}_{t+2}\!=-\!\tfrac{p}{c}(-\mathbf{u})\!=\!\tfrac{p}{c}\mathbf{u}\!=\!\mathbf{v}_t$, which satisfies the period-2 criterion $\mathbf{v}_{t+2}\!=\!\mathbf{v}_t\!\neq\!\mathbf{v}_{t+1}$ as stated in the first point of Definition~\ref{def:2.1}. Thus, the scalar orbit $\mathcal{O}_y\!=\!\{p,-p\}$ induces a vector orbit $\mathcal{O}\!=\!\{\bar{\mathbf{v}},-\bar{\mathbf{v}}\}$ with
$\bar{\mathbf{v}}\!\coloneqq\!\tfrac{p}{c}\mathbf{u}$, establishing the existence of the symmetric period-2 orbit. Since $\mathbf u$ is balanced, $\mathbf1_N^\top\mathbf u=0$, implying
$\mathbf1_N^\top\bar{\mathbf v}
=\tfrac{p}{c}\mathbf1_N^\top\mathbf u=0$.
Moreover, $p>0$ implies $\bar{\mathbf v}\neq\mathbf0$.
Hence, the orbit satisfies Condition~2 of
Definition~\ref{def:2.1}. Note that for a nonzero invariant initial average opinion, the orbit is centered at $\bar{v}_0\mathbf{1}_N$ rather than the origin. The network retains the same period-2 dynamics, with the orbit $\mathcal{O}$ shifted to $\{\bar{v}_0\mathbf{1}_N+\overline{\mathbf{z}},\,\bar{v}_0\mathbf{1}_N-\overline{\mathbf{z}}\}$.\vspace{1mm}

\noindent (b) \underline{Stability}: The local stability of the period-2 orbit is determined by the Floquet multiplier of the second-iterate map $y_{t+2}=f(f(y_t;\bar{\gamma}))$. By the Chain Rule and symmetry of the orbit ($y_{t+1}\!=\!-y_t$), the multiplier is $\mu\!=\!f'(p)\cdot f'(-p)\!=\![f'(p)]^2$. The Jacobian of the scalar map is $J\!=\!f'(y)\!=\!(1-\bar{\gamma})+3\bar{\gamma}a\,y^2$. Substituting the amplitude condition $\bar{\gamma}a\,p^2\!=\!\bar{\gamma}-2$ derived in Part (a) yields $f'(p)\!=\!(1-\bar{\gamma})+3(\bar{\gamma}-2)\!=\!2\bar{\gamma}-5$. For asymptotic stability, the magnitude of the multiplier must satisfy $|\mu|\!<\!1$, which implies $|f'(p)|\!<\!1$ (Theorem 1.13 in \cite{b20}). The inequality $-1\!<\!2\bar{\gamma}-5\!<\!1$ simplifies to $2\!<\!\bar{\gamma}\!<\!3$. Substituting $\bar{\gamma}\!=\!\gamma\frac{N}{N-1}$ gives reactivity rate bounds \eqref{eq:polarization_gamma_range} for asymptotically stable period-2 orbit network behavior.
\end{proof}

Note that at the upper bound
$\gamma\!=\!\tfrac{3(N-1)}{N}$ of
\eqref{eq:polarization_gamma_range}, the Floquet multiplier of the
period-2 orbit reaches $+1$. Beyond this threshold, the symmetric
period-2 orbit is unstable, and the resulting dynamics are beyond
the scope of this paper.

\vspace{1mm}

\begin{theorem}\label{thm:3.3}
Under M-NOD \eqref{eq:1} over an undirected complete graph with uniform influence weights, the network bifurcates as the reactivity rate $\gamma$ increases, transitioning from consensus to asymptotically stable symmetric dynamic polarization (Definition~\ref{def:2.1}).
\end{theorem}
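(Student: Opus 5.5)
Theorem~\ref{thm:3.3} is essentially a synthesis of Proposition~\ref{lem:3.3}, Theorem~\ref{thm:3.1}, and Theorem~\ref{thm:3.2}. The plan is to track the asymptotic behavior of the reduced scalar map \eqref{eq:4} as the scaled rate $\bar{\gamma}=\gamma N/(N-1)$ increases monotonically with $\gamma$. Since $N/(N-1)>0$, each regime in $\bar{\gamma}$ corresponds to a regime in $\gamma$. By Remark~\ref{r:3.1}, the behavior of $y_t$ determines that of $\mathbf{v}_t$ through $\mathbf{v}_t=\bar v_0\mathbf{1}_N+\tfrac{1}{c}\mathbf{y}_t$. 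It therefore suffices to describe three consecutive parameter intervals and show that they glue together continuously at the critical point.

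\emph{Step 1 (consensus regime).} For $0<\gamma<\gamma_c=2(N-1)/N$, Proposition~\ref{lem:3.3} gives $|f'(0)|=|1-\bar{\gamma}|<1$. Hence $y^*=0$ is locally asymptotically stable and $\mathbf{v}_t\to\bar v_0\mathbf{1}_N$, which is consensus in the sense of Definition~\ref{def:2.3}.

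\emph{Step 2 (transition).} At $\gamma=\gamma_c$, Theorem~\ref{thm:3.1} yields a supercritical flip bifurcation of the consensus equilibrium, with normal-form coefficient $C=2a>0$. For $\gamma>\gamma_c$, the multiplier $1-\bar\gamma<-1$, so consensus is unstable. Supercriticality says the bifurcating period-2 branch appears on the side $\bar\gamma>2$, where the fixed point has lost stability, and inherits stability there. To make this concrete, I will note that the amplitude $p(\bar\gamma)=\sqrt{(\bar\gamma-2)/(\bar\gamma a)}$ from Theorem~\ref{thm:3.2} tends to $0$ as $\bar\gamma\downarrow 2$. The orbit therefore emanates continuously from the consensus point, so the transition is a genuine bifurcation rather than a jump to a distant attractor.

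\emph{Step 3 (polarization regime).} For $\gamma_c<\gamma<3(N-1)/N$, Theorem~\ref{thm:3.2} supplies the unique symmetric orbit $\mathcal{O}=\{\bar{\mathbf v},-\bar{\mathbf v}\}$. Its multiplier is $(2\bar\gamma-5)^2<1$, so the orbit is locally asymptotically stable, and Theorem~\ref{thm:3.2} already checks $\mathbf{1}_N^\top\bar{\mathbf v}=0$ and $\bar{\mathbf v}\neq\mathbf 0$. I then verify the two conditions of Definition~\ref{def:2.1}. Condition~1 follows because $\mathbf v_t\to\mathcal O$ for initial states in $\mathcal S$ near the orbit. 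Condition~2 is the zero-mean and nonzero property just cited. When $\bar v_0\neq0$, I will use the shifted orbit $\bar v_0\mathbf{1}_N\pm\bar{\mathbf z}$ and restate Definition~\ref{def:2.1} relative to the invariant mean. The balanced bipartite structure comes from Lemma~\ref{lem:3.1}.

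The main obstacle is that all stability claims are made along trajectories in the invariant subspace $\mathcal{S}$ with uniform disagreement. Transverse perturbations are not directly controlled by the scalar reduction. My first attempt at closing this gap is to linearize the full map \eqref{eq:1} at a point of $\mathcal O$. There $(\mathbf{C}\mathbf{v})^{\circ 3}$ has Jacobian $3\,\mathrm{diag}(y_i^2)\mathbf C=3p^2\mathbf C$, since all $y_i^2=p^2$. The full Jacobian is therefore $\mathbf I-\gamma\mathbf C+3\gamma a p^2\mathbf C$, which is a polynomial in $\mathbf C$. On the mean-zero subspace, where $\mathbf C$ acts as $c\mathbf I$, it reduces to $f'(p)\mathbf I$. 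On $\mathbf 1_N$ it acts as $1$, reflecting the conserved mean. Thus the Floquet spectrum of the full system is $\{1,(2\bar\gamma-5)^2,\dots\}$, with the neutral direction being exactly the mean-shift family of orbits. This shows that the orbit is stable modulo the conserved average, and it completes the argument.
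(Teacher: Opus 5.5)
Your Steps 1--3 match the paper's proof, which chains Proposition~\ref{lem:3.3}, Theorem~\ref{thm:3.1}, Theorem~\ref{thm:3.2} and Lemma~\ref{lem:3.1} and never leaves the symmetric set $\mathcal S$. Your last paragraph goes beyond the paper, and it is correct. Because $\mathbf C\bar{\mathbf v}=p\mathbf u$ has all entries of modulus $p$, the full Jacobian $\mathbf I-\gamma\mathbf C+3\gamma a p^2\mathbf C$ equals $f'(p)\mathbf I$ on $\mathbf 1_N^\perp$ and fixes $\mathbf 1_N$. So perturbations that break the bipartite pattern contract at the same two-step rate $(2\bar\gamma-5)^2$ as the in-$\mathcal S$ mode, which the scalar reduction alone cannot show.

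Two refinements are worth making.

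First, the average is not conserved off $\mathcal S$, since there $\mathbf 1_N^\top(\mathbf C\mathbf v)^{\circ 3}\neq 0$ in general. Also, a unit multiplier by itself does not give stability. The clean argument uses translation equivariance, $F_0(\mathbf v+s\mathbf 1_N)=F_0(\mathbf v)+s\mathbf 1_N$ (from $\mathbf W\mathbf 1_N=\mathbf 1_N$). With it, the deviation $\mathbf z_t$ evolves under an autonomous map on $\mathbf 1_N^\perp$, for which $\{\bar{\mathbf z},-\bar{\mathbf z}\}$ is a hyperbolic attracting orbit. The mean increments $\tfrac{\gamma a}{N}\sum_i y_{i,t}^3$ vanish to first order at the balanced orbit, so under the geometric convergence of $\mathbf z_t$ they are summable. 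Trajectories therefore converge to a nearby mean-shifted copy of $\mathcal O$, which is the precise content of ``stable modulo the average.'' The same computation at $s\mathbf 1_N$ also makes Step 1 rigorous off $\mathcal S$.

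Second, your spectrum shows that $DG_0(\bar{\mathbf v})$ has eigenvalue exactly $1$ along $\mathbf 1_N$. Hence the Schur-stability claim $\rho(A_0)<1$ that the paper later relies on in Lemma~\ref{lem:4.1} and Theorem~\ref{thm:4.1} holds only on this quotient. Your formulation is the correct one.
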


\vspace{1mm}

\begin{proof}
The proof follows the stability transition of the system. First, invoking the scalar reduction from Lemma~\ref{lem:3.2}, Proposition~\ref{lem:3.3} shows that for $0\!<\!\gamma\!<\!\tfrac{2(N-1)}{N}$, the disagreement vector $\mathbf{y}_t$ converges to the origin. This implies that $\mathbf{v}_t$ converges to $\bar{v}_0\mathbf{1}_N$, confirming that the network reaches stable consensus. Again using Lemma~\ref{lem:3.2}, Theorem~\ref{thm:3.1} establishes that at $\gamma\!=\!\gamma_c\!=\!\tfrac{2(N-1)}{N}$, the zero-disagreement equilibrium loses stability via a supercritical flip bifurcation.

For the high-reactivity regime ($\gamma\!>\!\tfrac{2(N-1)}{N}$), Theorem~\ref{thm:3.2} guarantees the existence of a unique, stable, symmetric period-2 orbit within the bounded stability interval $\tfrac{2(N-1)}{N}\!<\!\gamma\!<\!\tfrac{3(N-1)}{N}$, which directly satisfies Condition~1 of Definition~\ref{def:2.1}. Additionally, Theorem~\ref{thm:3.2} also shows that this orbit satisfies Condition~2 of Definition~\ref{def:2.1}. Moreover, Lemma~\ref{lem:3.1} ensures that
the corresponding state lies in $\mathcal S$ and exhibits balanced
bipartite clustering. Hence, the network dynamics meet the full criteria of Definition~\ref{def:2.1}, establishing that the emergent behavior is precisely symmetric dynamic polarization.
\end{proof}

Intuitively, under high reactivity rates in the post-bifurcation regime, agents' opinions repeatedly overshoot the consensus equilibrium. Consequently, opposing clusters persistently alternate between bounded disagreement states.


\section{Main Results II: Emergence of Asymmetric Dynamic Polarization under Directed Graphs with Nonuniform Weights}\label{sec4}

In this section, we analyze the structural robustness of dynamic polarization in M-NOD under directed graph topologies with nonuniform influence weights, modeling such networks as sufficiently small symmetry-breaking perturbations of the undirected complete graph studied in Section~\ref{sec3}. To make the perturbation notation explicit, we rewrite the nominal matrices from Section~\ref{sec3} as $\mathbf W_0 \!:=\! \mathbf W \!=\! \frac{1}{N-1}(\mathbf J\!-\!\mathbf I)$ and $\mathbf C_0 \!:=\! \mathbf I\!-\!\mathbf W_0$.

The perturbed interaction matrix and the deviation operator are $\mathbf W_\epsilon=\mathbf W_0+\epsilon\mathbf{\Delta}$ and $\mathbf C_\epsilon=\mathbf I-\mathbf W_\epsilon=\mathbf C_0-\epsilon\mathbf{\Delta}$, respectively, where $\mathbf{\Delta}$ encodes directed weight heterogeneity and $|\epsilon|$ controls the magnitude of the perturbation. The perturbation is constructed to satisfy the zero row-sum condition $\mathbf{\Delta}\mathbf{1}_N=\mathbf{0}$, ensuring that $\mathbf{W}_\epsilon$ remains row-stochastic. This preserves the two-cluster synchronization pattern, if agents within a cluster share the same opinion, their subsequent opinions will remain identical within that cluster. Substituting $\mathbf W_\epsilon$ and $\mathbf C_\epsilon$ in~\eqref{eq:1} yields the perturbed one-step map $\mathbf v_{t+1}=F_\epsilon(\mathbf v_t)$, where
\begin{equation}
\label{eq:Feps_explicit_continuation}
F_\epsilon(\mathbf v)
=
\big[(1-\gamma)\mathbf I+\gamma(\mathbf W_0+\epsilon\mathbf\Delta)\big]\mathbf v
+
\gamma a\big[(\mathbf C_0-\epsilon\mathbf\Delta)\mathbf v\big]^{\circ 3}.
\end{equation}
The associated two-step map is defined as $G_\epsilon(\mathbf v)\coloneqq F_\epsilon(F_\epsilon(\mathbf v))$. Corresponding to the unique, locally asymptotically stable period-2 orbit $\mathcal O\coloneqq\{\bar{\mathbf v},-\bar{\mathbf v}\}$, 
established in Theorem~\ref{thm:3.2}, to establish the emergence of
asymmetric dynamic polarization (Definition~\ref{def:2.2}), we denote the
nearby perturbed period-2 orbit by
$\mathcal O_\epsilon\coloneqq\{\bar{\mathbf v}_\epsilon,\tilde{\mathbf v}_\epsilon\}$. Here, the two orbit states satisfy $\tilde{\mathbf v}_\epsilon=F_\epsilon(\bar{\mathbf v}_\epsilon)$ and $\bar{\mathbf v}_\epsilon=F_\epsilon(\tilde{\mathbf v}_\epsilon)$, or equivalently $G_\epsilon(\bar{\mathbf v}_\epsilon)=\bar{\mathbf v}_\epsilon$. Thus, proving the existence of $\mathcal O_\epsilon$ reduces to proving the existence of a nearby fixed point of the two-step map $G_\epsilon$, which is established in Lemma~\ref{lem:4.1} below.

This formulation allows us to evaluate whether dynamic polarization persists when exact anti-symmetry, uniform weights, and undirected communication are slightly broken. Specifically, we analytically bound the perturbation magnitude $\epsilon$ and the local error radius required to guarantee the persistence of dynamic polarization under directed graphs with nonuniform influence weights.

\vspace{1mm}

\begin{lemma}\label{lem:4.1}
For the perturbed M-NOD~\eqref{eq:Feps_explicit_continuation}, given the locally asymptotically stable period-2 orbit $\mathcal O_0\coloneqq\{\bar{\mathbf v},-\bar{\mathbf v}\}$ (Theorem~\ref{thm:3.2}), there exists $\bar{\epsilon}>0$ such that, for every $|\epsilon|<\bar{\epsilon}$, the perturbed two-step map $G_\epsilon$ admits a unique nearby fixed point $\bar{\mathbf v}_\epsilon$ satisfying $G_\epsilon(\bar{\mathbf v}_\epsilon)=\bar{\mathbf v}_\epsilon$, with $\bar{\mathbf v}_\epsilon$ approaching $\bar{\mathbf v}$ as $\epsilon$ approaches zero. 

Consequently, the perturbed system admits a nearby period-2 orbit $\mathcal O_\epsilon\coloneqq\{\bar{\mathbf v}_\epsilon,\tilde{\mathbf v}_\epsilon\}$ satisfying $F_\epsilon(\bar{\mathbf v}_\epsilon)=\tilde{\mathbf v}_\epsilon$, establishing the existence of asymmetric dynamic polarization under directed graphs with nonuniform influence weights.
\end{lemma}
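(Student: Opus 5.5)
The plan is to apply the implicit function theorem to the fixed-point equation of the two-step map. Define
\[
\mathcal H(\mathbf v,\epsilon)\coloneqq G_\epsilon(\mathbf v)-\mathbf v ,
\]
which is polynomial, hence $C^\infty$, in $(\mathbf v,\epsilon)$ by \eqref{eq:Feps_explicit_continuation}. Theorem~\ref{thm:3.2} gives $\mathcal H(\bar{\mathbf v},0)=\mathbf 0$, since $F_0(\bar{\mathbf v})=-\bar{\mathbf v}$ and $F_0(-\bar{\mathbf v})=\bar{\mathbf v}$. If $D_{\mathbf v}\mathcal H(\bar{\mathbf v},0)=DG_0(\bar{\mathbf v})-\mathbf I$ is invertible on the space where we solve, the implicit function theorem yields $\bar\epsilon>0$ and a unique smooth branch $\epsilon\mapsto\bar{\mathbf v}_\epsilon$ with $\bar{\mathbf v}_0=\bar{\mathbf v}$. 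Defining $\tilde{\mathbf v}_\epsilon\coloneqq F_\epsilon(\bar{\mathbf v}_\epsilon)$ then gives the orbit $\mathcal O_\epsilon$. Continuity gives $\tilde{\mathbf v}_\epsilon\to-\bar{\mathbf v}\neq\bar{\mathbf v}$, so $\tilde{\mathbf v}_\epsilon\neq\bar{\mathbf v}_\epsilon$ for small $|\epsilon|$, and both orbit states are nonzero.

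The second step is to compute the spectrum of the linearization explicitly using the structure from Section~\ref{sec3}. At a point with $\mathbf C_0\mathbf v=\pm p\,\mathbf u$, where $\mathbf u$ is the balanced sign vector, we have $\operatorname{diag}((\mathbf C_0\mathbf v)^{\circ2})=p^2\mathbf I$. Hence
\[
DF_0(\pm\bar{\mathbf v})=(1-\gamma)\mathbf I+\gamma\mathbf W_0+3\gamma a p^2\mathbf C_0 .
\]
This matrix is a polynomial in $\mathbf J$, so it is diagonal in the decomposition $\mathrm{span}\{\mathbf 1_N\}\oplus\mathbf 1_N^\perp$:
\begin{itemize}
\item on $\mathbf 1_N^\perp$ it acts as $1-\bar\gamma+3\bar\gamma a p^2=2\bar\gamma-5$, exactly as in the proof of Theorem~\ref{thm:3.2};
\item on $\mathbf 1_N$ it acts as $1$, because $\mathbf W_0\mathbf 1_N=\mathbf 1_N$ and $\mathbf C_0\mathbf 1_N=\mathbf 0$.
\end{itemize}
Therefore $DG_0(\bar{\mathbf v})$ has eigenvalue $(2\bar\gamma-5)^2\in[0,1)$ on $\mathbf 1_N^\perp$ for $2<\bar\gamma<3$, so $DG_0-\mathbf I$ is invertible there. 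It has eigenvalue exactly $1$ along $\mathbf 1_N$, so $DG_0(\bar{\mathbf v})-\mathbf I$ is singular on all of $\mathbb R^N$.

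The main obstacle is this neutral consensus direction. It is not an artifact of the linearization. Because $\mathbf\Delta\mathbf 1_N=\mathbf 0$, we have $\mathbf W_\epsilon\mathbf 1_N=\mathbf 1_N$ and $\mathbf C_\epsilon\mathbf 1_N=\mathbf 0$, so $F_\epsilon(\mathbf v+s\mathbf 1_N)=F_\epsilon(\mathbf v)+s\mathbf 1_N$ for every $\epsilon$. Fixed points of $G_\epsilon$ therefore come in lines, and uniqueness can only hold modulo this translation. I would handle this by a Lyapunov--Schmidt splitting. Write $\mathbf v=\bar v\,\mathbf 1_N+\mathbf z$ with $\mathbf 1_N^\top\mathbf z=0$, and let $\mathbf P\coloneqq\mathbf I-\tfrac1N\mathbf J$ be the projection onto $\mathbf 1_N^\perp$. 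Translation equivariance makes $\mathbf P\mathcal H$ independent of $\bar v$. Applying the implicit function theorem to $\mathbf P\mathcal H(\mathbf z,\epsilon)=\mathbf 0$ on $\mathbf 1_N^\perp$, where the Jacobian is invertible by the previous step, gives a unique $\mathbf z_\epsilon\to\bar{\mathbf v}-\bar v_0\mathbf 1_N$. We then set $\bar{\mathbf v}_\epsilon=\bar v_0\mathbf 1_N+\mathbf z_\epsilon$, which is unique once the average is fixed.

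What remains is the scalar bifurcation equation $\beta(\epsilon)\coloneqq\tfrac1N\mathbf 1_N^\top\mathcal H(\mathbf z_\epsilon,\epsilon)=0$, that is, zero net drift of the average over two steps. At $\epsilon=0$ this holds identically: $\mathbf 1_N^\top\mathbf W_0=\mathbf 1_N^\top$, and the cubic sums vanish on $\mathcal S$ by Lemma~\ref{lem:3.1}. For $\epsilon\neq0$ I would first try to show $\beta\equiv0$ by requiring the perturbation to preserve the average as well. The cleanest route is to also impose $\mathbf 1_N^\top\mathbf\Delta=\mathbf 0$ (doubly stochastic $\mathbf W_\epsilon$). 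Under that assumption the linear part preserves the mean, and the average changes only through $\tfrac{\gamma a}{N}\mathbf 1_N^\top(\mathbf C_\epsilon\mathbf v)^{\circ3}$ over the two steps; I would then need to show that these two cubic contributions cancel on the continued orbit, which is not automatic off the symmetric subspace $\mathcal S$. If that cancellation fails, I would fall back on stating the result as existence of an orbit that is period-2 modulo a slow uniform drift along $\mathbf 1_N$, with $|\beta(\epsilon)|=\mathcal O(\epsilon)$. Stability and uniqueness in the transverse directions are unaffected either way, since the transverse eigenvalue stays strictly inside the unit circle for small $|\epsilon|$.

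Finally, asymmetry (Condition~2 of Definition~\ref{def:2.2}) is checked by a first-order expansion $\mathbf z_\epsilon=\bar{\mathbf v}-\bar v_0\mathbf 1_N+\epsilon\mathbf z_1+\mathcal O(\epsilon^2)$. Here $\mathbf z_1$ solves
\[
(DG_0-\mathbf I)\big|_{\mathbf 1_N^\perp}\,\mathbf z_1=-\mathbf P\,\partial_\epsilon G_0 .
\]
A generic $\mathbf\Delta$ breaks the exact antisymmetry $\tilde{\mathbf v}_\epsilon=-\bar{\mathbf v}_\epsilon$, and hence the $\pm p$ equal-magnitude structure.
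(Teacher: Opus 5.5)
Your diagnosis of the neutral direction is correct, and it is in fact a flaw in the paper's own proof. The paper applies the implicit function theorem to $G_\epsilon(\mathbf v)-\mathbf v$ on all of $\mathbb R^N$, asserting that $A_0-\mathbf I$ is nonsingular because $\rho(A_0)<1$. But $DF_0(\pm\bar{\mathbf v})\mathbf 1_N=\mathbf 1_N$ forces $A_0\mathbf 1_N=\mathbf 1_N$. Indeed, $\mathcal O$ lies in the family of orbits $\{s\mathbf 1_N+\bar{\mathbf v},\,s\mathbf 1_N-\bar{\mathbf v}\}$, $s\in\mathbb R$, so it is not asymptotically stable in $\mathbb R^N$, and fixed points of $G_\epsilon$ can only be unique modulo $\mathrm{span}\{\mathbf 1_N\}$. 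Your transverse implicit-function step on $\mathbf 1_N^\perp$ is sound.

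The genuine gap is the scalar equation $\beta(\epsilon)=0$. It is exactly the existence claim, and you leave it open. Because $\mathcal H$ itself, not only $\mathbf P\mathcal H$, is invariant under $\mathbf v\mapsto\mathbf v+s\mathbf 1_N$, $\beta$ depends on $\epsilon$ alone with no unknown left to adjust. So it must vanish for a structural reason, or the lemma is false. Neither fallback provides that reason. The condition $\mathbf 1_N^\top\mathbf\Delta=\mathbf 0$ is a hypothesis the lemma does not make, and you concede the cubic cancellation is still unproven even then. If instead $\beta(\epsilon)\neq0$, then $G_\epsilon^k(\mathbf z_\epsilon)=\mathbf z_\epsilon+k\beta(\epsilon)\mathbf 1_N$, so there is no period-2 orbit near $\bar{\mathbf v}$ at all; this contradicts the statement rather than weakening it.

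The missing idea is that $F_\epsilon$ is odd for every $\epsilon$ and every $\mathbf\Delta$, being linear plus a pure cubic. Solve $K(\mathbf v,\epsilon)\coloneqq F_\epsilon(\mathbf v)+\mathbf v=\mathbf 0$ instead of $G_\epsilon(\mathbf v)=\mathbf v$. We have $K(\bar{\mathbf v},0)=\mathbf 0$. By your own spectral computation, $D_{\mathbf v}K(\bar{\mathbf v},0)=DF_0(\bar{\mathbf v})+\mathbf I$ has eigenvalue $2$ on $\mathrm{span}\{\mathbf 1_N\}$ and $2\bar\gamma-4>0$ on $\mathbf 1_N^\perp$, hence is invertible on all of $\mathbb R^N$. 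This equation is not translation invariant, which is what pins the center of the orbit. The implicit function theorem gives a unique smooth $\mathbf v_\epsilon\to\bar{\mathbf v}$ with $F_\epsilon(\mathbf v_\epsilon)=-\mathbf v_\epsilon$, so $G_\epsilon(\mathbf v_\epsilon)=F_\epsilon(-\mathbf v_\epsilon)=-F_\epsilon(\mathbf v_\epsilon)=\mathbf v_\epsilon$. By uniqueness in your transverse step, $\mathbf z_\epsilon=\mathbf P\mathbf v_\epsilon$, hence $\beta\equiv0$ for every perturbation with $\mathbf\Delta\mathbf 1_N=\mathbf 0$. The cancellation you could not see off $\mathcal S$ holds because the one-step increment $\mathbf 1_N^\top(F_\epsilon(\mathbf v)-\mathbf v)=\gamma\epsilon\,\mathbf 1_N^\top\mathbf\Delta\mathbf v+\gamma a\,\mathbf 1_N^\top(\mathbf C_\epsilon\mathbf v)^{\circ3}$ is odd in $\mathbf v$.

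This also corrects your last paragraph. In your normalization $\bar{\mathbf v}_\epsilon+\tilde{\mathbf v}_\epsilon\in\mathrm{span}\{\mathbf 1_N\}$, so $\mathbf C_\epsilon\tilde{\mathbf v}_\epsilon=-\mathbf C_\epsilon\bar{\mathbf v}_\epsilon$ exactly; the perturbation does not break the exact sign flip of the disagreement. Condition~2 of Definition~\ref{def:2.2} comes instead from the mean. Applying $\mathbf 1_N^\top$ to $K(\mathbf v_\epsilon,\epsilon)=\mathbf 0$ gives $\tfrac1N\mathbf 1_N^\top\mathbf v_\epsilon=\tfrac{\gamma(\bar\gamma-3)}{N\bar\gamma}\,\epsilon\,\mathbf 1_N^\top\mathbf\Delta\bar{\mathbf v}+\mathcal O(\epsilon^2)$, which is nonzero whenever $\mathbf 1_N^\top\mathbf\Delta\bar{\mathbf v}\neq0$.
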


\vspace{1mm}

\begin{proof}
Since $F_\epsilon$ in~\eqref{eq:Feps_explicit_continuation} is polynomial in $\mathbf v$ and smooth in $\epsilon$, the two-step map $G_\epsilon(\mathbf v)\coloneqq F_\epsilon(F_\epsilon(\mathbf v))$ is also smooth in $(\mathbf v,\epsilon)$. For the nominal symmetric case $\epsilon=0$, the perturbed map $F_\epsilon$ reduces to the nominal M-NOD~\eqref{eq:1}, which we denote by $F_0$, i.e., $\mathbf v_{t+1}=F_0(\mathbf v_t)$. By Theorem~\ref{thm:3.2}, $F_0$ admits the locally asymptotically stable period-2 orbit $\mathcal O=\{\bar{\mathbf v},-\bar{\mathbf v}\}$. Consequently, $F_0(\bar{\mathbf v})=-\bar{\mathbf v}$ and the corresponding two-step map satisfies $G_0(\bar{\mathbf v})=\bar{\mathbf v}$. Furthermore, letting $DG(\cdot)$ denote the Jacobian, local asymptotic stability of $\mathcal O$ implies that
$A_0\coloneqq DG_0(\bar{\mathbf v})$
is Schur stable, i.e., $\rho(A_0)<1$.

Because $G_\epsilon$ is smooth, in a local neighborhood of $(\bar{\mathbf v},0)$, we can express the perturbed map as $G_\epsilon(\mathbf v)=G_0(\mathbf v)+\epsilon S(\mathbf v,\epsilon)$, where $S$ is smooth and locally bounded. Although $\bar{\mathbf v}$ is a fixed point of the nominal map $G_0$, it need not remain a fixed point of $G_\epsilon$ for $\epsilon\neq0$. We therefore define the fixed-point residual $H(\mathbf v,\epsilon)\coloneqq G_\epsilon(\mathbf v)-\mathbf v$. Since $G_0(\bar{\mathbf v})=\bar{\mathbf v}$, we have $H(\bar{\mathbf v},0)=\mathbf{0}$. Moreover, differentiating $H$ with respect to $\mathbf v$ at $(\bar{\mathbf v},0)$ yields $D_{\mathbf v}H(\bar{\mathbf v},0)=DG_0(\bar{\mathbf v})-\mathbf I=A_0-\mathbf I$. Because $\rho(A_0)<1$, the eigenvalue $1$ is not in the spectrum of $A_0$; hence $A_0-\mathbf I$ is non-singular. By the Implicit Function Theorem, there exists $\bar{\epsilon}>0$ and a unique fixed point
$\bar{\mathbf v}_\epsilon$
depending smoothly on $\epsilon$ for
$|\epsilon|<\bar{\epsilon}$ such that $H(\bar{\mathbf v}_\epsilon,\epsilon)=\mathbf{0}$, with $\bar{\mathbf v}_\epsilon$ approaching $\bar{\mathbf v}$ as $\epsilon$ approaches $0$. Equivalently, $G_\epsilon(\bar{\mathbf v}_\epsilon)=\bar{\mathbf v}_\epsilon$. Finally, since $G_\epsilon(\bar{\mathbf v}_\epsilon)=\bar{\mathbf v}_\epsilon$, the fixed point $\bar{\mathbf v}_\epsilon$ generates the two-point orbit $\{\bar{\mathbf v}_\epsilon,F_\epsilon(\bar{\mathbf v}_\epsilon)\}$. By continuity, and since $F_0(\bar{\mathbf v})=-\bar{\mathbf v}\neq\bar{\mathbf v}$, these two points remain distinct for all sufficiently small $|\epsilon|$. 

Therefore, $\mathcal O_\epsilon=\{\bar{\mathbf v}_\epsilon,\tilde{\mathbf v}_\epsilon\}$,
with $\tilde{\mathbf v}_\epsilon=F_\epsilon(\bar{\mathbf v}_\epsilon)$,
constitutes a nontrivial period-2 orbit of the perturbed M-NOD map \eqref{eq:Feps_explicit_continuation}.
\end{proof}

Lemma~\ref{lem:4.1} established the existence and continuation of a nearby asymmetric period-2 orbit $\mathcal O_\epsilon$, with $\mathcal O_\epsilon$ approaching the nominal symmetric orbit $\mathcal O_0=\{\bar{\mathbf v},-\bar{\mathbf v}\}$ as $\epsilon$ approaches $0$. Theorem~\ref{thm:4.1} establishes its local asymptotic stability and derives explicit bounds guaranteeing the structural robustness of dynamic polarization under directed graphs with nonuniform influence weights.

\vspace{1mm}

\begin{theorem}\label{thm:4.1}
The period-2 orbit $\mathcal O_\epsilon\coloneqq\{\bar{\mathbf v}_\epsilon,\tilde{\mathbf v}_\epsilon\}$, whose existence is established in Lemma~\ref{lem:4.1} for the perturbed M-NOD~\eqref{eq:Feps_explicit_continuation} under directed graphs with nonuniform influence weights, is locally asymptotically stable whenever $|\epsilon|<\epsilon^\star\coloneqq\min\left\{\bar{\epsilon},\,1,\,{\lambda_{\min}(Q)}/{2c_2}\right\}$. The initial opinion state lies in its local basin of attraction, i.e., $0<\|\mathbf v_0-\bar{\mathbf v}_\epsilon\|<r_\epsilon$, where $r_\epsilon<\min\left\{1,\,{\lambda_{\min}(Q)}/{4c_3}\right\}$. Here, $Q$ is a symmetric positive definite matrix, while
$c_2>0$ and $c_3>0$ denote local perturbation and nonlinear-remainder
bounds, respectively. 

Consequently, asymmetric dynamic polarization
persists under sufficiently small directed weight perturbations.
\end{theorem}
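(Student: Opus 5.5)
We will prove local asymptotic stability of $\mathcal O_\epsilon$ through a quadratic Lyapunov function for the two-step map $G_\epsilon$, built from the nominal Jacobian and then perturbed. From Lemma~\ref{lem:4.1} we know $A_0\coloneqq DG_0(\bar{\mathbf v})$ is Schur stable. Hence the discrete Lyapunov equation
\[
A_0^\top P A_0 - P = -Q
\]
has, for any chosen symmetric $Q\succ 0$ (e.g.\ $Q=\mathbf I$), a unique symmetric solution $P\succ0$. We take $V(\mathbf e)=\mathbf e^\top P\mathbf e$ with error $\mathbf e_k\coloneqq\mathbf v_{2k}-\bar{\mathbf v}_\epsilon$ as the candidate. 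Stability of the fixed point $\bar{\mathbf v}_\epsilon$ of $G_\epsilon$ is equivalent to stability of the orbit $\mathcal O_\epsilon$, since $F_\epsilon$ is continuous and maps a neighbourhood of $\bar{\mathbf v}_\epsilon$ into a neighbourhood of $\tilde{\mathbf v}_\epsilon$.

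\textbf{Step 1: error recursion.} Using $G_\epsilon(\bar{\mathbf v}_\epsilon)=\bar{\mathbf v}_\epsilon$ and Taylor's theorem (valid since $G_\epsilon$ is polynomial in $\mathbf v$ and smooth in $\epsilon$), we write
\[
\mathbf e_{k+1}=A_\epsilon\mathbf e_k+R_\epsilon(\mathbf e_k),\qquad A_\epsilon\coloneqq DG_\epsilon(\bar{\mathbf v}_\epsilon).
\]
We then split $A_\epsilon=A_0+\epsilon B_\epsilon$. Smoothness of $DG$ in $(\mathbf v,\epsilon)$, together with $\bar{\mathbf v}_\epsilon\to\bar{\mathbf v}$ (note $\bar{\mathbf v}_\epsilon-\bar{\mathbf v}=O(\epsilon)$ by the implicit function theorem), gives $\|B_\epsilon\|\le b$ for $|\epsilon|<\min\{\bar\epsilon,1\}$. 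The remainder is quadratic: $\|R_\epsilon(\mathbf e)\|\le \kappa\|\mathbf e\|^2$ for $\|\mathbf e\|<1$, using a bound on the second derivative of $G_\epsilon$ on a compact neighbourhood. The restrictions $|\epsilon|<1$ and $r_\epsilon<1$ in the statement exactly provide these compact sets on which the constants are uniform.

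\textbf{Step 2: Lyapunov decrease.} We expand
\[
\Delta V=V(\mathbf e_{k+1})-V(\mathbf e_k)=-\mathbf e^\top Q\mathbf e+2\epsilon\,\mathbf e^\top B_\epsilon^\top P A_0\mathbf e+\epsilon^2\mathbf e^\top B_\epsilon^\top PB_\epsilon\mathbf e+2(A_\epsilon\mathbf e)^\top PR_\epsilon+R_\epsilon^\top PR_\epsilon.
\]
Using $|\epsilon|\le 1$, the $\epsilon$-terms are bounded by $c_2|\epsilon|\|\mathbf e\|^2$ with $c_2\coloneqq 2\|P\|\,b\,\|A_0\|+\|P\|\,b^2$. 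The remainder terms are bounded by $c_3\|\mathbf e\|^3$ for $\|\mathbf e\|<1$, with $c_3\coloneqq 2\|P\|(\|A_0\|+b)\kappa+\|P\|\kappa^2$. This yields
\[
\Delta V\le-\bigl(\lambda_{\min}(Q)-c_2|\epsilon|-c_3\|\mathbf e\|\bigr)\|\mathbf e\|^2.
\]
With $|\epsilon|<\lambda_{\min}(Q)/2c_2$ and $\|\mathbf e\|<\lambda_{\min}(Q)/4c_3$, this gives $\Delta V\le-\tfrac{\lambda_{\min}(Q)}{4}\|\mathbf e\|^2<0$ for $\mathbf e\ne0$, matching $\epsilon^\star$ and $r_\epsilon$.

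\textbf{Step 3: invariance and convergence.} The decrease of $V$ is only guaranteed inside the ball $\|\mathbf e\|<r_\epsilon$, so we must show trajectories stay there. We will take $r_\epsilon$ as the radius of a sublevel set $\{V<\lambda_{\min}(P)\rho^2\}$ contained in the ball $\|\mathbf e\|<\rho$, i.e.\ shrink $r_\epsilon$ by the factor $\sqrt{\lambda_{\min}(P)/\lambda_{\max}(P)}$ if needed. Since $V$ is nonincreasing, the trajectory remains in this sublevel set. Then $V(\mathbf e_{k+1})\le(1-\lambda_{\min}(Q)/4\lambda_{\max}(P))V(\mathbf e_k)$, which gives exponential convergence of $\mathbf v_{2k}\to\bar{\mathbf v}_\epsilon$, and hence by continuity $\mathbf v_{2k+1}\to\tilde{\mathbf v}_\epsilon$. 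Finally, the claim of asymmetric polarization in Definition~\ref{def:2.2} follows from Lemma~\ref{lem:4.1} together with this stability result.

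The main obstacle is Step 1: obtaining constants $b$ and $\kappa$ uniform in $\epsilon$. This requires controlling the drift $\bar{\mathbf v}_\epsilon-\bar{\mathbf v}$ inside $B_\epsilon$, and working on a fixed compact neighbourhood, with the required smallness absorbed into $\bar\epsilon$.
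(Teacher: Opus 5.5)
Your Lyapunov construction is the paper's own proof almost line by line: the nominal Lyapunov equation, $A_\epsilon^\star=A_0+O(\epsilon)$, the same $c_2$ and $c_3$, and the same thresholds. Your Step 3 sublevel-set argument is more careful than the paper's bare appeal to a Lyapunov theorem. But both arguments rest on a false premise: $A_0=DG_0(\bar{\mathbf v})$ is not Schur stable.

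Since $\mathbf W_\epsilon\mathbf 1_N=\mathbf 1_N$ and $\mathbf C_\epsilon\mathbf 1_N=\mathbf 0$, the map commutes with shifts along the consensus direction, $F_\epsilon(\mathbf v+s\mathbf 1_N)=F_\epsilon(\mathbf v)+s\mathbf 1_N$. Hence $DG_\epsilon(\mathbf v)\mathbf 1_N=\mathbf 1_N$ for every $\mathbf v$ and $\epsilon$. Explicitly, $DF_0(\pm\bar{\mathbf v})=\mathbf I-\gamma(1-3ap^2)\mathbf C_0$ has eigenvalue $1$ on $\mathbf 1_N$ and $2\bar{\gamma}-5$ on $\mathbf 1_N^\perp$. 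So $A_0$ has spectrum $\{1,(2\bar{\gamma}-5)^2\}$ and $\rho(A_0)=1$.

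The consequences are direct:
\begin{itemize}
\item Testing $A_0^\top PA_0-P=-Q$ on $\mathbf 1_N$ gives $0=-\mathbf 1_N^\top Q\mathbf 1_N<0$. So no $P$ exists, and your constants $c_2,c_3$ have nothing to be built from.
\item Your implicit-function claim $\bar{\mathbf v}_\epsilon-\bar{\mathbf v}=O(\epsilon)$ fails for the same reason, since $A_0-\mathbf I$ is singular.
\item The conclusion is false as literally stated. $\bar{\mathbf v}_\epsilon+s\mathbf 1_N$ is a fixed point of $G_\epsilon$ for every $s$, so initial states arbitrarily close to $\bar{\mathbf v}_\epsilon$ stay on the translated orbit $\mathcal O_\epsilon+s\mathbf 1_N$ forever.
\end{itemize}
The paper asserts the same $\rho(A_0)<1$, inferring it from Theorem~\ref{thm:3.2}, whose multiplier $[f'(p)]^2$ was computed only for the scalar reduction inside $\mathcal S$. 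So you inherited this gap rather than introduced it, but it is a gap.

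The repair is to work modulo $\mathrm{span}\{\mathbf 1_N\}$. Because $F_\epsilon$ commutes with these shifts, it descends to the coordinates $\Pi\mathbf v$ with $\Pi=\mathbf I-\tfrac1N\mathbf J$. There the nominal two-step Jacobian is $(2\bar{\gamma}-5)^2\mathbf I_{N-1}$, which is Schur exactly for $2<\bar{\gamma}<3$. The implicit function theorem and your Steps 1--3 then go through verbatim with an $(N-1)\times(N-1)$ matrix $P$.

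To make sure the continued orbit closes exactly, rather than only up to a drift along $\mathbf 1_N$, use the odd symmetry $F_\epsilon(-\mathbf v)=-F_\epsilon(\mathbf v)$. The equation $F_\epsilon(\mathbf v)+\mathbf v=\mathbf 0$ has Jacobian $DF_0(\bar{\mathbf v})+\mathbf I$ at $\epsilon=0$, with eigenvalues $2$ and $2\bar{\gamma}-4\neq0$. So it is solvable near $\bar{\mathbf v}$ in $\mathbb R^N$ and yields a genuine period-2 orbit.

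What you can then prove is that $\mathcal O_\epsilon$ is Lyapunov stable and attracts nearby trajectories modulo translation. That is, trajectories converge to some translate $\mathcal O_\epsilon+s\mathbf 1_N$, with $s=O(\|\mathbf v_0-\bar{\mathbf v}_\epsilon\|)$. This holds because the per-step mean increments depend only on $\Pi\mathbf v_t$, cancel over one period of the exact orbit, and are therefore summable once the transverse error decays geometrically. Asymptotic stability of $\mathcal O_\epsilon$ itself in $\mathbb R^N$ cannot be obtained.
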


\vspace{1mm}

\begin{proof}
By Lemma~\ref{lem:4.1}, the perturbed two-step map $G_\epsilon$ admits the fixed point $\bar{\mathbf v}_\epsilon$ associated with the orbit $\mathcal O_\epsilon$. It remains to prove the local asymptotic stability of this fixed point.

Let $A_0\coloneqq DG_0(\bar{\mathbf v})$. Since the nominal period-2 orbit $\mathcal O_0=\{\bar{\mathbf v},-\bar{\mathbf v}\}$ is locally asymptotically stable by Theorem~\ref{thm:3.2}, $A_0$ is Schur stable, i.e., $\rho(A_0)<1$. Hence, for any $Q=Q^\top>0$, there exists $P=P^\top>0$ satisfying
\begin{equation}
\label{eq:nominal_lyap_asym}
A_0^\top P A_0-P=-Q .
\end{equation}
Let $\lambda_Q\coloneqq\lambda_{\min}(Q)>0$. Define the perturbation error by $\mathbf e_t\coloneqq\mathbf v_t-\bar{\mathbf v}_\epsilon$. Since $\mathbf v_{t+2}=G_\epsilon(\mathbf v_t)$ and $G_\epsilon(\bar{\mathbf v}_\epsilon)=\bar{\mathbf v}_\epsilon$, we obtain $\mathbf e_{t+2}=G_\epsilon(\bar{\mathbf v}_\epsilon+\mathbf e_t)-G_\epsilon(\bar{\mathbf v}_\epsilon)$. A first-order Taylor expansion of $G_\epsilon$ around $\bar{\mathbf v}_\epsilon$ gives
\begin{equation}
\label{eq:taylor_perturbed_asym}
\mathbf e_{t+2}=A_\epsilon^\star\mathbf e_t+\mathbf r_t,
\end{equation}
where $A_\epsilon^\star\coloneqq DG_\epsilon(\bar{\mathbf v}_\epsilon)$ and the remainder satisfies $\|\mathbf r_t\|\!\le \!c_r\|\mathbf e_t\|^2$ for some local constant $c_r>0$.

Because $\bar{\mathbf v}_\epsilon$ depends smoothly on $\epsilon$ and approaches $\bar{\mathbf v}$ as $\epsilon$ approaches $0$, and $DG_\epsilon$ is smooth in $(\mathbf v,\epsilon)$, we have $A_\epsilon^\star=A_0+\mathcal O(\epsilon)$. Thus, defining $E_\epsilon\coloneqq A_\epsilon^\star-A_0$, there exists $c_E>0$ such that $\|E_\epsilon\|\le c_E|\epsilon|$ and $0<|\epsilon|\le1$. Using $A_\epsilon^\star=A_0+E_\epsilon$, the perturbation of the Lyapunov matrix satisfies
\begin{align*}
(A_\epsilon^\star)^\top P A_\epsilon^\star-P
&=
(A_0+E_\epsilon)^\top P(A_0+E_\epsilon)-P \nonumber\\
&\!=\!
A_0^\top \! P A_0-P
\!+\!
A_0^\top \! P E_\epsilon
\!+\!
E_\epsilon^\top \! P A_0
\!+\!
E_\epsilon^\top \! P E_\epsilon.
\end{align*}
Therefore, applying standard norm inequalities yields
\begin{align*}
\left\|
(A_\epsilon^\star)^\top P A_\epsilon^\star
-
A_0^\top P A_0
\right\|
&\le
2\|A_0\|\|P\|\|E_\epsilon\|
+
\|P\|\|E_\epsilon\|^2 \nonumber\\
&\le
\left(2\|A_0\|\|P\|c_E+\|P\|c_E^2\right)|\epsilon|,
\end{align*}
using $|\epsilon|\le1$, define $c_2\coloneqq2\|A_0\|\|P\|c_E+\|P\|c_E^2$. Then
\begin{equation}
\label{eq:c2_bound_asym}
\left\|
(A_\epsilon^\star)^\top P A_\epsilon^\star
-
A_0^\top P A_0
\right\|
\le
c_2|\epsilon| .
\end{equation}

Using \eqref{eq:nominal_lyap_asym} and \eqref{eq:c2_bound_asym}, along the error trajectory $\mathbf e_t$,
\begin{align}
\mathbf e_t^\top\!((A_\epsilon^\star)^\top \! P A_\epsilon^\star\!-\!P)\mathbf e_t
&\!=\!
-\mathbf e_t^\top \! Q\mathbf e_t
\!+\!
\mathbf e_t^\top\!\! ((A_\epsilon^\star)^\top \! P A_\epsilon^\star\!-\!A_0^\top \! P A_0)\mathbf e_t \nonumber\\
&\le
-\lambda_Q\|\mathbf e_t\|^2
+
c_2|\epsilon|\|\mathbf e_t\|^2 \nonumber\\
&=
-(\lambda_Q-c_2|\epsilon|)\|\mathbf e_t\|^2 .
\label{eq:matrix_negative_pre_asym}
\end{align}
If $|\epsilon|\!<\!\lambda_Q/(2c_2)$, then $\mathbf e_t^\top\!\left[(A_\epsilon^\star)^\top \! P A_\epsilon^\star\!-\!P\right]\mathbf e_t
\! \le \!
-\frac{\lambda_Q}{2}\|\mathbf e_t\|^2$.
\indent Since the period-2 orbit is a fixed point of $G_\epsilon$, its stability is analyzed through the two-step error dynamics $\mathbf e_t\mapsto\mathbf e_{t+2}$. Consider the quadratic Lyapunov function $V_\epsilon(\mathbf e_t)\! \coloneqq \! \mathbf e_t^\top P\mathbf e_t$. Substituting~\eqref{eq:taylor_perturbed_asym} into the Lyapunov difference gives
\begin{align*}
V_\epsilon(\mathbf e_{t+2})-V_\epsilon(\mathbf e_t)
&=
(A_\epsilon^\star\mathbf e_t+\mathbf r_t)^\top
P
(A_\epsilon^\star\mathbf e_t+\mathbf r_t)
-
\mathbf e_t^\top P\mathbf e_t \\
= \hspace{2.0em}&\!\!\!\!\!\!\!\!\!
\mathbf e_t^\top\!((A_\epsilon^\star)^\top \!P A_\epsilon^\star\!-\!P)\mathbf e_t
\!+\!2\mathbf e_t^\top \!(A_\epsilon^\star)^\top \!P\mathbf r_t
\!+\!\mathbf r_t^\top \! P\mathbf r_t.
\end{align*}
The remainder terms satisfy $
\left|
2\mathbf e_t^\top\!(A_\epsilon^\star)^\top\! P\mathbf r_t
\right|
\!\le\!
2\|A_\epsilon^\star\|\|P\|c_r\|\mathbf e_t\|^3$ and
$\left|
\mathbf r_t^\top P\mathbf r_t
\right|
\!\le\!
\|P\|c_r^2\|\mathbf e_t\|^4$. For $\|\mathbf e_t\|\!\le\!1$, $\|\mathbf e_t\|^4\!\le\!\|\mathbf e_t\|^3$. Since $A_\epsilon^\star$ remains bounded for sufficiently small $|\epsilon|$, there exists $c_3\!>\!0$ such that $V_\epsilon(\mathbf e_{t+2})\!-\!V_\epsilon(\mathbf e_t)
\!\le\!
-\frac{\lambda_Q}{2}\|\mathbf e_t\|^2
\!+\!
c_3\|\mathbf e_t\|^3$. Choose $r_\epsilon\!<\!\min\left\{1,\lambda_Q/(4c_3)\right\}$. Then, whenever $0\!<\!\|\mathbf e_t\|\!<\!r_\epsilon$, we have $c_3\|\mathbf e_t\|\!<\!\lambda_Q/4$, and the Lyapunov difference satisfies
\begin{equation}
\label{eq:strict_V_decrease_asym}
V_\epsilon(\mathbf e_{t+2})-V_\epsilon(\mathbf e_t)
\le
-\frac{\lambda_{\min}(Q)}{4}\|\mathbf e_t\|^2
\!<\!0.
\end{equation}
Thus, $\mathbf e_t=0$ is locally asymptotically stable for the two-step error dynamics (Theorem 4.1 in \cite{b21}). Since $\mathbf e_t=0$ corresponds to the fixed point $\bar{\mathbf v}_\epsilon$ of $G_\epsilon$, and $G_\epsilon$ is the two-step map of the perturbed M-NOD~\eqref{eq:Feps_explicit_continuation}, the orbit point $\bar{\mathbf v}_\epsilon$ is locally asymptotically stable under the two-step dynamics. The same argument applies to the second orbit point $\tilde{\mathbf v}_\epsilon$ by shifting the trajectory by one time step. Hence, the corresponding period-2 orbit $\mathcal O_\epsilon$ is locally asymptotically stable.
\end{proof}

In summary, the preceding analysis formally establishes that the emergence of dynamic polarization within the M-NOD framework is not a fragile artifact of perfectly symmetric network topologies. As guaranteed by Theorem~\ref{thm:4.1}, this oscillatory disagreement is structurally robust, persisting as a locally asymptotically stable asymmetric period-2 orbit even when the ideal assumptions of uniform influence weights and undirected communication are slightly perturbed.

While mathematically stable, this persistent dynamic polarization represents an undesirable state of sustained disagreement and cyclic deadlock in the collective opinion dynamics. To prevent the network from remaining trapped in such oscillatory behavior, active intervention is required. Consequently, the subsequent section introduces consensus control strategies that guarantee the elimination of both symmetric and asymmetric dynamic polarization, thereby steering the network toward a stable consensus state.


\section{Consensus Control of Dynamic Polarization}
Consider a network in a state of symmetric dynamic polarization (Definition~\ref{def:2.1}) as established in Theorem~\ref{thm:3.3}, which partitions the network into two internally synchronized opposing opinion clusters. Let these clusters be denoted by $\mathcal G_1$ and $\mathcal G_2$, with sizes $n_1=|\mathcal G_1|$ and $n_2=|\mathcal G_2|$, where $n_1+n_2=N$. Under the M-NOD dynamics~\eqref{eq:1}, the synchronized opinions of clusters $\mathcal G_1$ and $\mathcal G_2$ are denoted by $x_t$ and $z_t$, respectively.

Subsections~\ref{sec5a}, \ref{sec5b}, and~\ref{sec5c} propose three progressively less invasive opinion-control strategies to drive the network from dynamic polarization to consensus. The first strategy controls an entire polarized opinion cluster, the second controls the opinion of one agent from each polarized cluster, and the third controls the opinion of only a single agent in the entire network. This progression demonstrates that transition to consensus does not require large-scale intervention; rather, increasingly localized control actions are sufficient to break the cyclic deadlock of polarized opinion dynamics. Most notably, the final strategy establishes that influencing a single agent's opinion is mathematically sufficient to eliminate network-wide polarization and recover asymptotically stable consensus, highlighting the strong propagation of opinion influence through the network.

\vspace{-3mm}

\subsection{Network Consensus via Single-Cluster Control}\label{sec5a}
In this subsection, we propose controlling exclusively the cluster $\mathcal G_1$ by setting the reactivity rates to $\gamma_i(t)=0$ for all $i\in\mathcal G_1$, yielding $x_{t+1}=x_t=x_c$. Meanwhile, the uncontrolled cluster $\mathcal G_2$ retains the high reactivity rate $\gamma_h$ within the range~\eqref{eq:polarization_gamma_range}, under which symmetric dynamic polarization is inherently stable. Theorem~\ref{thm:5.1} establishes that controlling one polarized cluster to a fixed reference opinion suppresses the symmetric dynamic polarization, steering the uncontrolled agents to converge to this reference state and restoring a locally asymptotically stable consensus.

\vspace{1mm}

\begin{theorem}\label{thm:5.1}
For M-NOD~\eqref{eq:1} in a state of symmetric dynamic polarization (Definition~\ref{def:2.1}, Theorem~\ref{thm:3.3}), if the reactivity rates of the controlled cluster $\mathcal G_1$ are set to $\gamma_i(t)=0$ for all $i\in\mathcal G_1$, then the network is guaranteed to transition to a locally asymptotically stable consensus state (Definition~\ref{def:2.3}).
\end{theorem}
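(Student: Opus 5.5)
The plan is to reduce the controlled network to a single scalar error recursion and then repeat the linearization argument of Proposition~\ref{lem:3.3}. Since the network starts in symmetric dynamic polarization (Theorem~\ref{thm:3.3}), Lemma~\ref{lem:3.1} gives two internally synchronized clusters of equal size, $n_1=n_2=N/2$. With $\gamma_i(t)=0$ for $i\in\mathcal G_1$, equation \eqref{eq:1} gives $x_{t+1}=x_t=x_c$ for every controlled agent. The opinion $x_c$ is simply the value held at the switching instant.

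Next I would write the update of a generic agent $i\in\mathcal G_2$ over the complete graph with $\mathbf W=\frac{1}{N-1}(\mathbf J-\mathbf I)$, using that its $n_2-1$ cluster-mates share its opinion $z_t$. Its weighted neighbor aggregate is $\frac{n_1x_c+(n_2-1)z_t}{N-1}$, so its deviation is
\[
(\mathbf C\mathbf v_t)_i=\kappa\,(z_t-x_c),\qquad \kappa\coloneqq\tfrac{n_1}{N-1}=\tfrac{N}{2(N-1)}.
\]
Because the update rule is identical for every agent of $\mathcal G_2$, synchronization within $\mathcal G_2$ is preserved. This is the same invariance mechanism used in Lemma~\ref{lem:3.2}.

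Define the tracking error $e_t\coloneqq z_t-x_c$. Subtracting the constant $x_c$ gives the exact scalar map
\[
e_{t+1}=g(e_t)\coloneqq(1-\gamma_h\kappa)\,e_t+\gamma_h a\kappa^3 e_t^3 .
\]
Consensus at $x_c\mathbf 1_N$ (Definition~\ref{def:2.3}, with $v_c=x_c$) is equivalent to $e_t\to0$.

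The origin is a fixed point of $g$, with $g'(0)=1-\gamma_h\kappa$. Under \eqref{eq:polarization_gamma_range}, $\gamma_h\kappa=\gamma_h\frac{N}{2(N-1)}\in(1,\tfrac32)$, so $g'(0)\in(-\tfrac12,0)$ and $|g'(0)|<1$. Theorem~1.13 of \cite{b20} then gives local asymptotic stability of $e=0$. The control thus halves the effective scaled reactivity from $\bar\gamma\in(2,3)$ to $\bar\gamma/2\in(1,\tfrac32)$, which puts it back inside the consensus range of Proposition~\ref{lem:3.3}.

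I would also give an explicit basin. Where $|g'(e)|<1$ and $g$ is odd and monotone near $0$, we have $|g(e)|<|e|$ whenever $e^2<\frac{2-\gamma_h\kappa}{\gamma_h a\kappa^3}$, since $g(e)=-e$ exactly when $e^2=\frac{\gamma_h\kappa-2}{\gamma_h a\kappa^3}<0$, which has no real solution. This makes the attracting interval concrete. It should then be compared with the initial error $|e_0|=2p/c$ inherited from the orbit of Theorem~\ref{thm:3.2}, so that the statement covers switching from the polarized state, at least for $a$ and $\gamma_h$ in the relevant range.

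The main obstacle is the directions transverse to synchronization inside $\mathcal G_2$. The full Jacobian of the $\mathcal G_2$ block at consensus is $(1-\gamma_h)\mathbf I+\frac{\gamma_h}{N-1}(\mathbf J-\mathbf I)$, since the cubic term contributes nothing at zero deviation. On zero-sum perturbations within $\mathcal G_2$ this has eigenvalue $1-\bar\gamma$, which lies in $(-2,-1)$. So the consensus state is stable only relative to the cluster-synchronized invariant manifold, not in the full $\mathbb R^N$ topology.

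I would handle this by stating explicitly that the local asymptotic stability is understood on the synchronized two-cluster manifold established by Lemmas~\ref{lem:3.1}--\ref{lem:3.2}, which the polarized state already occupies and the dynamics leave invariant. I would then check whether the paper's intended statement needs the stronger claim. If it does, the natural first attempt is to lower the reactivity of $\mathcal G_2$ as well, for example requiring $\bar\gamma<2$ during the transient. That would go beyond what the theorem assumes.
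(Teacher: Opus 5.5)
Your reduction to $e_{t+1}=(1-\gamma_h\kappa)e_t+\gamma_h a\kappa^3e_t^3$ with $\kappa=\frac{N}{2(N-1)}$ and the bound $0<\gamma_h\kappa<\tfrac32$ is exactly the paper's argument (Appendix~\ref{app2}), which closes with $V_d=d_t^2$ and an explicit $\delta$ instead of linearization. Your transverse-instability remark is also correct: the paper's proof likewise concludes convergence only ``under the two-cluster reduction,'' so restricting the claim to the synchronized manifold matches what is actually proved, and you do not need to change $\mathcal G_2$'s reactivity. In your optional basin estimate, the binding boundary of $|g(e)|<|e|$ is $g(e)=e$, giving $e^2<1/(a\kappa^2)$; your interval is a valid subset, but ruling out $g(e)=-e$ only handles the other side, and since $|e_0|^2=\frac{4(\bar\gamma-2)}{\bar\gamma a c^2}<\frac{4}{ac^2}=\frac{1}{a\kappa^2}$, the switching state always lies in the basin for every $a>0$ and $\bar\gamma\in(2,3)$.
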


\vspace{1mm}

\begin{proof}
Proof is provided in Appendix \ref{app2}.
\end{proof}


\subsection{Consensus via Single-Agent Per-Cluster Control}\label{sec5b}
In this approach, one agent $i_1\in\mathcal G_1$ and one agent $i_2\in\mathcal G_2$, one from each polarized clusters are controlled to maintain a common reference opinion $x_c$. The remaining agents evolve according to the M-NOD~\eqref{eq:1} retain the constant reactivity rate $\gamma_h$ within the dynamic-polarization regime~\eqref{eq:polarization_gamma_range}. Under this two-cluster reduction, let $z_t$ denote the synchronized opinion shared by all uncontrolled agents.

Theorem~\ref{thm:5.2} establishes that the proposed single-agent-per-cluster control strategy guarantees a transition from dynamic polarization to full network consensus.

\vspace{1mm}

\begin{theorem}\label{thm:5.2}
For M-NOD~\eqref{eq:1} in a state of symmetric dynamic polarization (Theorem~\ref{thm:3.3}) with a total network size $N>3$, if the opinion of one agent from each polarized cluster, $\mathcal G_1$ and $\mathcal G_2$, is controlled to maintain any common reference opinion, then the network is guaranteed to transition from symmetric dynamic polarization to a locally asymptotically stable consensus (Definition~\ref{def:2.3}).
\end{theorem}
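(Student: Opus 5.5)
The plan is to follow the two-cluster reduction set up just before the statement. This turns the closed-loop network into a scalar map for the common opinion $z_t$ of the $N-2$ uncontrolled agents. We then linearize that map at the consensus point $z=x_c$ and show that the admissible polarization range \eqref{eq:polarization_gamma_range} lies inside its stability interval exactly when $N>3$.

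\emph{Step 1 (reduced closed loop).} The controlled agents $i_1,i_2$ are held at $x_c$. Each uncontrolled agent $i$ on the complete graph with weights $\frac{1}{N-1}$ sees two neighbors at $x_c$ and $N-3$ neighbors at $z_t$, so
\[
(\mathbf C\mathbf v_t)_i=z_t-\tfrac{2x_c+(N-3)z_t}{N-1}=\tfrac{2}{N-1}(z_t-x_c).
\]
The right-hand side of \eqref{eq:1} for agent $i$ is then identical for all uncontrolled agents. Hence the synchronized set $\{v_j=z\ \forall j\notin\{i_1,i_2\}\}$ is invariant. Setting $e_t\coloneqq z_t-x_c$ and $\kappa\coloneqq\tfrac{2}{N-1}$, and using $v_i-\gamma(\mathbf C\mathbf v)_i$, I expect
\[
e_{t+1}=(1-\gamma_h\kappa)\,e_t+\gamma_h a\,\kappa^3 e_t^3 .
\]
This has the same structure as the scalar map \eqref{eq:4}, with $\bar\gamma$ replaced by $\gamma_h\kappa$ and $a$ replaced by $a\kappa^2$.

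\emph{Step 2 (local stability).} The point $e=0$ is a fixed point for every $x_c$, and it corresponds to consensus $\mathbf v=x_c\mathbf 1_N$. The derivative there is $1-\gamma_h\kappa$, since the cubic term does not contribute. As in Proposition~\ref{lem:3.3} (Theorem~1.13 of \cite{b20}), local asymptotic stability holds iff $0<\gamma_h\kappa<2$, i.e.\ $0<\gamma_h<N-1$. From \eqref{eq:polarization_gamma_range} we have $\gamma_h<3(N-1)/N$, and $3(N-1)/N\le N-1$ iff $N\ge 3$, with strict inequality for $N>3$. Hence for $N>3$ every $\gamma_h$ in the polarization regime gives $|1-\gamma_h\kappa|<1$ (indeed $\gamma_h\kappa<6/N<2$). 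Then $e_t\to0$, so $z_t\to x_c$ and $\lim_{t\to\infty}\mathbf C\mathbf v_t=\mathbf 0$, which is consensus in the sense of Definition~\ref{def:2.3}, with $v_c=x_c$ rather than $\bar v_0$ because the control breaks average preservation. I will also note that no period-2 orbit of the reduced map can persist near $0$, since the linear multiplier is bounded away from $-1$. This is what breaks the flip mechanism of Theorem~\ref{thm:3.1}.

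\emph{Main obstacle.} The delicate point is justifying the two-cluster reduction itself, that is, the stability transverse to the synchronized set. Linearizing the full $(N-2)$-dimensional uncontrolled subsystem at consensus gives $(1-\gamma_h)\mathbf I+\gamma_h\mathbf W_{uu}$ with $\mathbf W_{uu}=\frac{1}{N-1}(\mathbf J-\mathbf I)$. Its eigenvalue on $\mathbf 1_{N-2}$ is $1-\gamma_h\kappa$, as above. On the zero-sum directions it is $1-\gamma_h N/(N-1)=1-\bar\gamma$, which has modulus greater than $1$ in the polarization regime. So the argument must be stated as local asymptotic stability within the invariant synchronized manifold, which is exactly the two-cluster reduction adopted in the setup.

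I would make this explicit in two parts. First, invariance of the manifold follows from the identical update rules. Second, I would argue that if the two clusters are synchronized initially, the initial states under symmetric dynamic polarization keep the trajectory on that manifold. Alternatively, if time permits, I would try to enlarge the controlled set or use per-agent gains to damp the transverse modes. Once the reduction is accepted, the remaining steps are the routine scalar computation above.
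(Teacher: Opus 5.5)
Your Steps 1--2 are essentially the paper's appendix proof: the same $\kappa=\alpha_p=2/(N-1)$ and the same bound $\gamma_h\kappa<6/N<2$. The only difference is that you use the linear multiplier test where the paper uses $V_d=d^2$ on a $\delta$-ball. Your transverse computation is also correct ($1-\gamma_h\kappa$ on $\mathbf 1_{N-2}$, $1-\bar\gamma$ on zero-sum directions). It exposes what the paper hides by simply positing one common opinion $z_t$ for all uncontrolled agents.

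The gap is in your proposed repair. A network in symmetric dynamic polarization is \emph{not} on the manifold where all uncontrolled agents agree. At activation, $A\coloneqq\mathcal G_1\setminus\{i_1\}$ and $B\coloneqq\mathcal G_2\setminus\{i_2\}$, each with $N/2-1\ge 1$ agents, sit at $\bar v_0\pm p/c$. Their difference is exactly a zero-sum direction, the one with multiplier $1-\bar\gamma$. So the single-$z_t$ reduction never applies to the polarized initial state. Your step ``the initial states under symmetric dynamic polarization keep the trajectory on that manifold'' is false. Likewise, pinning does not ``break the flip mechanism'': that mechanism lives in the zero-sum directions, which pinning leaves untouched.

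In fact the conclusion itself fails. Write the uncontrolled opinions as $x_c+s_t+w_t$ on $A$ and $x_c+s_t-w_t$ on $B$; this is an invariant subspace. Then $(\mathbf C\mathbf v_t)_{A,B}=\kappa s_t\pm c\,w_t$, with $c=N/(N-1)$ and $\bar\gamma=\gamma_h c\in(2,3)$. Setting $y_t\coloneqq c\,w_t$ gives
\[
\begin{aligned}
s_{t+1}&=\bigl(1-\gamma_h\kappa+3\gamma_h a\kappa\,y_t^2\bigr)s_t+\gamma_h a\kappa^3 s_t^3,\\
y_{t+1}&=(1-\bar\gamma)y_t+\bar\gamma a\,y_t^3+3\bar\gamma a\kappa^2 s_t^2\,y_t .
\end{aligned}
\]
At $s=0$ the $y$-recursion is exactly \eqref{eq:4}. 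So the closed loop has the period-2 orbit $s=0$, $y=\pm p$, which is the original polarization recentred at $x_c$. For $x_c=\bar v_0$ the activation state lies exactly on this orbit, so consensus is never reached.

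This orbit is also attracting. The cross-partials vanish at $s=0$. The two-step multipliers are $(2\bar\gamma-5)^2$, in $y$ and in every intra-cluster desynchronization direction, and $\bigl(1-4(3-\bar\gamma)/N\bigr)^2$ in $s$. Both are $<1$ for $2<\bar\gamma<3$ and $N\ge 4$. Hence the orbit is locally asymptotically stable in the full uncontrolled state and also captures the activation state whenever $x_c$ is near $\bar v_0$.

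So no argument can prove the statement as written. What you, and the paper, actually establish is convergence only for initial data in which all $N-2$ uncontrolled agents are already synchronized, and stability only relative to that manifold. Your fallback of enlarging the controlled set does not help either. The zero-sum eigenvalue of the uncontrolled block is $1-\bar\gamma$ whenever at least two agents remain uncontrolled, no matter how many are pinned. Only lowering the uncontrolled agents' reactivity below $\gamma_c$ would remove it, and that is a different theorem.
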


\begin{proof}
Proof is provided in Appendix \ref{app3}.
\end{proof}

\vspace{-2mm}

\begin{figure*}[t]
\centering

\includegraphics[width=0.48\textwidth]{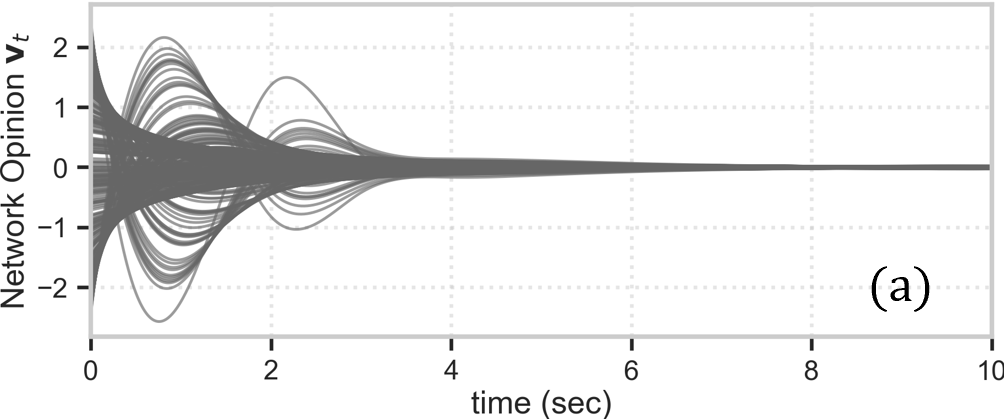}
\hfill
\includegraphics[width=0.48\textwidth]{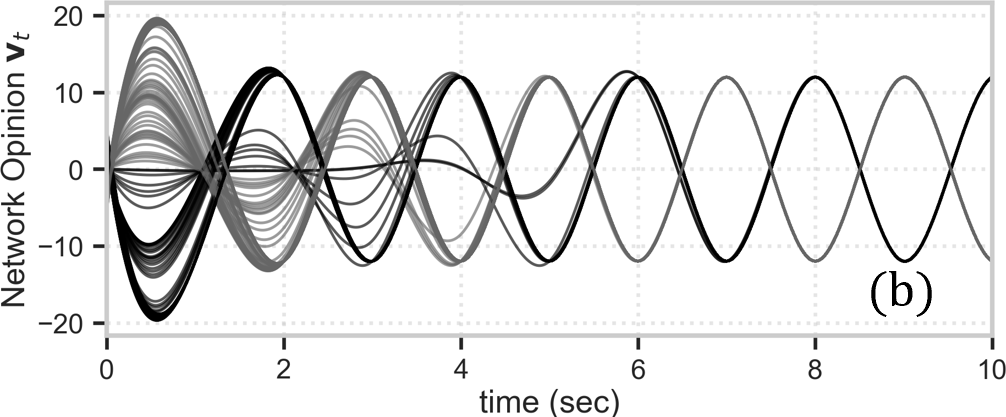}

\vspace{1mm}

\includegraphics[width=0.48\textwidth]{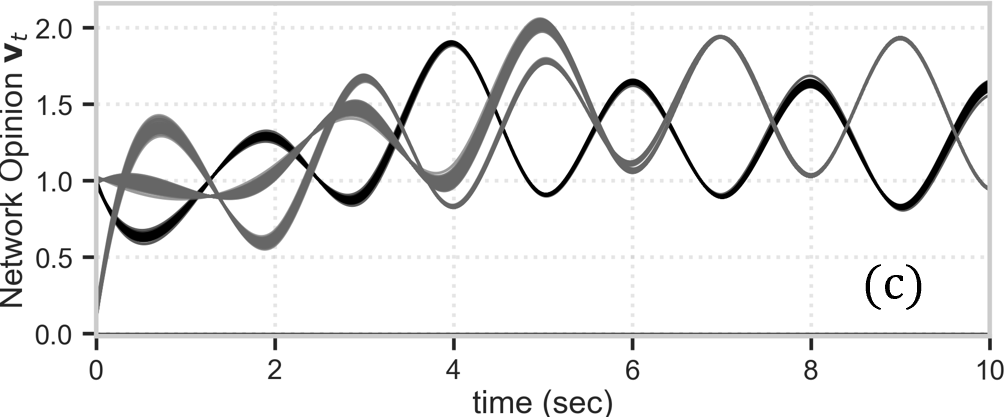}
\hfill
\includegraphics[width=0.48\textwidth]{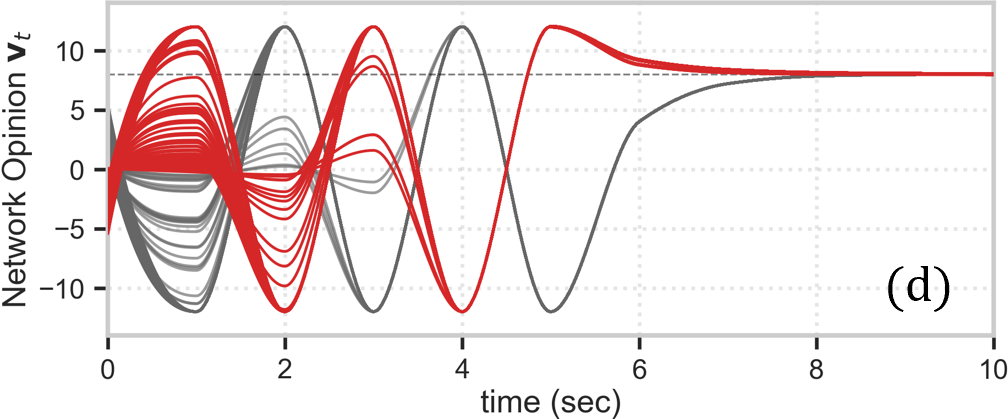}

\vspace{1mm}

\includegraphics[width=0.48\textwidth]{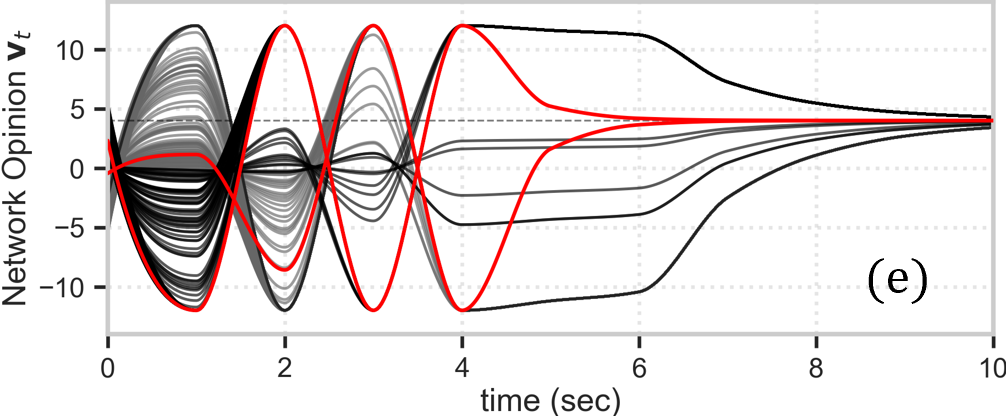}
\hfill
\includegraphics[width=0.48\textwidth]{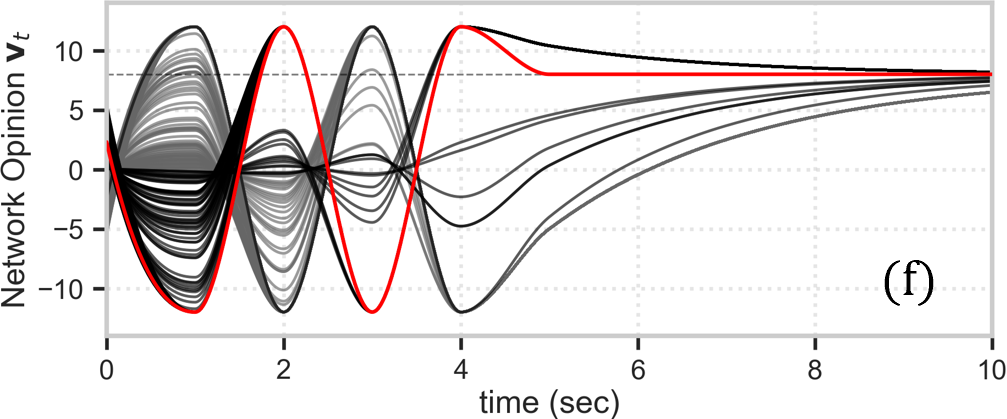}

\vspace{1mm}

\includegraphics[width=0.48\textwidth]{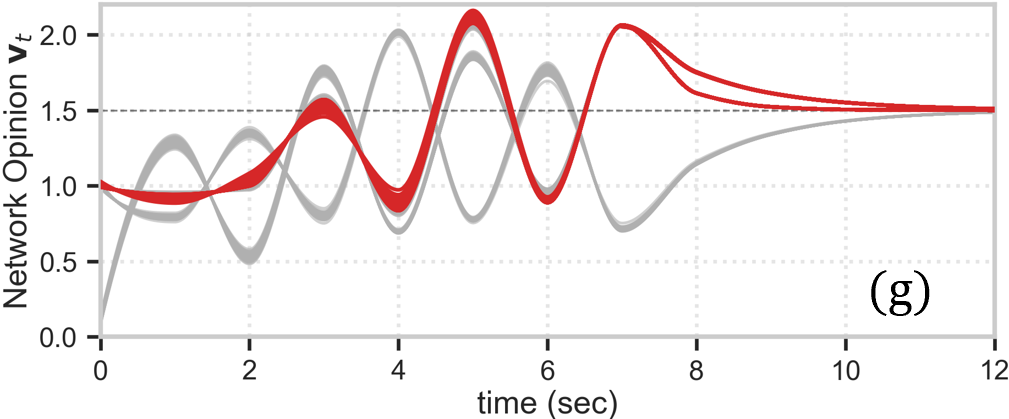}
\hfill
\includegraphics[width=0.48\textwidth]{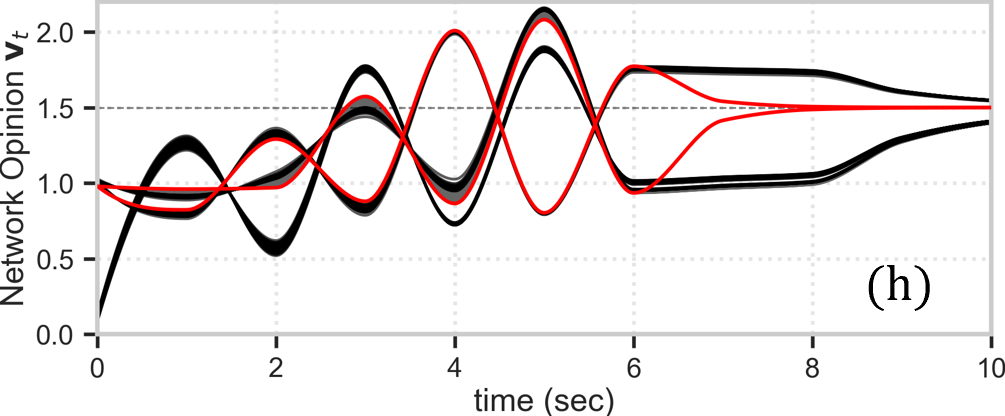}

\caption{Simulation results substantiating the theoretical emergence and consensus control of dynamic polarization in M-NOD with $N=58$, $a=0.6$, and initial opinions sampled uniformly from $[-4,4]$. (a) Consensus under the low-reactivity regime $\gamma=1.4$ (Definition~\ref{def:2.3}, Proposition~\ref{lem:3.3}). (b) Symmetric dynamic polarization exhibiting balanced bipartite clustering under $\gamma_h=2.4$ (Definition~\ref{def:2.1}, Theorem~\ref{thm:3.3}). (c) Asymmetric dynamic polarization under directed nonuniform weights with $\epsilon=0.08$ (Definition~\ref{def:2.2}, Theorem~\ref{thm:4.1}). (d)--(f) Consensus control from symmetric dynamic polarization using single-cluster control (Theorem~\ref{thm:5.1}), single-agent-per-cluster control (Theorem~\ref{thm:5.2}), and single-agent control (Theorem~\ref{thm:5.3}), respectively. (g)--(h) Consensus control from asymmetric dynamic polarization using single-cluster control and single-agent-per-cluster control, respectively (Corollary~\ref{cor:5.4}). In all control cases, the black dashed line denotes the reference opinion $x_c$, and control is activated at $t=5$ s. Together, these simulations confirm the analytically predicted bifurcation behavior and demonstrate the effectiveness of the proposed localized control strategies in breaking cyclic deadlock to restore consensus.}
\label{fig:1}
\vspace{-6mm}
\end{figure*}

\subsection{Network Consensus via Single-Agent Control}\label{sec5c}
Finally, we propose a strategy where only one agent $i_p\in\mathcal G_1$ from a single polarized cluster is controlled to maintain a fixed reference opinion $x_c$. All remaining agents evolve according to the M-NOD~\eqref{eq:1} with a constant reactivity rate $\gamma_h$ within the dynamic-polarization regime~\eqref{eq:polarization_gamma_range}. Under the corresponding reduced representation, let $z_t$ denote the synchronized opinion shared by all uncontrolled agents.

Theorem~\ref{thm:5.3} establishes that this single-agent control strategy is sufficient to guarantee a transition from dynamic polarization to full network consensus.

\vspace{1mm}

\begin{theorem}\label{thm:5.3}
For the M-NOD~\eqref{eq:1} in a state of symmetric dynamic polarization (Theorem~\ref{thm:3.3}) with a total network size $N\ge 2$, if the opinion of only one agent from one of the two polarized clusters is controlled to maintain any fixed reference opinion, then the network is guaranteed to transition from symmetric dynamic polarization to a locally asymptotically stable consensus (Definition~\ref{def:2.3}).
\end{theorem}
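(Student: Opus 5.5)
Throughout I work on the complete graph with uniform weights, $\mathbf W=\tfrac{1}{N-1}(\mathbf J-\mathbf I)$, and inside the reduced representation introduced before the statement: the controlled agent $i_p$ is held at $x_c$, and the remaining $N-1$ agents share the synchronized opinion $z_t$. The plan is to reduce the closed loop to a scalar error map about $x_c$ and check that its fixed point $0$ is a hyperbolic sink for every $\gamma_h$ in \eqref{eq:polarization_gamma_range}. I do this in the same spirit as Proposition~\ref{lem:3.3}, but now with the reference fixed by the anchored agent rather than by the conserved average.

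\emph{Step 1 (reduced map).} First I check that the synchronized set $\{v_{i_p}=x_c,\ v_j=z \text{ for all } j\neq i_p\}$ is invariant. This holds because every uncontrolled agent sees the same aggregate, namely $x_c/(N-1)$ from the anchored agent plus $(N-2)z/(N-1)$ from the others. The deviation of each uncontrolled agent is therefore
\[
(\mathbf C\mathbf v)_j=z-\tfrac{x_c+(N-2)z}{N-1}=\tfrac{z-x_c}{N-1}.
\]
Substituting into \eqref{eq:1} and setting $e_t\coloneqq z_t-x_c$, I obtain the scalar recursion
\[
e_{t+1}=\Bigl(1-\tfrac{\gamma_h}{N-1}\Bigr)e_t+\tfrac{\gamma_h a}{(N-1)^3}\,e_t^3 .
\]
This has the same cubic form as \eqref{eq:4}, but the effective reactivity is $\gamma_h/(N-1)$ instead of $\bar\gamma=\gamma_h N/(N-1)$. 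Anchoring removes the $-\tfrac{1}{N-1}$ self-reflection that produced the factor $N/(N-1)$ in Lemma~\ref{lem:3.2}.

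\emph{Step 2 (local stability).} The fixed point $e^*=0$ corresponds to the consensus $\mathbf v=x_c\mathbf 1_N$ of Definition~\ref{def:2.3}. Its multiplier is $1-\gamma_h/(N-1)$, and by Theorem~1.13 of \cite{b20} it suffices to show $0<\gamma_h<2(N-1)$. From \eqref{eq:polarization_gamma_range} we have $\gamma_h<3(N-1)/N$, and $3(N-1)/N\le 2(N-1)$ exactly when $N\ge 3/2$, so the condition holds for every $N\ge2$. In fact $\gamma_h/(N-1)\in(2/N,3/N)$, so the multiplier lies in $(1-3/N,\,1-2/N)\subset(-1,1)$. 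For $N=2$ the uncontrolled agent is alone, the multiplier lies in $(-1/2,0)$, and the argument goes through unchanged. The cubic term is $\mathcal O(e^3)$, so it does not affect local asymptotic stability. If an explicit basin is wanted, the nonzero fixed points $e=\pm (N-1)\sqrt{1/a}$ give a radius: for $|e|$ below them the iteration contracts toward $0$. Finally, since the anchored agent stays at $x_c$ and $z_t\to x_c$, I conclude $\mathbf v_t\to x_c\mathbf 1_N$ and $\mathbf C\mathbf v_t\to\mathbf 0$.

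\emph{Main obstacle.} The delicate step is justifying the restriction to the synchronized representation. If I linearize the full $(N-1)$-dimensional uncontrolled subsystem at $x_c\mathbf 1$, the zero-sum directions among uncontrolled agents have multiplier $1-\gamma_h-\gamma_h/(N-1)=1-\bar\gamma$. This is below $-1$ throughout \eqref{eq:polarization_gamma_range}, so stability cannot be claimed transversally to the synchronized set in full generality. I would therefore state, and prove, stability relative to the invariant synchronized manifold, which is the reduced representation the theorem refers to, and invoke Step 1's invariance to keep the trajectory on it. I would remark explicitly that the transverse direction is not covered by this argument.
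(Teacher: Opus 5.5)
Your Steps 1--2 reproduce the paper's proof. The paper likewise assumes a common opinion $z_t$ for all uncontrolled agents and derives $(\mathbf C\mathbf v_t)_k=\tfrac{1}{N-1}(z_t-x_c)$. It obtains the same cubic map for $d_t=z_t-x_c$ and the same bound $\gamma_h/(N-1)<3/N<2$. It closes with $V_d=d_t^2$ and an unspecified $\delta$ rather than Theorem~1.13. Your basin $|e|<(N-1)/\sqrt a$ is a correct addition, since there $f(e)/e=1-\tfrac{\gamma_h}{N-1}\bigl(1-\tfrac{a e^2}{(N-1)^2}\bigr)\in(-1,1)$. Your transverse computation is also correct. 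It exposes something the paper's proof never addresses: the synchronization assumption is simply asserted.

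The gap is that your proposed repair does not prove the statement, and for $N\ge 3$ the statement cannot be proved at all.

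First, invariance of the synchronized set only helps trajectories that start on it, and a symmetric-dynamic-polarization state does not. At activation, the $N/2-1$ uncontrolled agents of $\mathcal G_1$ and the $N/2$ agents of $\mathcal G_2$ hold opposite values. Track the two group values $x_t,z_t$. Then $(\mathbf C\mathbf v_t)_{\mathcal G_1\setminus\{i_p\}}-(\mathbf C\mathbf v_t)_{\mathcal G_2}=\tfrac{N}{N-1}(x_t-z_t)$, hence $x_{t+1}-z_{t+1}=(1-\bar\gamma)(x_t-z_t)+(\text{cubic terms})$. The polarization itself sits in exactly the zero-sum direction you identified. So a stability result relative to the synchronized manifold says nothing about the network leaving dynamic polarization.

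Second, $1-\bar\gamma\in(-2,-1)$ is an eigenvalue, with multiplicity $N-2$, of the full closed-loop Jacobian at $x_c\mathbf 1_N$. This is the only possible consensus once agent $i_p$ is pinned. So for every $N\ge 3$ (every admissible even $N\ge4$), the controlled consensus is unstable by linearization, not locally asymptotically stable. Only $N=2$, with a single uncontrolled agent, is genuinely covered by the scalar reduction.

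Rather than presenting the restricted result as a proof of the theorem, say explicitly what the argument (yours and the paper's) establishes. It covers only the case $N=2$, or initial states on the synchronized manifold, and the latter excludes the polarized states the theorem is about.
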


\vspace{1mm}

\begin{proof}
The controlled agent $i_p\in\mathcal G_1$ maintains the fixed reference opinion $x_c$, i.e., $v_{i_p}(t+1)=v_{i_p}(t)=x_c$. Let $z_t$ denote the common opinion of all uncontrolled agents $k\neq i_p$, and define the disagreement $d_t\coloneqq z_t-x_c$. Consequently, consensus control reduces to analyzing the uncontrolled-agent dynamics and proving the stability of the disagreement dynamics.

For any uncontrolled agent $k$, the deviation operator is given by $(C\mathbf v_t)_k=v_k(t)-\frac{1}{N-1}\sum_{j\neq k}v_j(t)$. Since the neighbors of agent $k$ consist of one controlled agent with opinion $x_c$ and $N-2$ uncontrolled agents with opinion $z_t$, we have $\sum_{j\neq k}v_j(t)=x_c+(N-2)z_t$. Substituting this expression into the deviation operator gives $(C\mathbf v_t)_k=z_t-\frac{1}{N-1}(x_c+(N-2)z_t)$. Rearranging terms yields $(C\mathbf v_t)_k=z_t\!\left(1-\frac{N-2}{N-1}\right)-\frac{1}{N-1}x_c$. Using $1-\frac{N-2}{N-1}=\frac{1}{N-1}$, we obtain $(C\mathbf v_t)_k=\frac{1}{N-1}(z_t-x_c)$. Defining $\alpha_p\coloneqq\frac{1}{N-1}$ and recalling that $d_t\coloneqq z_t-x_c$, it follows that $(C\mathbf v_t)_k=\alpha_p d_t$.

For every uncontrolled agent, $\gamma_k(t)=\gamma_h$. Hence, substituting $(C\mathbf v_t)_k=\alpha_p d_t$ into the M-NOD update~\eqref{eq:1} gives
\begin{equation}
\label{eq:z_update_single_agent}
z_{t+1}
=
z_t-\gamma_h\alpha_p d_t
+
\gamma_h a\alpha_p^3 d_t^3.
\end{equation}
Since $d_{t+1}=z_{t+1}-x_c$, subtracting $x_c$ from~\eqref{eq:z_update_single_agent} yields
\begin{equation}
\label{eq:d_map_single_agent}
d_{t+1}
=
(1-\gamma_h\alpha_p)d_t
+
\gamma_h a\alpha_p^3 d_t^3.
\end{equation}

Using the dynamic-polarization range~\eqref{eq:polarization_gamma_range}, the largest admissible value of $\gamma_h$ is $\bar{\gamma}_h\coloneqq\frac{3(N-1)}{N}$. Substituting $\alpha_p=\frac{1}{N-1}$ gives $\gamma_h\alpha_p<\bar{\gamma}_h\alpha_p=\frac{3(N-1)}{N}\cdot\frac{1}{N-1}=\frac{3}{N}<2$ for $N\ge 2$. Since $\gamma_h>0$ and $\alpha_p>0$, we obtain $0<\gamma_h\alpha_p<2$.

Define $\lambda_p\coloneqq1-\gamma_h\alpha_p$ and $\beta_p\coloneqq\gamma_h a\alpha_p^3$. Then $|\lambda_p|<1$, and~\eqref{eq:d_map_single_agent} becomes
$
d_{t+1}
=
\lambda_p d_t+\beta_p d_t^3.
$
Choose $\delta>0$ such that $|\lambda_p|+|\beta_p|\delta^2<1$. Then, for any $0<|d_t|<\delta$, taking absolute values in the above recursion gives $|d_{t+1}|=|\lambda_p d_t+\beta_p d_t^3|$. Using the triangle inequality yields $|d_{t+1}|\le (|\lambda_p|+|\beta_p|d_t^2)|d_t|$. Since $|d_t|<\delta$, the multiplier satisfies $|\lambda_p|+|\beta_p|d_t^2<|\lambda_p|+|\beta_p|\delta^2<1$, and therefore $|d_{t+1}|<|d_t|$.

Squaring both sides gives $d_{t+1}^2\!<\!d_t^2$. Thus, with $V_d(d_t)\coloneqq d_t^2$, we have $V_d(d_{t+1})\!<\!V_d(d_t)$ for all $0\!<\!|d_t|\!<\!\delta$. Therefore, the equilibrium $d_t\!=\!0$ is locally asymptotically stable (Theorem 4.1 in \cite{b21}) for the reduced disagreement dynamics. Since $d_t\!=\!z_t-x_c$, stability of $d_t=0$ implies $z_t\to x_c$. Because the controlled agent maintains the fixed reference opinion $x_c$, all agents converge to $x_c$, and the network locally converges to the consensus state $\mathbf v_t\!=\!x_c\mathbf 1_N$ (Definition~\ref{def:2.3}).
\end{proof}


\subsection{Consensus from Asymmetric Dynamic Polarization}
The preceding control strategies establish consensus control for the M-NOD in a state of symmetric dynamic polarization. By employing the same smooth-continuation argument used in Section~\ref{sec4}, we now establish that this consensus control persists under sufficiently small directed weight perturbations, guaranteeing suppression of asymmetric dynamic polarization and transition to consensus, established in Corollary~\ref{cor:5.4}.

\vspace{1mm}

\begin{corollary}\label{cor:5.4}
Applying the control strategies established in Theorems~\ref{thm:5.1}--\ref{thm:5.3} to the perturbed M-NOD exhibiting asymmetric dynamic polarization (Theorem~\ref{thm:4.1}) preserves the local asymptotic stability of the controlled consensus equilibrium under sufficiently small directed weight perturbations. Consequently, the network is guaranteed to transition from asymmetric dynamic polarization (Definition~\ref{def:2.2}) to consensus (Definition~\ref{def:2.3}).
\end{corollary}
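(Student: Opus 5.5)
The plan is to treat each controlled consensus equilibrium from Theorems~\ref{thm:5.1}--\ref{thm:5.3} as a \emph{hyperbolic} fixed point of a smooth map, and then carry it over to $\epsilon\neq0$ using the same continuation and Lyapunov-robustness machinery as Lemma~\ref{lem:4.1} and Theorem~\ref{thm:4.1}.

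\textbf{Step 1: the controlled closed-loop map and its equilibrium.} For each strategy, the controlled perturbed map $F^{c}_\epsilon$ is built from $\mathbf W_\epsilon=\mathbf W_0+\epsilon\mathbf\Delta$ and $\mathbf C_\epsilon$. Controlled agents are held at $x_c$ (or have $\gamma_i=0$); the remaining agents follow \eqref{eq:Feps_explicit_continuation}. This map is polynomial in $\mathbf v$ and smooth in $\epsilon$.

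The key observation uses $\mathbf\Delta\mathbf 1_N=\mathbf 0$. It gives $\mathbf C_\epsilon\mathbf 1_N=\mathbf 0$ and $\mathbf W_\epsilon\mathbf 1_N=\mathbf 1_N$. Hence $x_c\mathbf 1_N$ satisfies $F^c_\epsilon(x_c\mathbf 1_N)=x_c\mathbf 1_N$ exactly, for every $\epsilon$. So, unlike Lemma~\ref{lem:4.1}, no implicit-function argument is needed to locate the equilibrium: the consensus point does not move.

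\textbf{Step 2: nominal Schur stability in the full state space.} Let $A^c_0\coloneqq DF^c_0(x_c\mathbf 1_N)$. Because $\mathbf C_0\mathbf 1_N=\mathbf 0$, the cubic term contributes nothing to this Jacobian. Thus $A^c_0$ is the linear part $(1-\gamma_h)\mathbf I+\gamma_h\mathbf W_0$ restricted to the uncontrolled coordinates, with the controlled rows replaced by identity rows acting on fixed values.

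The reduced analyses in Theorems~\ref{thm:5.1}--\ref{thm:5.3} only establish stability along the synchronized direction. For robustness we need the full uncontrolled block to be Schur. With $m$ uncontrolled agents in the complete graph, this block is $(1-\gamma_h)\mathbf I_m+\frac{\gamma_h}{N-1}(\mathbf J_m-\mathbf I_m)$, with eigenvalues:
\begin{itemize}
\item along the synchronized direction, $1-\gamma_h\frac{N-m}{N-1}$, which is the multiplier found in the theorems;
\item on the $(m-1)$-dimensional mean-zero complement, $1-\gamma_h\frac{N}{N-1}=1-\bar\gamma_h$.
\end{itemize}

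\textbf{Main obstacle.} Since $\bar\gamma_h\in(2,3)$, the eigenvalue $1-\bar\gamma_h$ lies in $(-2,-1)$, so the intra-group differences among uncontrolled agents are \emph{linearly unstable}. I would handle this as follows.
\begin{itemize}
\item \textbf{Strategy of Theorem~\ref{thm:5.1}.} Here $m=n_2=N/2$ with cluster synchronization. I would interpret local stability relative to the cluster-synchronization manifold, which $\mathbf\Delta\mathbf 1_N=\mathbf 0$ is argued in Section~\ref{sec4} to preserve.
\item \textbf{Theorem~\ref{thm:5.3}, $N=2$.} Here $m=1$, there is no complement, and the Jacobian is genuinely scalar.
\item \textbf{General case.} I would first try to restrict to the invariant synchronized subspace on which the theorems were proved. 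The stated stability is then stability of the reduced scalar map $d\mapsto\lambda d+\beta d^3$ with $|\lambda|<1$. The corollary concerns that reduced map, whose perturbed version has a multiplier $\lambda_\epsilon=\lambda+\mathcal O(\epsilon)$.
\end{itemize}
The delicate point is that under $\mathbf\Delta$ the uncontrolled agents need not remain exactly synchronized. I would therefore define the reduced state as the projection onto the synchronized direction and bound the leakage into the complement by $\mathcal O(\epsilon)$.

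\textbf{Step 3: robustness via the Lyapunov argument of Theorem~\ref{thm:4.1}.} Given a Schur nominal Jacobian $A^c_0$, on the reduced or invariant subspace as settled above, solve $(A^c_0)^\top P A^c_0-P=-Q$. Bound $\|DF^c_\epsilon(x_c\mathbf 1_N)-A^c_0\|\le c_E|\epsilon|$ and repeat verbatim the estimates leading to \eqref{eq:c2_bound_asym} and \eqref{eq:strict_V_decrease_asym}, now with one-step error dynamics $\mathbf e_{t+1}$ in place of two-step ones. This yields $V(\mathbf e_{t+1})-V(\mathbf e_t)\le-\frac{\lambda_{\min}(Q)}{4}\|\mathbf e_t\|^2$ for $|\epsilon|<\lambda_{\min}(Q)/(2c_2)$ and $\|\mathbf e_t\|<r$. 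Invoking Theorem 4.1 in \cite{b21} gives local asymptotic stability of $x_c\mathbf 1_N$, i.e.\ consensus in the sense of Definition~\ref{def:2.3}.

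\textbf{Step 4: reaching the basin.} The control is applied to a state near the asymmetric orbit $\mathcal O_\epsilon$ of Theorem~\ref{thm:4.1}, which is $\mathcal O(\epsilon)$-close to $\mathcal O_0$. The nominal transition argument therefore carries over by continuity of finite-horizon trajectories in $\epsilon$.
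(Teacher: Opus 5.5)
\textbf{Verdict: there is a genuine gap.} Your fallback for the general case cannot work, and the argument needs an explicit invariance hypothesis rather than an estimate.

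\textbf{What you got right.} Your Step~2 computation is correct, and it is more careful than the paper. With $m\ge 2$ uncontrolled agents, the linearization at $x_c\mathbf{1}_N$ has $m-1$ transverse multipliers equal to $1-\gamma_h\tfrac{N}{N-1}\in(-2,-1)$. So the controlled consensus attracts only within the synchronized subspace, and Theorems~\ref{thm:5.1}--\ref{thm:5.3} and this corollary are really statements about the scalar coordinate $d_t$. Your Step~1 observation that $x_c\mathbf{1}_N$ stays fixed for every $\epsilon$ is also correct.

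\textbf{Where it fails.} The problem is how you handle a perturbation that does not keep the uncontrolled agents synchronized. Projecting onto the synchronized direction and bounding the leakage into the complement by $\mathcal{O}(\epsilon)$ cannot work. By continuity of eigenvalues, the transverse multipliers stay outside the unit disk for small $|\epsilon|$, so $x_c\mathbf{1}_N$ remains a saddle of the perturbed closed loop. Any $\mathcal{O}(\epsilon)$ leakage is then amplified geometrically until the cubic term saturates it into an intra-group oscillation. That oscillation feeds back into the projected coordinate through $(\mathbf{C}_\epsilon\mathbf{v}_t)^{\circ 3}$, so no estimate rescues stability of the projected coordinate. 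Step~3 does not help: it needs a Schur matrix on an invariant subspace, which is exactly what is unresolved. Step~4 is also unsupported. The nominal theorems only prove local stability of $d=0$ inside the synchronized subspace, with $x_c$ arbitrary. Moreover, for Theorems~\ref{thm:5.2}--\ref{thm:5.3} the uncontrolled agents at activation come from both clusters, so the state is not in that subspace at all. Hence there is no nominal finite-time transition to continue in $\epsilon$.

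\textbf{What closes the argument.} The partition into the controlled set $\mathcal{K}$ and the uncontrolled set $\mathcal{U}$ must remain equitable for $\mathbf{W}_\epsilon$. That is, $\sigma\coloneqq\sum_{j\in\mathcal{K}}\Delta_{kj}$ must take the same value for every $k\in\mathcal{U}$. This does not follow from $\mathbf{\Delta}\mathbf{1}_N=\mathbf{0}$ alone; the same caveat applies to the two-cluster claim of Section~\ref{sec4} that you invoke for Theorem~\ref{thm:5.1}. It is, however, exactly what the paper asserts in Appendix~\ref{app4}. Under it, the synchronized subspace is exactly invariant and $(\mathbf{C}_\epsilon\mathbf{v}_t)_k=s_\epsilon d_t$ with $s_\epsilon=|\mathcal{K}|/(N-1)+\epsilon\sigma$. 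The reduced map is then $d_{t+1}=\lambda_\epsilon d_t+\gamma_h a s_\epsilon^3 d_t^3$ with $\lambda_\epsilon=1-\gamma_h s_\epsilon$.

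\textbf{How this matches the paper.} The paper's proof is precisely this route. Since $\lambda_\epsilon$ is continuous in $\epsilon$, $|\lambda_\epsilon|<1$ for $|\epsilon|<\epsilon^\star$. The remainder is cubic, and $V(d)=d^2$ decreases on a small interval. The paper claims nothing beyond local stability of this reduced map. Your ``first try'' for the general case is this argument, and together with your Step~1 observation it is all that is needed. State the equitable-partition hypothesis explicitly and drop the leakage bound and Step~4. The $P$--$Q$ machinery of Step~3 then collapses to the one-line scalar estimate.
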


\vspace{1mm}

\begin{proof}
Proof is provided in Appendix \ref{app4}.
\end{proof}


\section{Numerical Simulations}
To substantiate the theoretical results, we simulate the M-NOD~\eqref{eq:1} with $N=58$ agents over a 10 s time horizon. The nonlinear coefficient is set to $a=0.6$, and the initial opinions are sampled independently from the uniform distribution $[-4,4]$. The consensus threshold from Proposition~\ref{lem:3.3} is $\gamma_c=2(N-1)/N\approx1.966$. The stable period-2 range from Theorem~\ref{thm:3.2} is given by~\eqref{eq:polarization_gamma_range}, yielding $1.966<\gamma<2.948$. Accordingly, we choose $\gamma=1.4$ for the consensus-stable case and $\gamma_h=2.4$ for the dynamic-polarization and control cases. To avoid visual clutter, the discrete-time trajectories are rendered using univariate spline interpolation with minimal smoothing.

Fig. \ref{fig:1}(a) illustrates the network behavior within the consensus-stable reactivity regime ($\gamma=1.4$). The distinct initial opinion trajectories rapidly collapse to a single steady-state value, confirming convergence to consensus (Definition~\ref{def:2.3}). Conversely, Fig.~\ref{fig:1}(b) shows the network behavior for $\gamma_h=2.4$, which exceeds the bifurcation threshold $\gamma_c$ established in Theorem~\ref{thm:3.1}. Rather than converging to consensus, the trajectories split into two balanced clusters that persistently oscillate. This confirms the emergence of symmetric dynamic polarization and balanced bipartite clustering (Theorem~\ref{thm:3.3} and Lemma~\ref{lem:3.1}).

For asymmetric dynamic polarization, we use the perturbed M-NOD formulation~\eqref{eq:Feps_explicit_continuation}, representing a small symmetry-breaking perturbation of the undirected complete graph. The nominal interaction matrix is $\mathbf W_0$, and directed weight heterogeneity is introduced through $\mathbf W_\epsilon=\mathbf W_0+\epsilon\mathbf\Delta$, with $\epsilon=0.08$. The perturbed matrix is row-normalized after construction to preserve $\mathbf W_\epsilon\mathbf 1_N=\mathbf 1_N$. The agents are divided into $|\mathcal L|=15$ leaders and $|\mathcal F|=43$ followers. The perturbation matrix $\mathbf\Delta$ creates a directed graph with nonuniform influence weights by assigning different weights to different leader--follower interaction directions. Specifically, $\Delta_{ij}=10/|\mathcal L|$ for $i\in\mathcal F$ and $j\in\mathcal L$, $\Delta_{ij}=2/(|\mathcal L|-1)$ for $i\in\mathcal L$, $j\in\mathcal L$, and $j\neq i$, and $\Delta_{ij}=-1/|\mathcal F|$ for $i\in\mathcal L$ and $j\in\mathcal F$, with all other entries set to zero. As shown in Fig.~\ref{fig:1}(c), this structural perturbation successfully breaks the symmetry, yielding stable period-2 oscillations with unequal cluster amplitudes and a shifted mean.

\begin{figure}[t]
    \centering
    \includegraphics[width=\columnwidth]{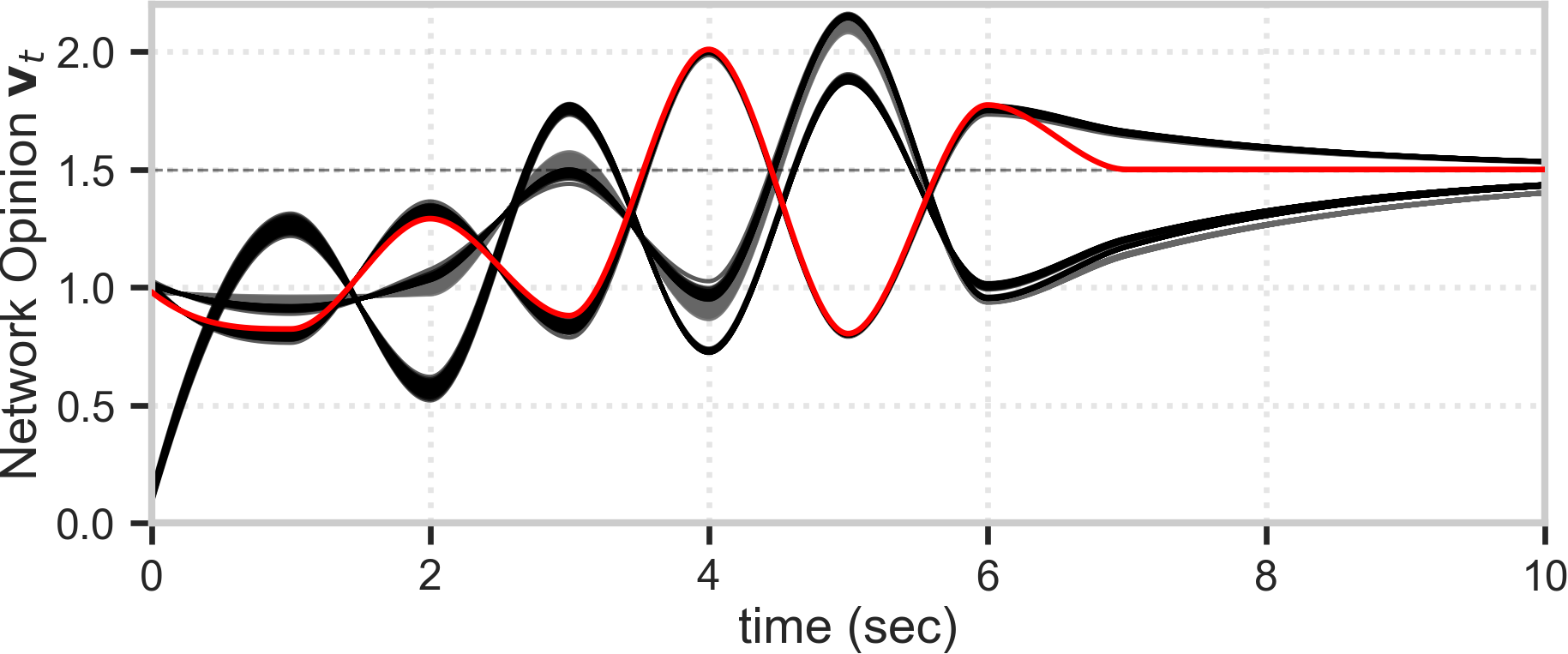}
    \vspace{-6mm}
    \caption{Simulation results validating single-agent control of asymmetric dynamic polarization under directed nonuniform weights with $\epsilon=0.08$ (Corollary~\ref{cor:5.4}, extending Theorem~\ref{thm:5.3}). The black dashed line denotes the reference opinion $x_c$, and control is activated at $t=5$ s. This confirms that anchoring the opinion of just one agent is sufficient to break asymmetric cyclic deadlock and restore network-wide consensus.}
    \label{fig:2}
    \vspace{-6mm}
\end{figure}

For the control simulations, the uncontrolled agents retain the high reactivity rate $\gamma_h=2.4$, and the target reference opinion $x_c$ is indicated by the black dashed line. Starting at $t=5$ s, after symmetric dynamic polarization is established, the respective controls are activated. Fig.~\ref{fig:1}(d) demonstrates single-cluster control (Theorem~\ref{thm:5.1}), Fig.~\ref{fig:1}(e) illustrates single-agent-per-cluster control (Theorem~\ref{thm:5.2}), and Fig.~\ref{fig:1}(f) demonstrates the single-agent control (Theorem~\ref{thm:5.3}). In all three scenarios, the interventions successfully break the cyclic deadlock, dampening the oscillations and driving the network toward the reference opinion $x_c$. Analogous results for the control of asymmetric dynamic polarization are shown in Fig. \ref{fig:1}(g), Fig. \ref{fig:1}(h), and Fig. \ref{fig:2}, confirming that all three localized control strategies remain effective under directed nonuniform interaction weights. In each case, the control input eliminates persistent oscillatory disagreement and restores asymptotically stable consensus.


\section{Conclusion}
This paper proposed the Minimally-Nonlinear Opinion Dynamics (M-NOD) framework to characterize dynamic polarization as an emergent outcome of local, agent-level opinion updates. We rigorously proved that, under a mean-field baseline, the disagreement dynamics admit an exact scalar reduction. Under a high reactivity rate, the consensus equilibrium loses stability, and a supercritical flip bifurcation gives rise to a unique locally asymptotically stable periodic orbit, thereby characterizing symmetric dynamic polarization through persistent, balanced bipartite opinion clusters. Furthermore, we established structural robustness by framing directed topologies and nonuniform influence weights as symmetry-breaking perturbations, proving the existence and stability of asymmetric dynamic polarization, where opposing clusters of unequal size remain locked in an oscillatory deadlock.

We developed localized agent-level consensus control laws that progressively reduce the required intervention from controlling one polarized cluster, to one agent from each cluster, and finally to a single controlled agent. We proved that each strategy guarantees the transition to asymptotically stable consensus, culminating in the result that anchoring the opinion of only a single agent is sufficient to eliminate persistent oscillatory disagreement and restore network-wide consensus. 
Collectively, these results establish the M-NOD framework as a principled foundation for understanding the emergence of dynamic polarization and for designing minimally invasive control strategies that restore consensus in nonlinear networked systems.

\section{APPENDIX}

\subsection{Single Agent to Multi-Agent M-NOD}\label{app1}
The nonlinear opinion update rule for single agent $i$ with prediction error 
$\delta_{i,t}\!\coloneqq\!z_{i,t}-v_{i,t}$ is:
\begin{equation}\label{eq:scalar_RW}
    v_{i,t+1}
    = v_{i,t}
      + \gamma\,\delta_{i,t}
      - \gamma a\,(\delta_{i,t})^{3}.
\end{equation}
For $N$ agents learning with opinion vector $\mathbf{v}_t\!\in\!\mathbb{R}^N$, the learning target of agent~$i$ is the weighted aggregate of its neighbors,
$z_{i,t}=\sum_{j=1}^N w_{ij}v_{j,t}$. Each agent adjusts its opinion to achieve local consensus with its neighbors' opinions. Collecting these terms into a vector yields the social target $\mathbf{z}_t\!=\!\mathbf{W}\mathbf{v}_t$, where $\mathbf{W}\!=\![w_{ij}]\!\in\!\mathbb{R}_{\ge 0}^{N\times N}$ is the weighted adjacency matrix. 

Recalling the interaction-induced deviation operator $\mathbf{C}\coloneqq\mathbf{I}-\mathbf{W}$, the error vector becomes $\boldsymbol{\delta}_t = \mathbf{W}\mathbf{v}_t-\mathbf{v}_t = -\mathbf{C}\mathbf{v}_t$. Substituting this into the linear component of~\eqref{eq:scalar_RW} yields $\mathbf{v}_t+\gamma(-\mathbf{C}\mathbf{v}_t)=(\mathbf{I}-\gamma\mathbf{C})\mathbf{v}_t$. Expanding $\mathbf{C}$ allows this to be rewritten as $((1-\gamma)\mathbf{I}+\gamma\mathbf{W})\mathbf{v}_t$. Finally, evaluating the nonlinear term using $\boldsymbol{\delta}_t=-\mathbf{C}\mathbf{v}_t$ and the odd symmetry of the cubic function yields $-\gamma a(-\mathbf{C}\mathbf{v}_t)^{\circ 3}=\gamma a(\mathbf{C}\mathbf{v}_t)^{\circ 3}$. Combining these components yields the M-NOD~\eqref{eq:1}.


\subsection{Proof of Theorem \ref{thm:5.1}}\label{app2}
For every controlled agent $i\in\mathcal G_1$, the choice $\gamma_i(t)=0$ halts the opinion update, so the controlled cluster opinion satisfies $x_t=x_c$ for all times after control activation, whereas the uncontrolled cluster $\mathcal G_2$ has opinion $z_t$. Therefore, consensus control reduces to analyzing the uncontrolled cluster $\mathcal G_2$ and proving stability of the inter-cluster disagreement dynamics.

Define the inter-cluster disagreement as $d_t\coloneqq z_t-x_c$. For any uncontrolled agent $k\in\mathcal G_2$, the deviation operator is obtained by summing over all neighboring agents $j\neq k$ as
\begin{equation}
\label{eq:Cv_complete_graph_single_cluster}
(C\mathbf v_t)_k
=
v_k(t)-\frac{1}{N-1}\sum_{j\neq k}v_j(t).
\end{equation}
Since agents in $\mathcal G_1$ have opinion $x_c$ and agents in $\mathcal G_2$ have opinion $z_t$, the neighbor sum for any $k\in\mathcal G_2$ is $\sum_{j\neq k}v_j(t) = n_1x_c+(n_2-1)z_t$. Substituting this into~\eqref{eq:Cv_complete_graph_single_cluster} yields
\begin{align}
(C\mathbf v_t)_k
&=
z_t
-
\frac{1}{N-1}
[
n_1x_c+(n_2-1)z_t
] \nonumber\\
&=
z_t\left(1-\frac{n_2-1}{N-1}\right)
-
\frac{n_1}{N-1}x_c.
\label{eq:Cv_before_cluster_simplification}
\end{align}
Using $n_1+n_2=N$, we have $1-\frac{n_2-1}{N-1}=\frac{n_1}{N-1}$. Therefore,
\begin{equation}
\label{eq:Cv_alpha_d_single_cluster}
(C\mathbf v_t)_k
=
\frac{n_1}{N-1}(z_t-x_c)
=
\alpha d_t,
\;
\alpha\coloneqq\frac{n_1}{N-1}.
\end{equation}

For the uncontrolled cluster, $\gamma_k(t)=\gamma_h$ for all $k\in\mathcal G_2$. Hence, using~\eqref{eq:Cv_alpha_d_single_cluster} in the M-NOD~\eqref{eq:1} update gives
\begin{align}
z_{t+1}
&=
z_t-\gamma_h(C\mathbf v_t)_k
+
\gamma_h a(C\mathbf v_t)_k^3 \nonumber\\
&=
z_t-\gamma_h\alpha d_t
+
\gamma_h a\alpha^3 d_t^3 .
\label{eq:z_update_single_cluster}
\end{align}
Since $d_{t+1}=z_{t+1}-x_c$, subtracting $x_c$ from~\eqref{eq:z_update_single_cluster} yields the disagreement dynamics:
\begin{equation}
\label{eq:d_reduced_map_single_cluster}
d_{t+1}
=
\left(1-\gamma_h\alpha \right)d_t
+
\gamma_h a\alpha^3 d_t^3 .
\end{equation}

Under symmetric dynamic polarization, the balanced clusters satisfy $n_1\!  =\!  n_2\!  =\!  N/2$ (Lemma~\ref{lem:3.1}, Definition~\ref{def:2.1}). Substituting this into $\alpha=\frac{n_1}{N-1}$ from \eqref{eq:Cv_alpha_d_single_cluster} yields $\alpha=\frac{N}{2(N-1)}$. Using the dynamic-polarization range~\eqref{eq:polarization_gamma_range} from Theorem~\ref{thm:3.3}, the largest admissible value of $\gamma_h$ is bounded by $\bar{\gamma}_h\coloneqq\frac{3(N-1)}{N}$. Therefore, $0<\gamma_h\alpha
<\frac{3(N-1)}{N}\frac{N}{2(N-1)}=\frac{3}{2}<2$. Define $\lambda\coloneqq1-\gamma_h\alpha$ and
$\beta\coloneqq\gamma_h a\alpha^3$. It follows that
$|\lambda|<1$. Choose $\delta>0$ such that
$|\lambda|+|\beta|\delta^2<1$. Then, using
\eqref{eq:d_reduced_map_single_cluster}, for every
$0<|d_t|<\delta$,
$
|d_{t+1}|
=
|\lambda d_t+\beta d_t^3|
\le
\left(|\lambda|+|\beta|d_t^2\right)|d_t|
<
|d_t|$. Define the Lyapunov function $V_d(d_t)\coloneqq d_t^2$. Then
$V_d(d_{t+1})<V_d(d_t)$ for all $0<|d_t|<\delta$.
Therefore, $d_t=0$ is locally asymptotically stable for the
inter-cluster disagreement dynamics. Since $d_t=z_t-x_c$, stability of $d_t=0$ implies $z_t\to x_c$. Because the controlled cluster is fixed at $x_t=x_c$, both clusters converge to $x_c$, and under the two-cluster reduction, the network locally converges to the consensus state $\mathbf v_t=x_c\mathbf 1_N$ (Definition~\ref{def:2.3}). \hfill$\blacksquare$


\subsection{Proof of Theorem \ref{thm:5.2}}\label{app3}
The controlled agents $i_1\in\mathcal G_1$ and $i_2\in\mathcal G_2$ maintain the common reference opinion $x_c$, i.e., $v_{i_1}(t+1)=v_{i_1}(t)=x_c$ and $v_{i_2}(t+1)=v_{i_2}(t)=x_c$. Let $z_t$ denote the common opinion of all uncontrolled agents $k\notin\{i_1,i_2\}$, and define the disagreement $d_t\coloneqq z_t-x_c$. Consequently, consensus control reduces to analyzing the uncontrolled-agent dynamics and proving the stability of the disagreement dynamics. 

For any uncontrolled agent $k$, the deviation operator is given by $ (C\mathbf v_t)_k = v_k(t)-\frac{1}{N-1}\sum_{j\neq k}v_j(t) $. Since the complete graph gives agent $k$ exactly $N-1$ neighbors,
excluding agent $k$ itself leaves the two controlled agents and
$N-3$ other uncontrolled agents. Hence,
$\sum_{j\neq k}v_j(t)=2x_c+(N-3)z_t$. Substituting this expression into the deviation operator gives $(C\mathbf v_t)_k=z_t-\frac{1}{N-1}(2x_c+(N-3)z_t)$. Rearranging terms yields $(C\mathbf v_t)_k=z_t\!\left(1-\frac{N-3}{N-1}\right)-\frac{2}{N-1}x_c$. Using $1-\frac{N-3}{N-1}=\frac{2}{N-1}$, we obtain $(C\mathbf v_t)_k=\frac{2}{N-1}(z_t-x_c)$. Defining $\alpha_p\coloneqq\frac{2}{N-1}$ and recalling that $d_t\coloneqq z_t-x_c$, it follows that $(C\mathbf v_t)_k=\alpha_p d_t$. For every uncontrolled agent, $\gamma_k(t)=\gamma_h$. Hence, substituting $(C\mathbf v_t)_k=\alpha_p d_t$ into the M-NOD update~\eqref{eq:1} gives
\begin{equation}
\label{eq:z_update_per_cluster}
z_{t+1}
=
z_t-\gamma_h\alpha_p d_t
+
\gamma_h a\alpha_p^3 d_t^3.
\end{equation}
Since $d_{t+1}=z_{t+1}-x_c$, subtracting $x_c$ from~\eqref{eq:z_update_per_cluster} yields
\begin{equation}
\label{eq:d_map_per_cluster}
d_{t+1}
=
(1-\gamma_h\alpha_p)d_t
+
\gamma_h a\alpha_p^3 d_t^3.
\end{equation}

Using the dynamic-polarization range~\eqref{eq:polarization_gamma_range}, the largest admissible value of $\gamma_h$ is $\bar{\gamma}_h\coloneqq\frac{3(N-1)}{N}$. Substituting $\alpha_p=\frac{2}{N-1}$ gives $\gamma_h\alpha_p<\bar{\gamma}_h\alpha_p=\frac{3(N-1)}{N}\cdot\frac{2}{N-1}=\frac{6}{N}<2$ for $N>3$. Since $\gamma_h>0$ and $\alpha_p>0$, we obtain $0<\gamma_h\alpha_p<2$.

Define $\lambda_p\coloneqq1-\gamma_h\alpha_p$ and $\beta_p\coloneqq\gamma_h a\alpha_p^3$. Then $|\lambda_p|<1$, and~\eqref{eq:d_map_per_cluster} becomes
$
d_{t+1}
=
\lambda_p d_t+\beta_p d_t^3.
$
Choose $\delta>0$ such that $|\lambda_p|+|\beta_p|\delta^2<1$. Then, for any $0<|d_t|<\delta$, taking absolute values in the above recursion gives
$|d_{t+1}|\!=\!|\lambda_p d_t+\beta_p d_t^3|$. Using the triangle inequality yields $|d_{t+1}|\le (|\lambda_p|+|\beta_p|d_t^2)|d_t|$. Since $|d_t|<\delta$, the multiplier satisfies $|\lambda_p|+|\beta_p|d_t^2<|\lambda_p|+|\beta_p|\delta^2<1$, and therefore $|d_{t+1}|<|d_t|$.

Squaring both sides gives $d_{t+1}^2<d_t^2$. Thus, with $V_d(d_t)\coloneqq d_t^2$, we have $V_d(d_{t+1})<V_d(d_t)$ for all $0<|d_t|<\delta$. Therefore, the equilibrium $d_t=0$ is locally asymptotically stable (Theorem 4.1 in \cite{b21}) for the reduced disagreement dynamics. Since $d_t=z_t-x_c$, stability of $d_t=0$ implies $z_t\to x_c$. As the controlled agents maintain the reference opinion $x_c$, all agents converge to $x_c$, and the network locally converges to the consensus state $\mathbf v_t=x_c\mathbf 1_N$ (Definition~\ref{def:2.3}). \hfill$\blacksquare$


\subsection{Proof of Corollary \ref{cor:5.4}}\label{app4}
Consider the perturbed interaction matrix $\mathbf W_\epsilon=\mathbf W_0+\epsilon\mathbf\Delta$ established in Section~\ref{sec4}. The perturbation is row-normalized so that $\mathbf W_\epsilon\mathbf 1_N=\mathbf 1_N$, ensuring that the consensus manifold remains invariant. The perturbation also preserves the equitable partition; hence, the uncontrolled agents remain synchronized, and the reduced scalar disagreement coordinate $d_t$ remains valid. For each control strategy evaluated in Theorems~\ref{thm:5.1}--\ref{thm:5.3}, the nominal reduced disagreement dynamics under the complete graph $\mathbf W_0$ have the local form $d_{t+1}=\lambda_0 d_t+\mathcal O(d_t^3)$ with $|\lambda_0|<1$. Since $\mathbf W_\epsilon$ depends smoothly on $\epsilon$, the perturbed reduced dynamics can be written as $d_{t+1}=\lambda_\epsilon d_t+R_\epsilon(d_t)$, where $\lambda_\epsilon$ depends continuously on $\epsilon$ and $R_\epsilon(d_t)=\mathcal O(d_t^3)$.

By continuity, there exists $\epsilon^\star>0$ such that $|\lambda_\epsilon|<1$ for all $|\epsilon|<\epsilon^\star$. Moreover, for sufficiently small $d_t$, there exists $c_\epsilon>0$ such that $|R_\epsilon(d_t)|\le c_\epsilon |d_t|^3$. Hence, $|d_{t+1}|\!\le\! (|\lambda_\epsilon|\!+\!c_\epsilon d_t^2)|d_t|$. Choosing $\delta\!>\!0$ such that $|\lambda_\epsilon|\!+\!c_\epsilon\delta^2\!<\!1$ gives $|d_{t+1}|\!<\!|d_t|$ for all $0\!<\!|d_t|\!<\!\delta$. 
With the Lyapunov function $V(d_t)\coloneqq d_t^2$, we have $V(d_{t+1})\!<\!V(d_t)$. Hence, the consensus equilibrium is locally asymptotically stable (Theorem 4.1 in \cite{b21}). \hfill$\blacksquare$


\end{document}